\newtheorem{definition}{Definition}
\newtheorem{theorem}{Theorem}
\newtheorem{lemma}{Lemma}
\newtheorem{corollary}{Corollary}
\newtheorem{proposition}{Proposition}
\newtheorem{example}{Example}
\begin{document}

\title{OMG: How Much Should I Pay Bob in Truthful Online Mobile Crowdsourced Sensing?}
\author{\IEEEauthorblockN{Dong Zhao$^{\dag, \ddag}$, Xiang-Yang Li$^\ddag$, and Huadong Ma$^\dag$}
\IEEEauthorblockA{$\dag$ Beijing Key Laboratory of Intelligent Telecommunications Software and Multimedia,\\Beijing University of Posts and Telecommunications, China\\}
\IEEEauthorblockA{$\ddag$ Department of Computer Science, Illinois Institute of Technology, Chicago, IL, USA\\
{Email: dzhao@bupt.edu.cn; xli@cs.iit.edu; mhd@bupt.edu.cn}}
}
\maketitle

\begin{abstract}
Mobile crowdsourced sensing (MCS) is a new paradigm which takes advantage of the pervasive smartphones to efficiently collect data, enabling numerous novel applications.
To achieve good service quality for a MCS application, incentive mechanisms are necessary to attract more user participation.
Most of existing mechanisms apply only for the \emph{offline} scenario where all users' information are known a priori.
On the contrary, we focus on a more real scenario where users arrive one by one \emph{online} in a random order.
We model the problem as an \emph{online auction} in which the users submit their private types to the crowdsourcer over time, and the crowdsourcer aims to select a subset of users before a specified deadline for maximizing the total value of the services provided by selected users under a budget constraint.
We design two \emph{online mechanisms}, \emph{OMZ} and \emph{OMG}, satisfying the \emph{computational efficiency}, \emph{individual rationality}, \emph{budget feasibility}, \emph{truthfulness}, \emph{consumer sovereignty} and \emph{constant competitiveness} under the zero arrival-departure interval case and a more general case, respectively.
Through extensive simulations, we evaluate the performance and validate the theoretical properties of our online mechanisms.
\end{abstract}

\section{Introduction}
\label{sec:introduction}
Crowdsourcing is a distributed problem-solving model in which a crowd of undefined size is engaged to solve a complex problem through an open call \cite{chatzimilioudis2012crowdsourcing}.
Nowadays, the proliferation of smartphones provides a new opportunity for extending existing web-based crowdsourcing applications to a larger contributing crowd, making contribution easier and omnipresent.
Furthermore, today's smartphones are programmable and come with a rich set of cheap powerful embedded sensors, such as GPS, WiFi/3G/4G interfaces, accelerometer, digital compass, gyroscope, microphone, and camera.
The great potential of the mobile phone sensing offers a variety of novel, efficient ways to opportunistically collect data, enabling numerous \emph{mobile crowdsourced sensing} (MCS) applications, such as Sensorly \cite{website:Sensorly} for constructing cellular/WiFi network coverage maps, SignalGuru \cite{koukoumidis2011signalguru}, Nericell \cite{mohan2008nericell} and VTrack \cite{thiagarajan2009vtrack} for providing traffic information, Ear-Phone \cite{rana2010ear} and NoiseTube \cite{stevens2010crowdsourcing} for making noise maps. For more details on MCS applications, we refer interested readers to several survey papers \cite{lane2010survey,ganti2011mobile,chatzimilioudis2012crowdsourcing}.

Adequate user participation is one of the most critical factors determining whether a MCS application can achieve good service quality.
Most of the current MCS applications \cite{website:Sensorly,koukoumidis2011signalguru,mohan2008nericell,thiagarajan2009vtrack,rana2010ear,stevens2010crowdsourcing} are based on voluntary participation.
While participating in a MCS campaign, smartphone users consume their own resources such as battery and computing power, and expose their locations with potential privacy threats.
Thus, incentive mechanisms are necessary to provide participants with enough rewards for their participation costs.
At present, only a handful of work \cite{danezis2005much,lee2010sell,duan2012incentive,yang2012crowdsourcing,jaimes2012location} focuses on incentive mechanism design for MCS applications.
All of these work applies only for the \emph{offline} scenario in which all of participating users report their types, including the tasks they can complete and the bids, to the crowdsourcer (campaign organizer) in advance, and then the crowdsourcer selects a subset of users after collecting the information of all users to maximize its utility (e.g., the total value of all tasks that can be completed by selected users).

In practice, however, users always arrive one by one \emph{online} in a random order and user availability changes over time.
Therefore, an \emph{online incentive mechanism} is necessary to make irrevocable decisions on whether to accept a user's task and bid, based solely on the information of users arriving before the present moment, without knowing future information.

In this paper we consider a general problem: the crowdsourcer aims to select a subset of users before a specified deadline, so that the total value of the services provided by selected users is maximized under the condition that the total payment to these users does not exceed a budget constraint.
Specially, we investigate the case where the value function of selected users is monotone submodular.
This case can be applied in many real scenarios.
For example, many MCS applications \cite{website:Sensorly,koukoumidis2011signalguru,mohan2008nericell,thiagarajan2009vtrack,rana2010ear,stevens2010crowdsourcing} aim to select users to collect sensing data so that the roads in a given region can be covered before a specified deadline, where the coverage function is typically monotone submodular.
In addition, the cost and arrival/departure time of each user are private and only known to itself.
We consider users who are game-theoretic and seek to make strategy (possibly report an untruthful cost or arrival/departure time) to maximize their individual utility in equilibrium.
Therefore, the problem can be modeled as an \emph{online auction}, for which we can design the online mechanism based on the theoretical foundations of mechanism design and online algorithms.

Our objective is to design online mechanisms satisfying six desirable properties: \emph{computational efficiency}, \emph{individual rationality}, \emph{budget feasibility}, \emph{truthfulness}, \emph{consumer sovereignty} and \emph{constant competitiveness}.
Informally, \emph{computational efficiency} ensures the mechanism can run in real time, \emph{individual rationality} ensures each participating user has a non-negative utility, \emph{budget feasibility} ensures the crowdsourcer's budget constraint is not violated, \emph{truthfulness} ensures the participating users report their true costs (\emph{cost-truthfulness}) and arrival/departure times (\emph{time-truthfulness}), \emph{consumer sovereignty} ensures each participating user has a chance to win the auction, and \emph{constant competitiveness} guarantees that the mechanism performs close to the optimal solution in the offline scenario where all the information of all users are known to the crowdsourcer in advance.

The main idea behind our online mechanism is to adopt a multiple-stage sampling-accepting process.
At every stage the mechanism allocates tasks to a smartphone user only if its marginal density is not less than a certain density threshold that has been computed using previous users' information, and the budget allocated for the current stage has not been exhausted.
Meanwhile, the user obtains a payment equaling to the ratio of its marginal value to the density threshold.
The density threshold is computed in a manner that guarantees desirable performance properties of the mechanism.
We firstly consider the \emph{zero arrival-departure interval} case where the arrival time of each user equals to its departure time (Section \ref{sec:special case}).
In this case, achieving time-truthfulness is trivial.
We present an online mechanism \emph{OMZ} satisfying all desirable properties under this special case without considering the time-truthfulness.
Then we revise the \emph{OMZ} mechanism, and present another online mechanism \emph{OMG} satisfying all desirable properties under the \emph{general} case (Section \ref{sec:general case}).

The remainder of this paper is organized as follows.
In Section \ref{sec:problem formulation} we describe the MCS system model, and formulate the problem as an online auction.
We then present two online mechanisms, \emph{OMZ} and \emph{OMG}, satisfying all desirable properties under the \emph{zero arrival-departure interval} case and the \emph{general} case in Section \ref{sec:special case} and \ref{sec:general case}, respectively.
Performance evaluations are presented in Section \ref{sec:performance evaluation}.
We review the related work in Section \ref{sec:related work}, and conclude this paper in Section \ref{sec:conclusion}.

\section{System Model and Problem Formulation}
\label{sec:problem formulation}
We use Fig. \ref{fig-systemModel} to illustrate a MCS system.
The system consists of a \emph{crowdsourcer}, which resides in the cloud and consists of multiple sensing servers, and many smartphone \emph{users}, which are connected to the cloud by cellular networks (e.g., GSM/3G/4G) or Wi-Fi connections.
The crowdsourcer first publicizes a MCS campaign in a \emph{Region of Interest (RoI)}, aiming to finding some users to complete a set of tasks $\Gamma=\{\tau_1,\tau_2,\ldots,\tau_m\}$ in the RoI before a specified deadline $T$.
Assume that a crowd of smartphone users $\mathcal{U}=\{1,2,\ldots,n\}$ interested in participating in the crowdsourcing campaign arrive online in a random order, where $n$ is unknown.
Each user $i$ has an arrival time $a_i\in \{1, \ldots, T\}$, a departure time $d_i\in \{1, \ldots, T\}$, $d_i\geq a_i$, and a subset of tasks $\Gamma_i\subseteq \Gamma$ it can complete within this time interval according to its willness and ability.
Meanwhile, user $i$ also has an associated cost $c_i \in \mathbb{R}_+$ for performing sensing tasks according to its current state such as the residual battery energy of the smartphone and its willness.
All information constitutes the \emph{type} of user $i$, $\theta_i=(a_i,d_i,\Gamma_i,c_i)$.
In this paper we consider two models with respect to the distribution of users:
\begin{itemize*}
\item \textbf{The i.i.d. model:} The costs and values of users are i.i.d. sampled from some \emph{unknown} distributions.
\item \textbf{The secretary model:} An adversary gets to decide on the costs and values of users, but not on the \emph{order} in which they are presented to the crowdsourcer.
\end{itemize*}
In fact, the i.i.d. model is a special case of the secretary model, since the sequence can be determined by first picking a multi-set of costs or values from the (unknown) distribution, and then permuting them randomly.
Note that these two models are different from the \emph{oblivious adversarial model}, where an adversary chooses a \emph{worst-case} input stream including the users' costs, values and their arrival orders.

We model the interactive process between the crowdsourcer and users as an \emph{online auction}.
Each user expects a \emph{payment} in return for its service.
Therefore, it makes a \emph{reserve price}, called \emph{bid}, for selling its sensing data.
When a user arrives, the crowdsourcer must decide whether to buy the service of this user, and if so, at what price, before it departs.
Assume that the crowdsourcer has a budget constraint $B$ indicating the maximum value that it is willing to pay.
Therefore, the crowdsourcer always expects to obtain the maximum value from the selected users' services under the budget constraint.
\begin{figure}[!t]
\centering{
\includegraphics[width=3.5in]{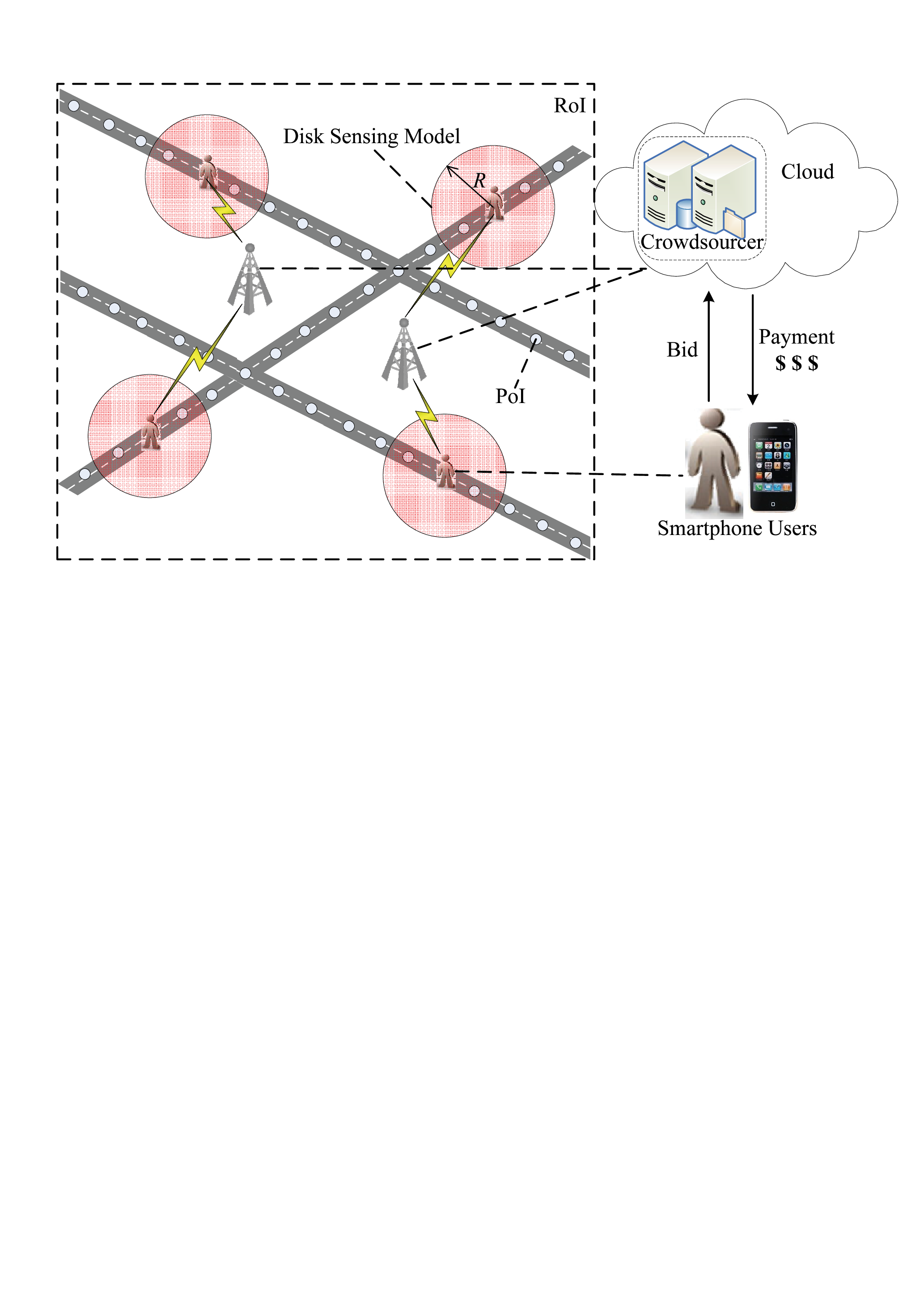}}
\caption{Illustration of a mobile crowdsourced sensing system.}
\label{fig-systemModel}
\vspace{-10pt}
\end{figure}

In the online auction we consider users that are game-theoretic and seek to make \emph{strategy} to maximize their individual utility
in equilibrium.
Note that the arrival time, departure time, and cost of user $i$ are private and only known to itself.
Only the task set $\Gamma_i$ must be true since the crowdsourcer can identify whether the announced tasks are performed.
In other words, user $i$ may misreport all information about its type except for $\Gamma_i$.
The budget and value function of the crowdsourcer are common knowledge.
Although our auctions do not require a user to declare its departure time until the moment of its departure, we find
it convenient to analyze our auctions as direct-revelation mechanisms (DRMs).
The strategyspace in an online DRM allows a user to declare some possibly untruthful type $\hat{\theta_i}=(\hat{a_i},\hat{d_i},\Gamma_i,b_i)$, subject to $a_i \leq \hat{a_i} \leq \hat{d_i} \leq d_i$.
Note that we assume that a user cannot announce an earlier arrival time or a later departure time than its true arrival/departure time.
In order to obtain the required service, the crowdsourcer needs to design an \emph{online mechanism} $\mathcal{M}=(f,p)$ consisting of an \emph{allocation} function $f$ and a \emph{payment} function $p$.
For any \emph{strategy sequence} $\hat{\theta}=(\hat{\theta_1},\ldots,\hat{\theta_n})$, the allocation function $f(\hat{\theta})$ computes an allocation of tasks for a selected subset of users $\mathcal{S}\in \mathcal{U}$, and the payment function $p(\hat{\theta})$ returns a vector $(p_1(\hat{\theta}),\ldots,p_n(\hat{\theta}))$ of payments to the users.
Note that, the crowdsourcer, when presented with the strategy $\hat{\theta_i}$ of user $i$, must decide whether to buy the service of user $i$, and if so, at what price before it departs.

The \emph{utility} of user $i$ is
\[
u_i=
\begin{cases}p_i-c_i,\quad \ \ & \mbox{if}\quad i\in \mathcal{S} ;\\
  0, \quad \ \ & \mbox{otherwise}.
\end{cases}
\]
Let $V(\mathcal{S})$ denote the \emph{value} function of the crowdsourcer over the selected subset of users $\mathcal{S}$.
The crowdsourcer expects to obtain the maximum value from the selected users' services under the budget constraint, i.e.,
    \begin{equation}\textbf{Maximize } V(\mathcal{S}) \textbf{ subject to } \sum_{i\in \mathcal{S}}p_i \leq B.\nonumber\end{equation}
In this paper, we focus on the case where $V(\mathcal{S})$ is monotone submodular. This case can be applied in many real scenarios.
\begin{definition}[Monotone Submodular Function]
\label{def:submodular}
Let $\Omega$ be a finite set. For any $X\subseteq Y \subseteq \Omega$ and $x\in \Omega\backslash Y$, a function $f: 2^\Omega \mapsto \mathbb{R}$ is called submodular if and only if
\[f(X\cup\{x\})-f(X)\geq f(Y\cup\{x\})-f(Y),\]
and it is monotone (increasing) if and only if $f(X)\leq f(Y)$, where $2^\Omega$ denotes the power set of $\Omega$, and $\mathbb{R}$ denotes the set of reals.
\end{definition}

\underline{An Application Example:} As illustrated in Fig. \ref{fig-systemModel}, we consider the scenario where the crowdsourcer expects to obtain the sensing data covering all roads in a RoI.
For convenience of calculations, we divide each road in the RoI into multiple discrete \emph{Points of Interest (PoIs)}, and the objective of the crowdsourcer is equivalent to obtaining the sensing data covering all PoIs before $T$.
The set of PoIs is denoted by $\Gamma=\{\tau_1,\tau_2,\ldots,\tau_m\}$.
Assume that each sensor follows a geometric disk sensing model with sensing range $R$, which means if user $i$ senses at a location $L_i$ and obtain a reading, then any PoI within the disk with the origin at $L_i$ and a radius of $R$ has been covered once.
The set of PoIs covered by user $i$ is denoted by $\Gamma_i\subseteq \Gamma$, which means the sensing tasks that user $i$ can complete.
Without loss of generality, assume that each PoI $\tau_j$ has a coverage requirement $r_j \in \mathbb{Z}_+$ indicating how many times it requires to be sensed at most.
The \emph{value} of the selected users to the crowdsourcer is:
\[
V(\mathcal{S})=\sum_{j=1}^m{\min\{r_j,\sum_{i\in \mathcal{S}}v_{i,j}\}},
\]
where $v_{i,j}$ equals to 1 if $\tau_j\in \Gamma_i$, and 0 otherwise.
\begin{lemma}
\label{lemma_valueFunction}
The value function $V(\mathcal{S})$ is monotone submodular.
\end{lemma}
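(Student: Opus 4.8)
The plan is to reduce the claim to a term-by-term statement about the summands indexed by the PoIs. For a subset $\mathcal{S}\subseteq\mathcal{U}$ and a PoI $\tau_j$, write $x_j(\mathcal{S})=\sum_{i\in\mathcal{S}}v_{i,j}$ for the number of selected users covering $\tau_j$, so that $V(\mathcal{S})=\sum_{j=1}^m g_j(x_j(\mathcal{S}))$ with $g_j(t)=\min\{r_j,t\}$. The key observations are that each $x_j(\cdot)$ is a modular (additive) set function and each $g_j$ is non-decreasing and concave on $\mathbb{Z}_{\ge 0}$. Monotonicity of $V$ is then immediate: if $X\subseteq Y$ then $x_j(X)\le x_j(Y)$ for every $j$, and since each $g_j$ is non-decreasing, $V(X)=\sum_j g_j(x_j(X))\le\sum_j g_j(x_j(Y))=V(Y)$.

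For submodularity, I would fix $X\subseteq Y\subseteq\mathcal{U}$ and $x\in\mathcal{U}\setminus Y$ and prove, for each $j$, the per-PoI inequality
\[
g_j\big(x_j(X)+v_{x,j}\big)-g_j\big(x_j(X)\big)\;\ge\;g_j\big(x_j(Y)+v_{x,j}\big)-g_j\big(x_j(Y)\big),
\]
then sum over $j$. Here I would split on the value of $v_{x,j}\in\{0,1\}$, which is the only place the structure of the application enters. If $v_{x,j}=0$, both sides are $0$. If $v_{x,j}=1$, the left side equals $1$ when $x_j(X)<r_j$ and $0$ otherwise, and likewise the right side equals $1$ when $x_j(Y)<r_j$ and $0$ otherwise; since $x_j(X)\le x_j(Y)$, whenever the right side is $1$ the left side is also $1$, so the inequality holds in every case.

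Summing the per-PoI inequalities over $j=1,\dots,m$ yields $V(X\cup\{x\})-V(X)\ge V(Y\cup\{x\})-V(Y)$, which is precisely the submodularity condition of Definition~\ref{def:submodular}; together with the monotonicity established above, this proves the lemma. I do not anticipate a real obstacle here: the argument is just the standard fact that a finite sum of concave non-decreasing functions composed with modular set functions is monotone submodular, and the only subtlety worth spelling out is that the binariness of $v_{x,j}$ combined with the truncation at $r_j$ is exactly what produces the diminishing-returns inequality in the $v_{x,j}=1$ case.
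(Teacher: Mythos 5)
Your proof is correct and follows essentially the same route as the paper: both reduce the claim to a per-PoI inequality for the truncated counting function $\min\{r_j,\cdot\}$ and sum over $j$; the paper writes the marginal gain explicitly as $\min\{\max\{0,r_j-\sum_{i\in X}v_{i,j}\},v_{x,j}\}$ and uses its monotone dependence on the count, while you reach the same inequality by a case split on $v_{x,j}\in\{0,1\}$. The two arguments are interchangeable.
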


The proof of Lemma \ref{lemma_valueFunction} is given in Appendix A.

Our objective is to design an online mechanism satisfying the following six desirable properties:
\begin{itemize*}
\item \textbf{Computational Efficiency:} A mechanism is \emph{computationally efficient} if both the allocation and payment can be computed in polynomial time as each user arrives.
\item \textbf{Individual Rationality:} Each participating user will have a non-negative utility: $u_i\geq 0$.
\item \textbf{Budget Feasibility:} We require the mechanism to be \emph{budget feasible}: $\sum_{i\in \mathcal{S}}p_i \leq B$.
\item \textbf{Truthfulness:} A mechanism is \emph{cost-} and \emph{time-truthful} (or simply called \emph{truthful}, or \emph{incentive compatible} or \emph{strategyproof}) if reporting the true cost and arrival/departure time is a \emph{dominant strategy} for all users. In other words, no user can improve its utility by submitting a false cost, or arrival/departure time, no matter what others submit.
\item \textbf{Consumer Sovereignty:} The mechanism cannot arbitrarily exclude a user; the user will be selected by the crowdsourcer and obtain a payment if only its bid is sufficiently low while others are fixed.
\item \textbf{Competitiveness:} The goal of the mechanism is to maximize the value of the crowdsourcer. To quantify the performance of the mechanism we compare its solution with the \emph{optimal solution}: the solution obtainable in the offline scenario where the crowdsourcer has full knowledge about users' types. A mechanism is $O(g(n))$-\emph{competitive} if the ratio between the online solution and the optimal solution is $O(g(n))$. Ideally, we would like our mechanism to be $O$(1)-\emph{competitive}.
\end{itemize*}

The importance of the first three properties is obvious, because they together guarantee that the mechanism can be implemented in real time and satisfy the basic requirements of both the crowdsourcer and users.
In addition, the last three properties are indispensable for guaranteeing that the mechanism has high performance and robustness.
The truthfulness aims to eliminate the fear of market manipulation and the overhead of strategizing over others for the participating users.
The consumer sovereignty aims to guarantee that each participating user has a chance to win the auction and obtain a payment, otherwise it will hinder the users' completion or even result in task starvation.
Besides, if some users are guaranteed not to win the auction, then being truthful or not will have the same outcome.
For this reason, the property satisfying both the consumer sovereignty and the truthfulness is also called \emph{strong truthfulness} by Hajiaghayi et al. \cite{hajiaghayi2004adaptive}.
Later we will show that satisfying consumer sovereignty is not trivial in the online scenario, which is in contrast to the offline scenario.
Finally, we expect that our mechanism has a constant competitiveness under both the \emph{i.i.d.} model and the \emph{secretary} model.
Note that no constant-competitive auction is possible under the \emph{oblivious adversarial} model \cite{bar2002incentive}.

Table \ref{table_notations} lists frequently used notations.
\begin{table}[t]
\begin{center}
\caption{Frequently used notations.}
\label{table_notations}
\begin{tabularx}{0.48\textwidth}{c|XcX}
  \hline
  Notation & Description\\
  \hline
  $\mathcal{U},n,i$ & set of users, number of users, and one user\\
  $\Gamma, m, \tau_j$ & set of tasks, number of tasks, and one task\\
  $B,B'$ & budget constraint and stage-budget\\
  $T,T',t$ & deadline, end time step of each stage, and each time step\\
  $a_i,\hat{a_i}$ & true arrival time and strategic arrival time of user $i$\\
  $d_i,\hat{d_i}$ & true departure time and strategic departure time of user $i$\\
  $\Gamma_i$ & set of user $i$'s tasks\\
  $c_i,b_i$ & true cost and bid of user $i$\\
  $\theta_i,\hat{\theta_i}$ & true type and strategy of user $i$\\
  $\mathcal{S},\mathcal{S}'$ & set of selected users and sample set\\
  $p_i,u_i$ & payment and utility of user $i$\\
  $V(\mathcal{S})$ & value function of the crowdsourcer over $\mathcal{S}$\\
  $V_i(\mathcal{S})$ & marginal value of user $i$ over $\mathcal{S}$\\
  $\rho^*$ & density threshold\\
  $\delta$ & parameter used for computing the density threshold\\
  $\omega$ & parameter assumed on users' value\\
  \hline
\end{tabularx}
\end{center}
\end{table}

\section{Online Mechanism under Zero Arrival-departure Interval Case}
\label{sec:special case}
Firstly, we relax one assumption of the problem: considering a special case where the arrival time of each user equals to its departure time.
In this case, each user is impatient since the decision must be made immediately once it arrives.
Note that achieving time-truthfulness is trivial in this case.
It is because that any user has no incentive to report a later arrival time or an earlier departure time than its true arrival/departure time, since the user cannot perform any sensing task or obtain a payment after it departs.
In this section, we present an online mechanism satisfying all desirable properties under this special case (called \emph{zero arrival-departure interval} case later), without considering the time-truthfulness.
Then, in Section \ref{sec:general case} we revise this mechanism and prove the revised one satisfies all desirable properties including the time-truthfulness under the general case without \emph{zero arrival-departure interval} assumption.
To facilitate understanding, in this section it is also assumed that no two users have the same arrival time.
Note that this assumption can also be easily removed according to the revised mechanism in Section \ref{sec:general case}.

\subsection{Mechanism Design}
An online mechanism needs to overcome several nontrivial challenges: firstly, the users' costs are unknown and need to be reported in a truthful manner; secondly, the total payment cannot exceed the crowdsourcer's budget; finally, and most important, the mechanism needs to cope with the online arrival of the users.
Previous solutions of online auctions and generalized secretary problems \cite{hajiaghayi2004adaptive,babaioff2008online,kleinberg2005multiple,babaioff2007knapsack,bateni2010submodular} always achieve desirable outcomes in online settings via a two-stage sampling-accepting process or containing such process: the first batch of applicants is rejected and used as the sample which enables making an informed decision on whether accepting the rest of applicants.
However, these solutions cannot guarantee the consumer sovereignty, since the first batch of applicants has no chance to win the auction no matter how low its cost is.
It can lead to undesirable effects in our problem: automatically rejecting the first batch of users encourages users to arrive late; in other words, those users arriving early have no incentive to report their bids to the crowdsourcer, which may delay the users' completion or even result in task starvation.

To address the above challenges, we design our online mechanism, \emph{OMZ}, based on a \emph{multiple-stage sampling-accepting} process.
The mechanism dynamically increases the sample size and learns a \emph{density threshold} used for future decision, while increasing the \emph{stage-budget} it uses for allocation at various stages.
The whole process is illustrated in Algorithm \ref{alg:OMZ}.
Firstly, we divide all of $T$ time steps into $(\lfloor \log_2 T \rfloor+1)$ stages: $\{1, 2, \ldots, \lfloor \log_2 T \rfloor, \lfloor \log_2 T \rfloor+1\}$.
The stage $i$ ends at time step $T'=\lfloor 2^{i-1}T / 2^{\lfloor \log_2 T \rfloor} \rfloor$.
Correspondingly, the stage-budget for the $i$-th stage is allocated as $B'=2^{i-1}B / 2^{\lfloor \log_2 T \rfloor}$.
Fig. \ref{fig-stages} is an illustration when $T=8$.
When a stage is over, we add all users who have arrived into the sample set $\mathcal{S}'$, and compute a density threshold $\rho^*$ according to the information of samples and the allocated stage-budget $B'$.
This density threshold is computed by calling the \textbf{GetDensityThreshold} algorithm (to be elaborated later), and used for making decision at the next stage.
Specially, when the last stage $i = \lfloor \log_2 T \rfloor+1$ comes, the density threshold has been computed according to the information of all users arriving before time step $\lfloor T/2 \rfloor$, and the allocated stage-budget $B/2$.
\begin{figure}[!t]
\centering{
\includegraphics[width=3.5in]{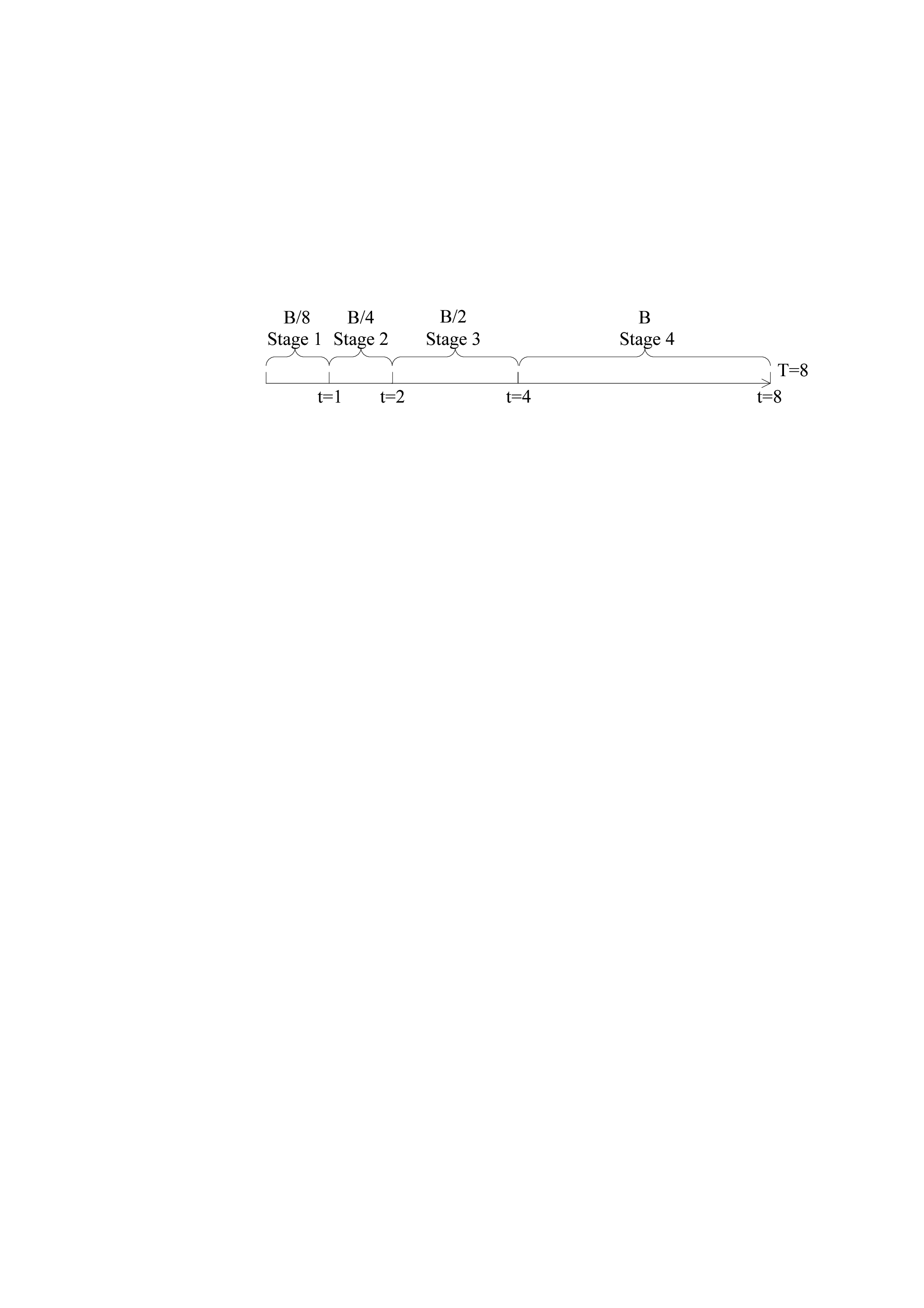}}
\caption{Illustration of a multiple-stage sampling-accepting process when $T=8$.}
\label{fig-stages}
\vspace{-10pt}
\end{figure}
\begin{algorithm}
\caption{Online Mechanism under Zero Arrival-departure Interval Case (\emph{OMZ})}
\label{alg:OMZ}
\KwIn{Budget constraint $B$, deadline $T$}
$(t,T',B',\mathcal{S}',\rho^*,\mathcal{S}) \leftarrow ( 1, \frac{T}{2^{\lfloor \log_2 T \rfloor}}, \frac{B}{2^{\lfloor \log_2 T \rfloor}}, \emptyset, \epsilon, \emptyset)$\;
\While{$t\leq T$}
{
    \If{there is a user $i$ arriving at time step $t$}
    {
        \uIf{$b_i\leq V_i(\mathcal{S})/\rho^* \leq B'-\sum_{j \in \mathcal{S}} p_j$}
        {
            $p_i\leftarrow V_i(\mathcal{S})/\rho^*$; $\mathcal{S}\leftarrow \mathcal{S} \cup \{i\}$\;
        }
        \lElse
        {
            $p_i\leftarrow 0$\;
        }
        $\mathcal{S}'\leftarrow \mathcal{S}' \cup \{i\}$\;
    }
    \If{$t=\lfloor T' \rfloor$}
    {
        $\rho^*\leftarrow \textbf{GetDensityThreshold}(B',\mathcal{S}')$\;
        $T'\leftarrow 2T'$; $B'\leftarrow 2B'$\;
    }
    $t\leftarrow t+1$\;
}
\end{algorithm}

Given a set of selected users $\mathcal{S}$, the \emph{marginal value} of user $i \notin \mathcal{S}$ is $V_i(\mathcal{S})=V(\mathcal{S} \cup \{i\})-V(\mathcal{S})$, and its \emph{marginal density} is $V_i(\mathcal{S})/b_i$.
When a new user $i$ arrives, the mechanism allocates tasks to it as long as its marginal density is not less than the current threshold density $\rho^*$, and the allocated stage-budget $B'$ has not been exhausted.
Meanwhile, we give user $i$ a payment
\[p_i=V_i(\mathcal{S})/\rho^*,\]
and add this user to the set of selected users $\mathcal{S}$.
To start the mechanism, we initially set a small density threshold $\epsilon$, which is used for making decision at the first stage.

Since each stage maintains a common density threshold, it is natural to adopt a \emph{proportional share} allocation rule to compute the density threshold from the sample set $\mathcal{S}'$ and the allocated stage-budget $B'$.
As illustrated in Algorithm \ref{alg:get threshold}, the computation process adopts a greedy strategy.
Users are sorted according to their increasing marginal densities.
In this sorting the $(i + 1)$-th user is the user $j$ such that $V_j(\mathcal{S}_i)/b_j$ is maximized over $\mathcal{S}'\backslash \mathcal{S}_i$, where $\mathcal{S}_i=\{1,2,\ldots,i\}$ and $\mathcal{S}_0=\emptyset$.
Considering the submodularity of $V$, this sorting implies that:
\[\frac{V_1(\mathcal{S}_0)}{b_1} \geq \frac{V_2(\mathcal{S}_1)}{b_2} \geq \dots \geq \frac{V_{|\mathcal{S}'|}(\mathcal{S}_{|\mathcal{S}'|-1})}{b_{|\mathcal{S}'|}}.\]
Find the largest $k$ such that $b_k \leq \frac{V_k(\mathcal{S}_{k-1}) B}{V(\mathcal{S}_k)}$.
The set of selected users is $\mathcal{S}_k=\{1,2,\ldots,k\}$.
Finally, we set the density threshold to be $\frac{V(\mathcal{S}_k)}{\delta B'}$.
Here we set $\delta>1$ to obtain a slight underestimate of the density threshold for guaranteeing enough users selected and avoiding the waste of budget.
Later we will fix the value of $\delta$ elaborately to enable the mechanism achieving a constant competitive ratio.
\begin{algorithm}
\caption{GetDensityThreshold}
\label{alg:get threshold}
\KwIn{Stage-budget $B'$, sample set $\mathcal{S}'$}
$\mathcal{J}\leftarrow \emptyset$; $i \leftarrow \arg\max_{j\in \mathcal{\mathcal{S}'}}(V_j(\mathcal{J})/b_j)$\;
\While{$b_i\leq \frac{V_i(\mathcal{J}) B'}{V(\mathcal{J}\cup \{i\})}$}
{
    $\mathcal{J}\leftarrow \mathcal{J}\cup \{i\}$\;
    $i\leftarrow \arg\max_{j\in \mathcal{\mathcal{S}'}\backslash \mathcal{J}}(V_j(\mathcal{J})/b_j)$\;
}
$\rho\leftarrow V(\mathcal{J})/B'$\;
\Return $\rho/\delta$;
\end{algorithm}

In the following, we use an example to illustrate how the \emph{OMZ} mechanism works.
\begin{example}
\label{exmaple1}
Consider a crowdsourcer with the budget constraint $B=16$ and the deadline $T=8$.
There are five users arriving online before the deadline with types $\theta_i=(a_i,d_i,\Gamma_i,c_i)$, where $a_i=d_i$, and $\Gamma_i$ can be omitted by assuming that each user has the same marginal value 1.
Here the types $(a_i,d_i,c_i)$ of the five users are: $\theta_1=(1,1,2)$, $\theta_2=(2,2,4)$, $\theta_3=(4,4,5)$, $\theta_4=(6,6,1)$, and $\theta_5=(7,7,3)$.
\end{example}

We set $\epsilon=1/2$ and $\delta=1$. Then the \emph{OMZ} mechanism works as follows.
\begin{itemize*}
    \item[$\diamond$] $t=1$: $(T',B',\mathcal{S}',\rho^*,\mathcal{S})=(1,2,\emptyset,1/2,\emptyset)$, $V_1(\mathcal{S})/b_1=1/2$, thus $p_1=2$, $\mathcal{S}=\{1\}$, $\mathcal{S}'=\{1\}$. Update the density threshold: $\rho^*=1/2$.
    \item [$\diamond$] $t=2$: $(T',B',\mathcal{S}',\rho^*,\mathcal{S})=(2,4,\{1\},1/2,\{1\})$, $V_2(\mathcal{S})/b_2=1/4$, thus $p_2=0$, $\mathcal{S}'=\{1,2\}$. Update the threshold density: $\rho^*=1/4$.
    \item [$\diamond$] $t=4$: $(T',B',\mathcal{S}',\rho^*,\mathcal{S})=(4,8,\{1,2\},1/4,\{1\})$, $V_3(\mathcal{S})/b_3=1/5$, thus $p_3=0$, $\mathcal{S}'=\{1,2,3\}$. Update the density threshold: $\rho^*=1/4$.
    \item [$\diamond$] $t=6$: $(T',B',\mathcal{S}',\rho^*,\mathcal{S})=(8,16,\{1,2,3\},1/4,\{1\})$, $V_4(\mathcal{S})/b_4=1$, thus $p_4=4$, $\mathcal{S}=\{1,4\}$, $\mathcal{S}'=\{1,2,3,4\}$.
    \item [$\diamond$] $t\!=\!7\!$: $\!(T',B',\mathcal{S}',\rho^*,\mathcal{S})\!=\!(8,16,\{1,2,3,4\},1/4,\{1,4\})$, $V_5(\mathcal{S})/b_5=1/3$, thus $p_5=4$. Finally, the set of selected users is $\mathcal{S}=\{1,4,5\}$, and the payments of these selected 3 users are 2, 4, 4 respectively.
\end{itemize*}

\subsection{Mechanism Analysis}
\label{subsec:mechanism anlysis}
In the following, we will firstly prove that the \emph{OMZ} mechanism satisfies the computational efficiency (Lemma \ref{lemma:computational efficiency}), individual rationality (Lemma \ref{lemma:individual rationality}), budget feasibility (Lemma \ref{lemma:budget feasibility}), cost-truthfulness (Lemma \ref{lemma:cost-truthfulness}), and the consumer sovereignty (Lemma \ref{lemma:consumer sovereignty}).
Then, we will prove that the \emph{OMZ} mechanism can achieve a constant competitive ratio under both the i.i.d. model (Lemma \ref{lemma:average case}) and the secretary model (Lemma \ref{lemma:worst case}) by elaborately fixing different values of $\delta$.

\begin{lemma}
\label{lemma:computational efficiency}
The OMZ mechanism is computationally efficient.
\end{lemma}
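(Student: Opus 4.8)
The plan is to walk through Algorithm \ref{alg:OMZ} and Algorithm \ref{alg:get threshold} line by line, bounding the work done per time step and per stage in terms of $n$, $m$, and $T$, under the (mild, and implicit throughout the paper) assumption that a single evaluation of the value function $V$ on any subset takes time polynomial in the input size. All marginal values $V_i(\mathcal{S}) = V(\mathcal{S}\cup\{i\}) - V(\mathcal{S})$ are then obtained with two such evaluations.

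First I would handle the per-arrival work in Algorithm \ref{alg:OMZ}. When a user $i$ arrives, the mechanism computes one marginal value $V_i(\mathcal{S})$, performs a constant number of arithmetic comparisons to test $b_i \le V_i(\mathcal{S})/\rho^* \le B' - \sum_{j\in\mathcal{S}} p_j$ (the running sum $\sum_{j\in\mathcal{S}} p_j$ is maintained incrementally), and possibly sets $p_i$ and inserts $i$ into $\mathcal{S}$ and $\mathcal{S}'$; updating $T'\leftarrow 2T'$, $B'\leftarrow 2B'$ at a stage boundary is $O(1)$. So the online cost at each step is $O(1)$ value-function evaluations plus $O(1)$ extra operations, hence polynomial.

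Then I would bound the cost of \textbf{GetDensityThreshold}. Its while-loop adds one user to $\mathcal{J}$ per iteration, so it runs at most $|\mathcal{S}'| \le n$ times; each iteration recomputes $\arg\max_{j\in\mathcal{S}'\setminus\mathcal{J}}(V_j(\mathcal{J})/b_j)$, scanning the at most $n$ remaining samples and evaluating one marginal value for each. Thus a single call uses $O(n^2)$ value-function evaluations and $O(n^2)$ comparisons. Since the stages number $\lfloor\log_2 T\rfloor+1$, the subroutine is invoked $O(\log T)$ times in total, giving $O(n^2\log T)$ evaluations overall; amortized over the at most $n$ arrivals, each arrival triggers at most one such call, so the per-arrival cost remains polynomial, as the definition of computational efficiency demands.

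Finally I would close with two honesty checks. The number of users $n$ is unknown a priori, but this is irrelevant: all bounds above are stated per time step / per stage and never presuppose $n$. And "polynomial time" is really meant relative to a value oracle; for the concrete MCS instances of interest — e.g. the coverage value $V(\mathcal{S}) = \sum_{j=1}^m \min\{r_j,\sum_{i\in\mathcal{S}} v_{i,j}\}$ of Lemma \ref{lemma_valueFunction}, which costs $O(nm)$ to evaluate — every bound above becomes an explicit polynomial in $n$, $m$, and $\log T$. I expect the only point requiring care (rather than a genuine obstacle) to be stating this value-oracle assumption explicitly; the counting itself is routine.
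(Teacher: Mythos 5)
Your proof is correct and follows essentially the same route as the paper's: bound the per-arrival work in Algorithm \ref{alg:OMZ}, then bound each call to \textbf{GetDensityThreshold} as (number of greedy iterations) times (cost of one argmax scan). The only difference is that the paper exploits the coverage structure to cap the greedy iterations at $\min\{m,|\mathcal{S}'|\}$ (each winner must contribute a new task) and to compute a marginal value in $O(|\Gamma_i|)$ time, arriving at $O(mn\min\{m,n\})$ per step, whereas you use the cruder generic bounds of $n$ iterations under a value-oracle assumption; both yield the claimed polynomial bound.
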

\begin{proof}
Since the mechanism runs online, we only need to focus on the computation complexity at each time step $t \in \{1, \ldots, T\}$.
Computing the marginal value of user $i$ takes $O(|\Gamma_i|)$ time, which is at most $O(m)$.
Thus, the running time of computing the allocation and payment of user $i$ (lines 3-8) is bounded by $O(m)$.
Next, we analyze the complexity of computing the density threshold, namely Algorithm \ref{alg:get threshold}.
Finding the user with maximum marginal density takes $O(m|\mathcal{S}'|)$ time.
Since there are $m$ tasks and each selected user should contribute at least one new task, the number of winners is at most $\min \{m,|\mathcal{S}'|\}$.
Thus, the running time of Algorithm \ref{alg:get threshold} is bounded by $O(m|\mathcal{S}'|\min \{m,|\mathcal{S}'|\})$.
Therefore, the computation complexity at each time step (lines 3-13) is bounded by $O(m|\mathcal{S}'| \min \{m,|\mathcal{S}'|\})$.
At the last stage, the sample set $\mathcal{S}'$ has the maximum number of samples, being $n/2$ with high probability.
Thus, the computation complexity at each time step is bounded by $O(mn \min \{m,n\})$.
\end{proof}

Note that the above analysis of running time is very conservative.
In practice, the running time of computing the marginal value, $O(|\Gamma_i|)$, is much less than $O(m)$.
In addition, the running time of the \emph{OMZ} mechanism will increase linearly with $n$ especially when $n$ is large.

\begin{lemma}
\label{lemma:individual rationality}
The OMZ mechanism is individually rational.
\end{lemma}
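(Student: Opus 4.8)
The plan is to check the two cases in the definition of $u_i$ separately. For a user $i\notin\mathcal{S}$, the definition gives $u_i=0$ directly, so there is nothing to prove there. The entire content of the lemma therefore concerns a selected user $i\in\mathcal{S}$, and the proof amounts to observing that the acceptance test in Algorithm \ref{alg:OMZ} is precisely the condition ``bid no larger than the offered payment.''

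Concretely, I would trace line 4 of Algorithm \ref{alg:OMZ}: user $i$ is inserted into $\mathcal{S}$ only when $b_i\le V_i(\mathcal{S})/\rho^{*}$, where $\mathcal{S}$ and $\rho^{*}$ are the values held at the moment $i$ arrives, and in that case line 5 sets $p_i=V_i(\mathcal{S})/\rho^{*}$. Hence $p_i\ge b_i$ for every winner. Since a truthful user declares $b_i=c_i$, this yields $u_i=p_i-c_i=V_i(\mathcal{S})/\rho^{*}-c_i\ge b_i-c_i=0$. Invoking Lemma \ref{lemma:cost-truthfulness} (proved later, and whose argument does not rely on individual rationality) makes truthful reporting the relevant case, so individual rationality holds for the winners as well, completing both cases.

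The only point that requires a word of care — and it is bookkeeping rather than a genuine obstacle — is that the quantity $\rho^{*}$ appearing in lines 4 and 5 must be strictly positive, so that $V_i(\mathcal{S})/\rho^{*}$ is finite and the inequality $p_i\ge b_i$ is meaningful. This follows because $\rho^{*}$ is initialized to $\epsilon>0$ and every update replaces it by $V(\mathcal{J})/(\delta B')$ with $\delta>1$; I would note that whenever \textbf{GetDensityThreshold} returns a set $\mathcal{J}$ with $V(\mathcal{J})>0$ the new threshold is positive, and that the proportional-share loop guarantees $\mathcal{J}\neq\emptyset$ as long as some sampled user contributes positive marginal value, while in the degenerate remaining cases no acceptance occurs. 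With that remark in place the lemma is essentially immediate from the structure of the acceptance rule, so I expect no substantive difficulty here.
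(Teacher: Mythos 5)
Your proof is correct and follows essentially the same route as the paper's: the paper simply observes from lines 4--6 of Algorithm \ref{alg:OMZ} that $p_i\geq b_i$ for winners and $p_i=0$ otherwise, hence $u_i\geq 0$. Your additional remarks on the positivity of $\rho^{*}$ and the appeal to cost-truthfulness to set $b_i=c_i$ are harmless elaborations of the same argument.
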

\begin{proof}
From the lines 4-6 of Algorithm \ref{alg:OMZ}, we can see that $p_i\geq b_i$ if $i\in \mathcal{S}$, otherwise $p_i=0$. Therefore, we have $u_i\geq 0$.
\end{proof}

\begin{lemma}
\label{lemma:budget feasibility}
The OMZ mechanism is budget feasible.
\end{lemma}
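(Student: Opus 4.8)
The plan is to show that throughout the execution of Algorithm \ref{alg:OMZ} the running total of the payments already committed never exceeds the stage-budget $B'$ currently in force, and then to note that this stage-budget is always at most $B$. Concretely, I would establish the invariant
\[
\sum_{j\in\mathcal{S}} p_j \;\le\; B'
\]
at the end of every iteration of the while-loop, where $\mathcal{S}$ and $B'$ denote their current values. Since $\mathcal{S}$ is only ever enlarged (never reset) and the value of $B'$ during the last stage $i=\lfloor\log_2 T\rfloor+1$ equals $2^{\lfloor\log_2 T\rfloor}B/2^{\lfloor\log_2 T\rfloor}=B$, applying the invariant at the final iteration gives exactly $\sum_{i\in\mathcal{S}}p_i\le B$, which is budget feasibility.

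The induction is short. Initially $\mathcal{S}=\emptyset$, so the left-hand side is $0$ and the invariant holds. For the inductive step I would examine the three ways an iteration can change the relevant quantities. First, when a user $i$ is accepted (so that $i$ is added to $\mathcal{S}$), the guard of the corresponding conditional is exactly $V_i(\mathcal{S})/\rho^*\le B'-\sum_{j\in\mathcal{S}}p_j$ and the payment assigned is $p_i=V_i(\mathcal{S})/\rho^*$; hence after inserting $i$ into $\mathcal{S}$ one has $\sum_{j\in\mathcal{S}}p_j+p_i\le B'$, so the invariant is restored. Second, if the arriving user is rejected (or no user arrives), neither $\mathcal{S}$ nor any $p_j$ changes, so the invariant is trivially preserved. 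Third, at a stage boundary (the block guarded by $t=\lfloor T'\rfloor$) the set $\mathcal{S}$ and all committed payments are untouched while $B'$ is doubled, so the left-hand side is unchanged and the right-hand side only grows; the invariant persists. Since in Algorithm \ref{alg:OMZ} the user-arrival block precedes the stage-update block, at a boundary time step the arriving user is still processed against the old (smaller) $B'$ --- the worst case --- which is already covered by the acceptance case above.

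The only delicate point, and what I would flag as the main (if minor) obstacle, is the interaction at a boundary time step combined with the discontinuous jump of $B'$ between stages: one must verify that the invariant is genuinely carried \emph{across} the doubling step rather than being re-established from scratch, and this is legitimate precisely because $\mathcal{S}$ is cumulative and payments are never revoked. Everything else is routine bookkeeping; combining the invariant with the elementary identity $B'=2^{(\lfloor\log_2 T\rfloor+1)-1}B/2^{\lfloor\log_2 T\rfloor}=B$ for the final stage closes the argument.
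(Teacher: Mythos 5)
Your proof is correct and follows essentially the same route as the paper's: the paper likewise observes that the acceptance guard in lines 4--5 of Algorithm \ref{alg:OMZ} keeps the running total of payments within the current stage-budget $B'$, and that the last stage's budget equals $B$. You have merely made the loop invariant and its preservation across stage boundaries explicit, which is a faithful elaboration rather than a different argument.
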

\begin{proof}
At each stage $i\in \{1, 2, \ldots, \lfloor \log_2 T \rfloor, \lfloor \log_2 T \rfloor+1\}$, the mechanism uses a stage-budget of $B'=\frac{2^{i-1}B}{2^{\lfloor \log_2 T \rfloor}}$.
From the lines 4-5 of Algorithm \ref{alg:OMZ}, we can see that it is guaranteed that the current total payment does not exceed the stage-budget $B'$.
Specially, the budget constraint of the last stage is $B$.
Therefore, every stage is budget feasible, and when the deadline $T$ arrives, the total payment does not exceed $B$.
\end{proof}

Designing a cost-truthful mechanism relies on the rationale of \emph{bid-independence}.
Let $b_{-i}$ denote the sequence of bids arriving before the $i$-th bid $b_i$, i.e., $b_{-i}=(b_1,\ldots,b_{i-1})$.
We call such a sequence \emph{prefixal}.
Let $p'$ be a function from prefixal sequences to prices (non-negative real numbers).
We extend the definition of bid-independence \cite{goldberg2006competitive} to the online scenario as follows.

\begin{definition}[Bid-independent Online Auction]
An online auction is called bid-independent if the allocation and payment rules for each player $i$ satisfy:
\begin{enumerate}[a)]\setlength{\itemindent}{1.1em}
    \item The auction constructs a price schedule $p'(b_{-i})$;
    \item If $p'(b_{-i})\geq b_i$, player $i$ wins at price $p_i=p'(b_{-i})$;
    \item Otherwise, player $i$ is rejected, and $p_i=0$.
\end{enumerate}
\end{definition}
\begin{proposition}
(\cite{bar2002incentive}, Proposition 2.1) An online auction is cost-truthful if and only if it is bid-independent.
\end{proposition}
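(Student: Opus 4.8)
The plan is to prove each direction by freezing, for a fixed player $i$, every report except $b_i$ and treating the auction as a single‑parameter mechanism‑design problem in which $b_i$ is the only variable. This reduction is legitimate in the online setting for two reasons: player $i$'s utility $u_i$ depends only on whether $i$ wins and on $p_i$; and, once the arrival/departure reports are held fixed (cost‑truthfulness varies only the bid), the prefixal sequence $b_{-i}=(b_1,\dots,b_{i-1})$ has already been committed and is outside $i$'s control, while bids submitted after $i$ affect later players but never $i$'s own allocation or payment. So, holding $b_{-i}$ fixed, the win indicator $w(b_i)$ and the payment $p(b_i)$ are functions of $b_i$ alone, and cost‑truthfulness for $i$ says exactly that $b_i=c_i$ maximizes $w(b_i)\bigl(p(b_i)-c_i\bigr)$ over all reports.

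For the ``if'' direction I would argue directly. Suppose the auction is bid‑independent and write $p'=p'(b_{-i})$. If $c_i\le p'$, the truthful report $b_i=c_i$ wins at price $p'$ with utility $p'-c_i\ge 0$, while any other report either still wins at the same price $p'$ or loses with utility $0\le p'-c_i$. If $c_i>p'$, the truthful report loses with utility $0$, whereas any winning report is paid $p'<c_i$ (a loss) and any other losing report yields $0$. Hence truthful bidding is a dominant strategy; this argument never refers to players after $i$, so it is valid online.

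For the ``only if'' direction, assume cost‑truthfulness, fix $i$ and $b_{-i}$, and build the price schedule. Three facts follow from short exchange arguments, each pitting two candidate true costs against the same pair of reports. (1) \emph{The winning set $W=\{b_i: w(b_i)=1\}$ is downward closed}: if $b^{\mathrm{lo}}<b^{\mathrm{hi}}$ with $b^{\mathrm{hi}}\in W$ but $b^{\mathrm{lo}}\notin W$, truthfulness for cost $b^{\mathrm{lo}}$ forces $p(b^{\mathrm{hi}})\le b^{\mathrm{lo}}$, and then a player with cost $b^{\mathrm{hi}}$ is trapped into winning at a loss when truthful but gets $0$ by reporting $b^{\mathrm{lo}}$ — a contradiction. (2) \emph{The payment is constant on $W$}: comparing two winning reports $b^{(1)}<b^{(2)}$ from the viewpoint of costs $b^{(1)}$ and $b^{(2)}$ yields $p(b^{(1)})=p(b^{(2)})=:p'$. (3) \emph{$p'$ equals the threshold} $\theta:=\sup W$ (and $W\supseteq[0,\theta)$ by (1)): if $p'<\theta$, a player with cost in $(p',\theta)\subseteq W$ is forced to win at a loss when truthful; if $p'>\theta$, a player with cost in $(\theta,p')$ profitably deviates from its truthful (losing) report to any report in $[0,\theta)$. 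Setting $p'(b_{-i}):=\theta$ (and $p'(b_{-i}):=\bot$, i.e.\ always reject, when $W=\emptyset$) then exhibits the auction as bid‑independent, since $i$ wins precisely when $b_i\le p'(b_{-i})$ and is paid $p'(b_{-i})$.

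The main obstacle is not any single inequality but keeping the boundary conventions consistent throughout: whether a report exactly at the threshold wins, and how to treat a player that never wins for any bid so that the ``$p'(b_{-i})\ge b_i$'' clause still reads correctly. These points are cosmetic — the tie set has measure zero and does not affect incentives — but the write‑up must pin them down once and use them uniformly; I would normalize to $W=[0,p'(b_{-i})]$ and remark that any other tie‑breaking yields an incentive‑equivalent auction.
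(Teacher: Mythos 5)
The paper does not actually prove this proposition; it imports it verbatim from the cited reference, and the only argument the paper itself supplies is the ``if'' direction, inlined in the proof of Lemma \ref{lemma:cost-truthfulness}. Your ``if'' direction is that same case analysis ($c_i\le p'$ versus $c_i>p'$) and is correct, including the observation that in the online setting player $i$'s outcome cannot depend on bids arriving after it. Your ``only if'' direction goes beyond anything in the paper: it is the standard single-parameter characterization (downward-closed winning set, constant payment on the winning set, payment equal to the threshold), and the exchange arguments (1)--(3) are each sound as stated. So the proposal is a legitimate, essentially self-contained proof of the full equivalence rather than a restatement of how the paper uses it.

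One genuine boundary case in the converse is not covered, and it is not the tie-breaking issue you flag. Your contradiction in step (3) for the subcase $p'<\sup W$ needs the existence of at least one \emph{losing} report to serve as the profitable escape, and step (1) is vacuous when every bid wins. Consider the auction that, for some $b_{-i}$, accepts player $i$ at a fixed finite price $p'$ \emph{regardless} of $b_i$, i.e., $W=\mathbb{R}_+$. Every report then yields the identical utility $p'-c_i$, so truthful reporting is (weakly) dominant and the auction is cost-truthful under the paper's definition, yet it violates clause (c) of bid-independence for any finite price schedule. Hence the literal ``only if'' fails without an additional hypothesis. The fix is cheap: assume individual rationality, or equivalently that a sufficiently high bid is always rejected --- a condition the cited reference builds into its auction class and that every mechanism in this paper satisfies. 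Under that assumption $W$ is a proper subset of $\mathbb{R}_+$ whenever it is nonempty, and your steps (1)--(3) close the argument.
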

\begin{lemma}
\label{lemma:cost-truthfulness}
The OMZ mechanism is cost-truthful.
\end{lemma}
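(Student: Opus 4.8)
The plan is to apply the characterization that an online auction is cost-truthful if and only if it is bid-independent (\cite{bar2002incentive}, Proposition 2.1): it suffices to exhibit, for every user $i$, a non-negative price schedule $p'$ depending only on the prefixal sequence $b_{-i}=(b_1,\dots,b_{i-1})$ of bids arriving before $i$, such that $i$ wins at price $p'(b_{-i})$ precisely when $p'(b_{-i})\ge b_i$ and is rejected with $p_i=0$ otherwise.

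First I would establish that the entire internal state of Algorithm \ref{alg:OMZ} at the instant user $i$ is processed by lines 3--8 --- the current stage (hence $T'$ and $B'$), the density threshold $\rho^*$, the set $\mathcal{S}$ of already-selected users, and the spent budget $\sum_{j\in\mathcal{S}}p_j$ --- is a deterministic function of $b_{-i}$ alone, with the public data and the (truthful) task sets $\Gamma_j$ and arrival times held fixed. The points are: $B'$ and $T'$ are fixed functions of $B,T$ and the stage, which is determined by time, not by bids; lines 3--8 act on each arriving user using only the state accumulated from strictly earlier arrivals, so an induction on arrival order shows $\mathcal{S}$ and $\sum_{j\in\mathcal{S}}p_j$ never involve $b_i$; and the update in lines 9--11 runs at a stage boundary and feeds \textbf{GetDensityThreshold} only the sample set $\mathcal{S}'$ of users that have already arrived. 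Since no two users in this section share an arrival time, every such sample arrived before $i$, so the $\rho^*$ in force when $i$ is processed was computed without reference to $b_i$. Consequently $V_i(\mathcal{S})$, which depends only on $\mathcal{S}$ and the truthful $\Gamma_i$, the ratio $V_i(\mathcal{S})/\rho^*$, and the residual budget $B'-\sum_{j\in\mathcal{S}}p_j$ are all functions of $b_{-i}$.

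Next I would set
\[
p'(b_{-i})=
\begin{cases}
V_i(\mathcal{S})/\rho^*, & \text{if } V_i(\mathcal{S})/\rho^*\le B'-\sum_{j\in\mathcal{S}}p_j,\\
0, & \text{otherwise,}
\end{cases}
\]
and verify the three clauses of bid-independence against line 4. When $V_i(\mathcal{S})/\rho^*\le B'-\sum_{j\in\mathcal{S}}p_j$, the test $b_i\le V_i(\mathcal{S})/\rho^*\le B'-\sum_{j\in\mathcal{S}}p_j$ of line 4 reduces to $b_i\le p'(b_{-i})$; if it holds, line 5 pays exactly $p_i=V_i(\mathcal{S})/\rho^*=p'(b_{-i})$, and if $b_i>p'(b_{-i})$ then line 7 rejects $i$ with $p_i=0$. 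When $V_i(\mathcal{S})/\rho^*>B'-\sum_{j\in\mathcal{S}}p_j$, line 4 fails for every positive $b_i$, so $i$ is rejected with $p_i=0$, consistent with $p'(b_{-i})=0<b_i$. Hence OMZ is bid-independent, and by the cited characterization it is cost-truthful.

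The only genuinely delicate step is the bid-independence of the state seen by user $i$, i.e.\ the claim that neither $\rho^*$, nor $\mathcal{S}$, nor the residual budget can depend on $b_i$; I would spell this out from the stage/sample bookkeeping --- a user is never among the samples used to compute the threshold that is subsequently applied to its own bid, because thresholds are recomputed only at stage boundaries from users that have already arrived --- together with the arrival-order induction for $\mathcal{S}$ and the accumulated payments. Everything else is a direct comparison of the pseudocode in Algorithm \ref{alg:OMZ} with the definition of a bid-independent online auction.
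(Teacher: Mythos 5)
Your proof is correct and follows essentially the same route as the paper: the paper likewise grounds the lemma in the bid-independence characterization (Proposition 2.1 of \cite{bar2002incentive}), then argues via the posted price $V_i(\mathcal{S})/\rho^*$ with a case split on whether budget remains and on $c_i$ versus that price. The only difference is one of emphasis --- the paper re-derives the utility comparison directly and takes for granted that the offered price and residual budget are independent of $b_i$, whereas you invoke the proposition as a black box and instead verify that independence explicitly, which is a worthwhile point to spell out but not a different argument.
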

\begin{proof}
To see that bid-independent auctions are cost-truthful, here we consider a user $i$ that arrives at some stage for which the density threshold was set to $\rho^*$.
If by the time the user arrives there are no remaining budget, then the user's cost declaration will not affect the allocation of the mechanism and thus cannot improve its utility by submitting a false cost.
Otherwise, assume there are remaining budget by the time the user arrives.
In case $c_i \leq V_i(\mathcal{S})/\rho^*$, reporting any cost below $V_i(\mathcal{S})/\rho^*$ wouldn't make a difference in the user's allocation and payment and its utility would be $V_i(\mathcal{S})/\rho^*-c_i\geq 0$.
Declaring a cost above $V_i(\mathcal{S})/\rho^*$ would make the worker lose the auction, and its utility would be 0.
In case $c_i > V_i(\mathcal{S})/\rho^*$, declaring any cost above $V_i(\mathcal{S})/\rho^*$ would leave the user unallocated with utility 0.
If the user declares a cost lower than $V_i(\mathcal{S})/\rho^*$ it will be allocated.
In such a case, however, its utility will be negative.
Hence the user's utility is always maximized by reporting its true cost: $b_i=c_i$.
\end{proof}

\begin{lemma}
\label{lemma:consumer sovereignty}
The OMZ mechanism satisfies the consumer sovereignty.
\end{lemma}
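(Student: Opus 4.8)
The plan is to show that the OMZ mechanism rejects user $i$ only for reasons intrinsic to $i$'s own bid being too high, to $i$'s marginal value being zero, or to the stage-budget already being fully committed — none of which is an \emph{arbitrary} exclusion — and, conversely, that whenever $i$ has positive marginal value on arrival and the current stage-budget is not yet exhausted, a sufficiently low bid forces $i$ to be selected at a strictly positive payment.

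First I would fix the reports of all users other than $i$ and look at the stage $s$ in which $i$ arrives, i.e.\ at time step $a_i$ (in the zero arrival-departure interval case the arrival time is reported truthfully). The crucial observation is that the three quantities governing the acceptance test on lines 4--6 of Algorithm \ref{alg:OMZ} at the moment $i$ is processed — the density threshold $\rho^*$ in force during stage $s$, the current winner set $\mathcal{S}$, and the committed payment $\sum_{j\in\mathcal{S}}p_j$ — are all determined by information revealed strictly before $i$ is processed, hence independent of $b_i$. Indeed, for $s\ge 2$ the threshold $\rho^*$ is the output of \textbf{GetDensityThreshold} run on the sample set collected at the end of stage $s-1$, which does not contain $i$; for $s=1$ it is the fixed initializer $\epsilon>0$; and $\mathcal{S}$ together with its payments were fixed while earlier users were processed. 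Consequently $V_i(\mathcal{S})/\rho^*$ is a fixed, strictly positive number whenever $V_i(\mathcal{S})>0$ and $0<\rho^*<\infty$.

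Next I would read off the acceptance rule directly: if the residual stage-budget satisfies $B'-\sum_{j\in\mathcal{S}}p_j\ge V_i(\mathcal{S})/\rho^*$, then any bid $b_i\le V_i(\mathcal{S})/\rho^*$ makes the chain $b_i\le V_i(\mathcal{S})/\rho^*\le B'-\sum_{j\in\mathcal{S}}p_j$ hold, so $i$ is added to $\mathcal{S}$ and paid $p_i=V_i(\mathcal{S})/\rho^*>0$; this is exactly the ``sufficiently low bid'' guarantee. The point I would stress is the contrast with the two-stage sampling-accepting schemes of \cite{hajiaghayi2004adaptive,babaioff2008online,kleinberg2005multiple,babaioff2007knapsack,bateni2010submodular}, where the entire first batch of applicants is rejected with price $0$ irrespective of their bids: in OMZ every stage — the first one included, precisely because $\rho^*$ is initialized to $\epsilon$ rather than to $\infty$ — applies a finite threshold, so no user is barred from winning merely because of when it arrives. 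This is what makes consumer sovereignty non-trivial to obtain in the online setting, and it is where OMZ differs structurally from earlier work.

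The step I expect to be the main obstacle is the residual-budget caveat: if an earlier winner in the same stage has already pushed $\sum_{j\in\mathcal{S}}p_j$ to within less than $V_i(\mathcal{S})/\rho^*$ of $B'$, then $i$ cannot win no matter how low it bids. I would argue this is not arbitrary exclusion — it is precisely the budget-feasibility constraint of Lemma \ref{lemma:budget feasibility} — and therefore state the guarantee under the (generic) hypothesis that $V_i(\mathcal{S})>0$ and the stage-budget is not exhausted, which is all that the later arguments rely on. The only remaining loose end is the degenerate case in which every sampled user has zero marginal value, making \textbf{GetDensityThreshold} return $0$; this is excluded once some sampled user carries positive value, or by keeping $\epsilon$ as a lower floor on $\rho^*$.
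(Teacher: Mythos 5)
Your proof is correct and follows essentially the same route as the paper's (much terser) argument: because every stage of OMZ is simultaneously a sampling and an accepting stage with a finite threshold in force (initialized to $\epsilon$ at stage~1), no arriving user is automatically rejected, and a sufficiently low bid passes the test on lines 4--6 provided the stage-budget is not exhausted. Your added caveats about residual budget and degenerate thresholds are reasonable elaborations of the same qualification the paper itself states (``as long as \ldots the allocated stage-budget has not been exhausted''), not a different approach.
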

\begin{proof}
Each stage is an accepting process as well as a sampling process ready for the next stage.
As a result, users are not automatically rejected during the sampling process, and are allocated as long as their marginal densities are not less than the current density threshold, and the allocated stage-budget has not been exhausted.
\end{proof}

Before analyzing the competitiveness of the \emph{OMZ} mechanism, we firstly introduce an offline mechanism proposed by Singer \cite{singer2010budget}, which is proved to satisfy \emph{computational efficiency}, \emph{individual rationality}, \emph{budget feasibility}, and \emph{truthfulness}.
This mechanism does not have knowledge about users' costs, but it is an offline mechanism, i.e., all users submit their bids to the mechanism and wait for the mechanism to collect all the bids and decide on an allocation.
This mechanism has been proved to be $O$(1)-\emph{competitive} in maximizing the value of services received under budget constraint compared with the optimal solution.
Therefore, we only need to prove that the \emph{OMZ} mechanism has a constant competitive ratio compared with this offline mechanism, then the \emph{OMZ} mechanism will also be $O$(1)-\emph{competitive} compared with the optimal solution.
Note that in the offline scenario satisfying the \emph{time-truthfulness} and the \emph{consumer sovereignty} is trivial, since all decisions are made after all users' information is submitted to the crowdsourcer.

\begin{algorithm}
\caption{Proportional Share Mechanism (Offline) \protect\cite{singer2010budget}}
\label{alg:offline}
\KwIn{Budget constraint $B$, User set $\mathcal{U}$}
\tcc{Winner selection phase}
$\mathcal{S}\leftarrow \emptyset$; $i \leftarrow \arg\max_{j\in \mathcal{U}}(V_j(\mathcal{S})/b_j)$\;
\While{$b_i\leq \frac{V_i(\mathcal{S}) B}{V(\mathcal{S}\cup \{i\})}$}
{
    $\mathcal{S}\leftarrow \mathcal{S}\cup \{i\}$\;
    $i\leftarrow \arg\max_{j\in \mathcal{U}\backslash \mathcal{S}}(V_j(\mathcal{S})/b_j)$\;
}
\tcc{Payment determination phase}
\lForEach{$i\in \mathcal{U}$}{$p_i\leftarrow 0$}\;
\ForEach{$i\in \mathcal{S}$}
{
    $\mathcal{U'}\leftarrow \mathcal{U}\backslash \{i\}$; $\mathcal{Q}\leftarrow \emptyset$\;
    \Repeat{$b_{i_j}\leq \frac{V_{i_j}(\mathcal{Q}_{j-1}) B}{V(\mathcal{Q})}$}
    {
        $i_j\leftarrow \arg\max_{j\in \mathcal{U'}\backslash \mathcal{Q}}(V_j(\mathcal{Q})/b_j)$\;
        $p_i\leftarrow \max \{p_i,\min\{b_{i(j)},\eta_{i(j)}\}\}$\;
    }
}
\Return $(\mathcal{S},p)$;\
\end{algorithm}
The offline mechanism adopts a \emph{proportional share allocation rule}. As described in Algorithm \ref{alg:offline}, it consists of two phases: the \emph{winner selection} phase and the \emph{payment determination} phase. The \emph{winner selection} phase has the same working process as Algorithm \ref{alg:get threshold}. In the payment determination phase, we compute the payment $p_i$ for each winner $i\in \mathcal{S}$.
To compute the payment for user $i$, we sort the users in $\mathcal{U}\backslash \{i\}$ similarly:
\[\frac{V_{i_1}(\mathcal{Q}_0)}{b_{i_1}} \geq \frac{V_{i_2}(\mathcal{Q}_1)}{b_{i_2}} \geq \dots \geq \frac{V_{i_{n-1}}(\mathcal{Q}_{n-2})}{b_{i_{n-1}}},\]
where $V_{i_j}(\mathcal{Q}_{j-1})=V(\mathcal{Q}_{j-1} \cup \{i_j\})-V(\mathcal{Q}_{j-1})$ denotes the marginal value of the $j$-th user and $\mathcal{Q}_j$ denotes the first $j$ users according to this sorting over $\mathcal{U}\backslash \{i\}$ and $\mathcal{Q}_0=\emptyset$.
The marginal value of user $i$ at position $j$ is $V_{i(j)}(\mathcal{Q}_{j-1})=V(\mathcal{Q}_{j-1} \cup \{i\})-V(\mathcal{Q}_{j-1})$.
Let $k'$ denote the position of the last user $i_j \in \mathcal{U}\backslash \{i\}$, such that $b_{i_j} \leq V_{i_j}(\mathcal{Q}_{j-1})B/V(\mathcal{Q}_j)$.
For brevity we will write $b_{i(j)}=V_{i(j)}(\mathcal{Q}_{j-1}) b_{i_j}/V_{i_j}(\mathcal{Q}_{j-1})$, and $\eta_{i(j)}=V_{i(j)}(\mathcal{Q}_{j-1}) B/V(\mathcal{Q}_{j-1} \cup \{i\})$.
In order to guarantee the truthfulness, each winner should be paid the critical value, which means that user $i$ would not
win the auction if it bids higher than this value.
Thus, the payment for user $i$ should be the maximum of these $k' + 1$ prices:
\[p_i=\max_{j\in[k'+1]}\{\min \{b_{i(j)},\eta_{i(j)}\}\}.\]

Let $Z$ be the set of selected users $\mathcal{S}$ computed by Algorithm \ref{alg:offline}, and the value of $Z$ is $V(Z)$.
The value density of $Z$ is $\rho=V(Z)/B$.
Define $Z_1$ and $Z_2$ as the subsets of $Z$ that appears in the first and second half of the input stream, respectively.
When the stage $\lfloor \log_2 T \rfloor$ is over, we obtain the sample set $\mathcal{S}'$ consisting of all users arriving before the time $\lfloor T/2 \rfloor$.
Thus, we have $Z_1=Z\cap \mathcal{S}'$, and $Z_2=Z\cap \{\mathcal{U}\backslash \mathcal{S}'\}$.
Let $Z_1'$ denote the set of selected users computed by Algorithm \ref{alg:get threshold} based on the sample set $\mathcal{S}'$ and the allocated stage-budget $B/2$, and the value of $Z_1'$ is $V(Z_1')$.
The density of $Z_1'$ is $\rho_1'=2V(Z_1')/B$.
The density threshold of the last stage is $\rho^*=\rho_1'/\delta$.
Let $Z_2'$ denote the set of selected users computed by Algorithm \ref{alg:OMZ} at the last stage.
Assume that the value of each user is at most $V(Z)/\omega$, where the parameter $\omega$ will be fixed later.

\subsubsection{\textbf{Competitiveness Analysis under the I.I.D. Model}}
Since the costs and values of all users in $\mathcal{U}$ are i.i.d., they can be selected in the set $Z$ with the same probability.
Note that the sample set $\mathcal{S}'$ is a random subset of $\mathcal{U}$ since all users arrive in a random order.
Therefore the number of users from $Z$ in the sample set $\mathcal{S}'$ follows a hypergeometric distribution $H(n/2,|Z|,n)$.
Thus, we have $\mathbb{E}[|Z_1|]=\mathbb{E}[|Z_2|]=|Z|/2$.
The value of each user can be seen as an independent identically distributed random variable, and because of the submodularity of $V(\mathcal{S})$, it can be derived that: $\mathbb{E}[V(Z_1)]=\mathbb{E}[V(Z_2)]\geq V(Z)/2$.
The expected total payments to the users from both $Z_1$ and $Z_2$ are $B/2$.
Since $V(Z_1')$ is computed with the stage-budget $B/2$, it can be derived that: $\mathbb{E}[V(Z_1')]\geq \mathbb{E}[V(Z_1)]\geq V(Z)/2$, and $\mathbb{E}[\rho_1'] \geq \rho$, where the first inequality follows from the fact that $V(Z_1')$ is the optimal solution computed by Algorithm \ref{alg:get threshold} with stage-budget $B/2$ according to the \emph{proportional share allocation} rule.
Therefore, we only need to prove that the ratio of $\mathbb{E}[V(Z_2')]$ to $\mathbb{E}[V(Z_1')]$ is at least a constant, then the \emph{OMZ} mechanism will also have a constant expected competitive ratio compared with the offline mechanism.
\begin{lemma}
\label{lemma:average case}
For sufficiently large $\omega$, the ratio of $\mathbb{E}[V(Z_2')]$ to $\mathbb{E}[V(Z_1')]$ is at least a constant. Specially, this ratio approaches $1/4$ as $\omega \rightarrow \infty$ and $\delta \rightarrow 4$.
\end{lemma}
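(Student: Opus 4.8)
The plan is to restrict attention to the last stage of \emph{OMZ} and to split into two cases according to whether its stage‑budget (which equals $B$) is exhausted. The common ingredient is an accounting identity for that stage: every user $i$ accepted there is paid $p_i=V_i(\mathcal{S})/\rho^*$, so it contributes exactly $\rho^* p_i$ to the objective; summing over $i\in Z_2'$ and using submodularity — the marginal of $i$ with respect to the last‑stage winners preceding it is at least its marginal with respect to the full current set $\mathcal{S}$, which also contains earlier‑stage winners — gives
\[
 V(Z_2')\ \ge\ \sum_{i\in Z_2'} V_i(\mathcal{S}\text{ at the moment }i\text{ arrives})\ =\ \rho^*\Big(\textstyle\sum_{i\in Z_2'}p_i\Big).
\]
I will also use throughout that $\rho^*=\rho_1'/\delta$ and $\rho_1'=2V(Z_1')/B$, so that $\rho^* B/2=V(Z_1')/\delta$.

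\emph{Case 1: the last‑stage budget is exhausted.} Then the cumulative payment over all stages is within a single user's payment of $B$. Since the earlier stages together spend at most $B/2$ (their cumulative stage‑budget) and any user's payment is at most $V(\{i\})/\rho^*\le (V(Z)/\omega)/\rho^*$ by the assumption that each user's value is $\le V(Z)/\omega$, the payment made during the last stage is at least $B/2-V(Z)/(\omega\rho^*)$. The identity then yields $V(Z_2')\ge \rho^* B/2-V(Z)/\omega=V(Z_1')/\delta-V(Z)/\omega$, and taking expectations together with $\mathbb{E}[V(Z_1')]\ge V(Z)/2$ gives $\mathbb{E}[V(Z_2')]\ge (1/\delta-2/\omega)\,\mathbb{E}[V(Z_1')]$, which tends to $\mathbb{E}[V(Z_1')]/\delta$ as $\omega\to\infty$.

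\emph{Case 2: the last‑stage budget is never exhausted.} Then every second‑half user is offered the threshold $\rho^*$ and is accepted exactly when its marginal density is $\ge\rho^*$, so the final set $\mathcal{S}$ is locally maximal for $\rho^*$ over second‑half users: $V_j(\mathcal{S})<\rho^* b_j$ for every second‑half $j\notin\mathcal{S}$. I would transport the first‑half guarantee to the second half: by the i.i.d./secretary symmetry of the two halves, running Algorithm~\ref{alg:get threshold} with stage‑budget $B/2$ on the second half returns a set of the same expected value $\mathbb{E}[V(Z_1')]$, and since $\rho^*=\rho_1'/\delta$ lies below the density $\rho_1'$ at which that greedy stops, all of those users would be accepted in the last stage if presented in greedy order. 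Bounding the value lost because they are instead presented in arrival order, and because $\mathcal{S}$ is already non‑empty entering the last stage, via submodularity together with the local‑maximality inequality $V_j(\mathcal{S})<\rho^* b_j$ and an $O(B)$ bound on the relevant total bids, yields $\mathbb{E}[V(Z_2')]\ge (1-c/\delta-O(1/\omega))\,\mathbb{E}[V(Z_1')]$ for an absolute constant $c$. Combining the two cases, the ratio is at least $\min\{1/\delta-2/\omega,\ 1-c/\delta-O(1/\omega)\}$, which is a positive constant for large $\omega$; maximizing over $\delta$ balances the two expressions, which forces $c=3$ and $\delta=4$ and gives the limiting value $1/4$ as $\omega\to\infty$.

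I expect Case 2 to be the main obstacle: one must show that the arrival‑order acceptance process in the last stage recovers a constant fraction of the value the proportional‑share greedy extracts from the second half, despite the marginal densities shrinking as $\mathcal{S}$ grows, the arrival order differing from the greedy order, and $\mathcal{S}$ already containing earlier‑stage winners. Submodularity supplies the needed monotonicity of marginals, but quantifying the loss so that it is only an $O(1/\delta)$ fraction of $\mathbb{E}[V(Z_1')]$ — and not, say, a logarithmic factor coming from a crude bound on the total bid of skipped users — is the delicate point; the hypothesis that each user's value is at most $V(Z)/\omega$ is exactly what lets all residual error terms vanish as $\omega\to\infty$.
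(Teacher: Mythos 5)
Your Case 1 is essentially sound and matches the paper's Case (a): the accounting identity $V(Z_2')\ge\rho^*\sum_{i\in Z_2'}p_i$ (which follows by telescoping the marginals and submodularity), the observation that earlier stages consume at most $B/2$ of the last stage's budget $B$, and the use of the per-user value bound $V(Z)/\omega$ to control the boundary term all go through and give the $1/\delta$ branch. The genuine gap is Case 2, which you yourself flag as ``the main obstacle'' and never actually close: you assert a bound of the form $\mathbb{E}[V(Z_2')]\ge(1-c/\delta-O(1/\omega))\,\mathbb{E}[V(Z_1')]$ for an unspecified absolute constant $c$, and then declare that balancing ``forces $c=3$.'' That is reverse-engineering the constant from the answer, not deriving it, and the functional form is not what the correct argument produces. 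Your proposed route for Case 2 --- rerun the proportional-share greedy on the second half, invoke i.i.d.\ symmetry, and then bound the loss from processing in arrival order rather than greedy order --- is exactly where the difficulty you identify lives (marginals shrink against the evolving $\mathcal{S}$, which already contains first-half winners), and you give no mechanism for bounding that loss by an $O(1/\delta)$ fraction of $\mathbb{E}[V(Z_1')]$.

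The paper's proof avoids this entirely by comparing against $Z_2$ (the second-half portion of the \emph{offline} proportional-share solution $Z$) rather than a fresh greedy on the second half. The key step you are missing is: every user of $Z_2$ rejected in the last stage for having marginal density below $\rho^*$ satisfies $V_j(\mathcal{S})<\rho^* b_j$, so by submodularity the total value lost to such rejections is at most $\rho^*\sum_j b_j\le\rho^*\cdot(\text{offline payment to }Z_2)\le\rho^*\cdot B/2=V(Z_1')/\delta$ in expectation (using individual rationality of the offline mechanism and the fact that $Z_2$ receives half of the at-most-$B$ offline payment in expectation). Combined with $\mathbb{E}[V(Z_2)]\ge V(Z)/2\ge\mathbb{E}[V(Z_1')]/2$, this yields $\mathbb{E}[V(Z_2')]\ge(1/2-1/\delta)\,\mathbb{E}[V(Z_1')]$ in your Case 2 --- note $1/2-1/\delta$, not $1-c/\delta$ --- and balancing against $1/\delta$ from Case 1 gives $\delta=4$ and the ratio $1/4$. (The paper actually parametrizes the case split by the last-stage spend $\alpha B$ with $\alpha\in(0,1/2]$, handles budget-skipped users inside Case (b) by counting how many can exist and charging each $V(Z)/\omega$, and then optimizes over $\alpha$; the optimum lands at $\alpha=1/2$, so your binary exhausted/not-exhausted split is a legitimate simplification of the case structure --- but only the missing loss-vs-bids inequality makes Case 2 work.)
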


The proof of Lemma \ref{lemma:average case} is given in Appendix B.

\subsubsection{\textbf{Competitiveness Analysis under the Secretary Model}}
\begin{lemma}
\label{lemma:Z1Z2}
(\cite{bateni2010submodular}, Lemma 16) For sufficiently large $\omega$, the random variable $|V(Z_1)-V(Z_2)|$ is bounded by $V(Z)/2$ with a constant probability.
\end{lemma}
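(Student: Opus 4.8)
The plan is to reduce the claim to a bounded-differences concentration statement for $V(Z_1)$ and $V(Z_2)$ around their common mean, exploiting the hypothesis that every user's value is at most $V(Z)/\omega$.

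First I would record two structural facts. By symmetry of the uniformly random arrival order, the set of positions occupied by the elements of $Z$ in the first half of the stream has the same distribution as the complementary set in the second half, so $V(Z_1)$ and $V(Z_2)$ are identically distributed; in particular $\mathbb{E}[V(Z_1)] = \mathbb{E}[V(Z_2)] =: \mu$. (Monotonicity gives $V(Z_i) \le V(Z)$, and subadditivity of $V$ — which follows from submodularity together with $V(\emptyset)=0$ via $V(Z_1)+V(Z_2) \ge V(Z_1\cup Z_2) = V(Z)$ — gives $\mu \in [V(Z)/2, V(Z)]$; only the equality of the two means is needed below.) Second, by submodularity and monotonicity the marginal value $V(S\cup\{z\}) - V(S)$ of any single $z\in Z$ with respect to any $S$ lies in $[0, V(\{z\})]$, and by hypothesis $V(\{z\}) \le V(Z)/\omega$. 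Hence, viewing $V(Z_1)$ as a function of the random assignment of each element of $Z$ to one of the two halves, reassigning a single element changes the value by at most $V(Z)/\omega$.

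Next I would set up a Doob martingale that exposes the elements of $Z$ one at a time, revealing for each whether it lands in the first or the second half; the bounded-differences property just established makes each martingale increment at most $O(V(Z)/\omega)$ in absolute value, so Azuma--Hoeffding (equivalently, the bounded-differences inequality for sampling without replacement) yields $\Pr[\,|V(Z_1) - \mu| \ge V(Z)/4\,] \le 2\exp(-\Omega(\omega^2/|Z|))$, and the same bound for $V(Z_2)$. On the complement of these two failure events, $|V(Z_1) - V(Z_2)| \le |V(Z_1) - \mu| + |\mu - V(Z_2)| \le V(Z)/2$, so a union bound gives the desired inequality with probability at least $1 - 4\exp(-\Omega(\omega^2/|Z|))$. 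Since $|Z| \le \min\{m,n\}$, taking $\omega = \Omega(\sqrt{|Z|})$ — i.e. ``sufficiently large $\omega$'' — makes this probability a positive constant (e.g. at least $1/2$), which is exactly the claim.

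I expect the main obstacle to be handling the concentration cleanly in the dependent (balanced-partition / random-permutation) setting rather than in an i.i.d.\ one: one must invoke the right permutation version of the bounded-differences inequality, and note that because a balanced split cannot move one element between halves without moving another, a single ``swap'' perturbs two coordinates and only affects constants — the element-by-element martingale exposure sidesteps this, but needs a short argument that each increment is genuinely bounded by the single-element marginal $V(\{z\}) \le V(Z)/\omega$. A secondary point to pin down is the exact threshold on $\omega$ (here $\omega = \Theta(\sqrt{|Z|})$ suffices), which is precisely where the ``sufficiently large $\omega$'' hypothesis enters.
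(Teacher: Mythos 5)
First, a point of reference: the paper does not actually prove Lemma~\ref{lemma:Z1Z2} --- it is imported verbatim as Lemma~16 of \cite{bateni2010submodular} --- so there is no in-paper argument to compare against and your proposal has to stand on its own. Its skeleton is sound: exchangeability of the two halves of the random arrival order gives $\mathbb{E}[V(Z_1)]=\mathbb{E}[V(Z_2)]=\mu$, subadditivity places $\mu\geq V(Z)/2$, and monotonicity plus submodularity show that moving a single $z\in Z$ across the split changes $V(Z_1)$ by at most $V(\{z\})\leq V(Z)/\omega$, so a concentration bound around $\mu$ for each of $V(Z_1),V(Z_2)$ followed by the triangle inequality and a union bound would finish. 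The sampling-without-replacement issue you flag is real but routine (McDiarmid's inequality for random permutations, or Hoeffding's reduction to the i.i.d.\ case); I would not count it as a gap.

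The genuine problem is quantitative, and you noticed it without recognizing it as fatal: Azuma with worst-case increments $V(Z)/\omega$ over $|Z|$ exposures gives a deviation of order $V(Z)\sqrt{|Z|}/\omega$, so beating $V(Z)/4$ forces $\omega=\Omega(\sqrt{|Z|})$. That is not what ``sufficiently large $\omega$'' can mean here: everywhere else the paper treats the threshold on $\omega$ as an absolute constant (e.g.\ ``$\omega$ at least $12$'' in the i.i.d.\ analysis), and a threshold growing with the number of winners would turn the advertised constant competitiveness into an instance-dependent statement. The standard repair is to exploit that the relevant increments sum to exactly $V(Z)$ rather than merely being individually small. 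Order $Z$ arbitrarily as $z_1,\ldots,z_{|Z|}$ and let $a_j$ be the telescoping marginals, so that $\sum_j a_j=V(Z)$ and $a_j\leq V(\{z_j\})\leq V(Z)/\omega$; submodularity gives $A_i:=\sum_{j:z_j\in Z_i}a_j\leq V(Z_i)\leq V(Z)=A_1+A_2$, hence $|V(Z_1)-V(Z_2)|\leq\max(A_1,A_2)=V(Z)/2+|A_1-A_2|/2$, and Chebyshev applied to $A_1-A_2$ with $\mathrm{Var}(A_1-A_2)\leq\sum_j a_j^2\leq V(Z)^2/\omega$ yields $|V(Z_1)-V(Z_2)|\leq V(Z)\bigl(\tfrac{1}{2}+\tfrac{2}{\sqrt{\omega}}\bigr)$ with probability at least $3/4$. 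This has failure probability $O(1/\omega)$, i.e.\ an absolute-constant threshold on $\omega$, and is exactly what is needed for Corollary~\ref{corollary} and the downstream competitiveness bounds. In short: run your concentration argument on the additive surrogate $A_1$ whose total is conserved, not on $V(Z_1)$ directly with worst-case Lipschitz constants.
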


Note that a non-negative submodular function is also a subadditive function, so we have $V(Z_1)+V(Z_2)\geq V(Z)$. Thus, Lemma \ref{lemma:Z1Z2} can be easily extended to the following corollary.
\begin{corollary}
\label{corollary}
For sufficiently large $\omega$, both $V(Z_1)$ and $V(Z_2)$ are at least $V(Z)/4$ with a constant probability.
\end{corollary}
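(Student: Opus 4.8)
The plan is to derive the corollary directly from Lemma~\ref{lemma:Z1Z2} together with the subadditivity inequality $V(Z_1)+V(Z_2)\ge V(Z)$ recorded just above its statement. First I would make explicit why that inequality holds: since $V$ is monotone submodular and non-negative it is subadditive, and since $Z_1$ and $Z_2$ are by definition the elements of the winner set $Z$ that arrive in the first and second half of the stream, they form a disjoint cover of $Z$, so $V(Z_1)+V(Z_2)\ge V(Z_1\cup Z_2)=V(Z)$.

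Next I would condition on the event $\mathcal{E}$ that $|V(Z_1)-V(Z_2)|\le V(Z)/2$, which by Lemma~\ref{lemma:Z1Z2} occurs with probability at least some absolute constant $c>0$ once $\omega$ is sufficiently large. On $\mathcal{E}$, assume without loss of generality that $V(Z_1)\le V(Z_2)$, so that $V(Z_2)-V(Z_1)\le V(Z)/2$. Subtracting this from $V(Z_1)+V(Z_2)\ge V(Z)$ gives $2V(Z_1)\ge V(Z)/2$, i.e.\ $V(Z_1)\ge V(Z)/4$, and hence also $V(Z_2)\ge V(Z_1)\ge V(Z)/4$. Therefore on $\mathcal{E}$ both $V(Z_1)$ and $V(Z_2)$ are at least $V(Z)/4$, and since $\Pr[\mathcal{E}]\ge c$ the stated bound holds with constant probability.

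There is essentially no hard step: the argument is a two-line arithmetic manipulation once Lemma~\ref{lemma:Z1Z2} is invoked. The only points needing any care are stating that $Z_1$ and $Z_2$ genuinely partition $Z$ so that subadditivity applies across their union, and noting that the ``constant probability'' in the conclusion is exactly the constant furnished by Lemma~\ref{lemma:Z1Z2}, incurred with no additional loss from the conditioning.
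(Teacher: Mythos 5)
Your proof is correct and follows exactly the route the paper intends: it combines the subadditivity bound $V(Z_1)+V(Z_2)\ge V(Z)$ with the event $|V(Z_1)-V(Z_2)|\le V(Z)/2$ from Lemma~\ref{lemma:Z1Z2} and does the two-line arithmetic the paper leaves implicit when it says the corollary ``can be easily extended.'' No differences of substance.
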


\begin{lemma}
\label{lemma:value}
Given a sample set $\mathcal{S}'$, the total value of selected users computed by Algorithm \ref{alg:get threshold} with the budget $B'/2$ is at least a half of that computed with the budget $B'$.
\end{lemma}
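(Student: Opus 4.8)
The plan is to first notice that the order in which Algorithm~\ref{alg:get threshold} considers the sample users does not depend on its budget argument: at every step the chosen user maximizes the marginal density $V_j(\mathcal{J})/b_j$ over the remaining sample users, and the budget enters only through the stopping test. Reindexing the sample users $1,2,\ldots$ in this greedy order, submodularity yields the non-increasing chain $\rho_1\ge\rho_2\ge\cdots$ of densities $\rho_j=V_j(\mathcal{S}_{j-1})/b_j$ already displayed in the paper, while $V(\mathcal{S}_j)$ is increasing, so $V(\mathcal{S}_j)/\rho_j$ is increasing in $j$. Since user $j$ is accepted under budget $\beta$ exactly when $b_j\le V_j(\mathcal{S}_{j-1})\beta/V(\mathcal{S}_j)$, i.e.\ $V(\mathcal{S}_j)/\rho_j\le\beta$, the run with budget $\beta$ outputs precisely the prefix $\mathcal{S}_{K(\beta)}$ with $K(\beta)=\max\{j:V(\mathcal{S}_j)/\rho_j\le\beta\}$. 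In particular the set selected with budget $B'/2$ is a prefix $\mathcal{S}_{k'}$ of the set $\mathcal{S}_k$ selected with budget $B'$, with $k'\le k$.

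Next I would locate how deep into $\mathcal{S}_k$ the $B'/2$-run reaches. Let $m$ be the largest index with $V(\mathcal{S}_m)\le \frac{1}{2}V(\mathcal{S}_k)$ (so $m<k$, assuming $V(\mathcal{S}_k)>0$; otherwise there is nothing to prove). Since $\rho_m\ge\rho_k$ and user $k$ was accepted under $B'$ (so $V(\mathcal{S}_k)/\rho_k\le B'$),
\[
\frac{V(\mathcal{S}_m)}{\rho_m}\le\frac{V(\mathcal{S}_m)}{\rho_k}\le\frac{V(\mathcal{S}_k)/2}{\rho_k}\le\frac{B'}{2},
\]
hence $K(B'/2)\ge m$, i.e.\ $V(\mathcal{S}_{k'})\ge V(\mathcal{S}_m)$.

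Finally I would bridge $V(\mathcal{S}_m)$ to $\frac{1}{2}V(\mathcal{S}_k)$. By maximality of $m$, $V(\mathcal{S}_{m+1})>\frac{1}{2}V(\mathcal{S}_k)$, and $V(\mathcal{S}_m)=V(\mathcal{S}_{m+1})-V_{m+1}(\mathcal{S}_m)$; submodularity gives $V_{m+1}(\mathcal{S}_m)\le V(\{m+1\})$, which by the standing assumption that every user's value is at most $V(Z)/\omega$ is at most $V(Z)/\omega$. Thus $V(\mathcal{S}_{k'})\ge V(\mathcal{S}_m)>\frac{1}{2}V(\mathcal{S}_k)-V(Z)/\omega$; since the quantities that matter downstream are $\Theta(V(Z))$ (for the secretary model, see Corollary~\ref{corollary}), this $O(1/\omega)$ correction is negligible for large $\omega$, delivering the factor one half.

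The crux is exactly this last step: the $B'/2$-run necessarily stops one ``chunk'' short of where the displayed inequality alone would place it, so the clean constant $1/2$ is only recoverable because individual user values are negligible relative to $V(Z)$ — without that hypothesis a single user with value just below $B'$ but bid just above $B'/2$ is taken under $B'$ and dropped under $B'/2$, making the ratio zero. Everything else (budget-independence of the greedy order and monotonicity of $V(\mathcal{S}_j)/\rho_j$) is routine submodular bookkeeping.
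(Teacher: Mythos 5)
Your proposal follows the same outline as the paper's proof---the greedy order in Algorithm~\ref{alg:get threshold} does not depend on the budget, so the $B'/2$-run returns a prefix of the $B'$-run, and the two stopping points are compared through the proportional-share test---but you are more careful at exactly the point where the paper's argument breaks. The paper inserts $2V(\mathcal{S}_l)/B'$ between positions $l$ and $l+1$ of the density chain and $V(\mathcal{S}_k)/B'$ after position $k$, and reads off $V(\mathcal{S}_l)\geq V(\mathcal{S}_k)/2$; however, the link $2V(\mathcal{S}_l)/B'\geq V_{l+1}(\mathcal{S}_l)/b_{l+1}$ is not what the rejection of user $l+1$ under budget $B'/2$ actually yields (it yields only $V_{l+1}(\mathcal{S}_l)/b_{l+1}< 2V(\mathcal{S}_{l+1})/B'$), and the discrepancy is precisely the marginal value of the first rejected user. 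Your counterexample is genuine: a single user with value close to $B'$ and bid in $(B'/2,B']$ is accepted under $B'$ and rejected under $B'/2$, making the ratio $0$, so the lemma is false as literally stated. What your argument establishes is the corrected bound $V(\mathcal{S}_{k'})\geq V(\mathcal{S}_k)/2 - V(Z)/\omega$, which does require importing the standing assumption that each user's value is at most $V(Z)/\omega$ (an assumption the lemma's statement and the paper's proof never invoke). This additive slack is harmless where the lemma is used, since there $V(\mathcal{S}_k)\geq V(Z_1)\geq V(Z)/4$ and the loss is an $O(1/\omega)$ fraction, so your version is the defensible statement and suffices for Lemma~\ref{lemma:worst case}; but be explicit that you are proving a weakened form of Lemma~\ref{lemma:value}, not the stated one. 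Your remaining steps (monotonicity of $V(\mathcal{S}_j)/\rho_j$, which reconciles the while-loop's ``first failure'' with the ``largest $j$ passing the test,'' and the prefix comparison via $\rho_m\geq\rho_k$) are sound.
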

The proof of Lemma \ref{lemma:value} is given in Appendix C.

Note that the total value of selected users from the sample set $\mathcal{S}'$ computed by Algorithm 2 with the budget $B$ is not less than $V(Z_1)$.
Thus, considering Corollary \ref{corollary} and Lemma \ref{lemma:value}, it can be derived that: $V(Z_1')\geq V(Z_1)/2 \geq V(Z)/8$.
Therefore, it only needs to prove that the ratio of $V(Z_2')$ to $V(Z_1')$ is at least a constant, then the \emph{OMZ} mechanism will also have a constant competitive ratio compared with the offline mechanism.

\begin{lemma}
\label{lemma:worst case}
For sufficiently large $\omega$, the ratio of $V(Z_2')$ to $V(Z_1')$ is at least a constant. Specially, this ratio approaches $1/12$ as $\omega \rightarrow \infty$ and $\delta \rightarrow 12$.
\end{lemma}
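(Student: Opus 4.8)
The plan is to lower-bound $V(Z_2')$ by a constant fraction of $V(Z_1')$ via a case analysis on whether the budget allocated to the last stage is (essentially) exhausted, and then to take the worse of the two bounds. Throughout I would condition on the constant-probability event of Corollary~\ref{corollary}, so that $V(Z_1)\geq V(Z)/4$ and $V(Z_2)\geq V(Z)/4$, and I would use the identities $\rho_1'=2V(Z_1')/B$, $\rho^*=\rho_1'/\delta$, together with the already-established $V(Z_1')\geq V(Z)/8$. Two preliminary observations: first, by individual rationality and budget feasibility of the offline mechanism, $\sum_{i\in Z}b_i\leq\sum_{i\in Z}p_i\leq B$, so the total cost of $Z_2\subseteq Z$ is at most $B$; second, telescoping the payments $p_i=V_i(\mathcal S)/\rho^*$ along the last stage gives $\sum_{i\in Z_2'}p_i=\bigl(V(\mathcal S_{\mathrm{final}})-V(\mathcal S_{\mathrm{pre}})\bigr)/\rho^*\leq V(Z_2')/\rho^*$ by submodularity, hence $V(Z_2')\geq\rho^*\sum_{i\in Z_2'}p_i$.

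Case 1 (last-stage budget nearly exhausted). Since the earlier stages spend at most $B/2$ in total, at least $B/2$ of budget is available entering the last stage, so if almost all of it is spent then $\sum_{i\in Z_2'}p_i\geq B/2-\max_{i\in Z_2'}p_i$. The overshoot is negligible: every user has value at most $V(Z)/\omega$, so $\max_i p_i=\max_i V_i(\mathcal S)/\rho^*\leq V(Z)/(\omega\rho^*)$. Substituting $\rho^*=2V(Z_1')/(\delta B)$ and $V(Z)\leq 8V(Z_1')$ gives $V(Z_2')\geq \rho^*B/2-V(Z)/\omega\geq V(Z_1')/\delta-8V(Z_1')/\omega$, so the ratio is at least $1/\delta-8/\omega$, which tends to $1/12$ as $\delta\to12$, $\omega\to\infty$.

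Case 2 (last-stage budget not exhausted). Then OMZ rejects a second-half user $i$ only because its marginal density at arrival is below $\rho^*$; since marginal values only shrink as $\mathcal S$ grows, this yields $V_i(Z_2')\leq\rho^*b_i$ for every second-half $i\notin Z_2'$, in particular for every $i\in Z_2\setminus Z_2'$. A submodular exchange step then gives
\[
V(Z_2)-V(Z_2')\ \leq\ V(Z_2\cup Z_2')-V(Z_2')\ \leq\ \sum_{i\in Z_2\setminus Z_2'}V_i(Z_2')\ \leq\ \rho^*\!\!\sum_{i\in Z_2}b_i\ \leq\ \rho^* B,
\]
so $V(Z_2')\geq V(Z_2)-\rho^* B\geq V(Z)/4-2V(Z_1')/\delta$. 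Combining this with the fact that $V(Z_1')$ is, over the random split $\mathcal S'$, within a constant factor of $V(Z)$ — which I would get from Lemma~\ref{lemma:value}, monotonicity of the proportional-share value in the budget, and the $O(1)$-competitiveness of the offline mechanism applied to $\mathcal S'$ versus $\mathcal U$ — again bounds the ratio below by a positive constant, matching $1/12$ in the limit $\delta\to12$, $\omega\to\infty$. Taking the minimum over the two cases finishes the proof.

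The main obstacle is Case 2, and specifically pinning down the comparison among $V(Z_1')$, $V(Z_2)$ and $V(Z)$: the proportional-share greedy is \emph{not} monotone under shrinking the ground set, so one cannot simply assert $V(Z_1')\le V(Z)$ pointwise, and the needed bound has to be recovered in expectation / with constant probability over the uniformly random sample $\mathcal S'$, mirroring how the analogue $\mathbb{E}[\rho_1']\ge\rho$ is used in the i.i.d. analysis of Lemma~\ref{lemma:average case}. A secondary point is making the boundary (last-accepted-user) loss vanish, which is exactly why the value cap $V(Z)/\omega$ and the limit $\omega\to\infty$ enter the statement; and a bookkeeping subtlety is that $Z_2'$ is accepted against a running set $\mathcal S$ that already contains first-half winners, so the marginal-value and telescoping arguments must be phrased relative to that running set rather than to $Z_2'$ in isolation.
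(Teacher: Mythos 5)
Your overall architecture is the same as the paper's: a two-case split on how much of the last-stage budget is actually spent, with the identities $\rho^*=\rho_1'/\delta=2V(Z_1')/(\delta B)$ and the bound $V(Z_1')\geq V(Z)/8$ (from Corollary~\ref{corollary} and Lemma~\ref{lemma:value}) driving the arithmetic, and your Case~1 is a legitimate variant of the paper's Case~(a) (the paper takes the last-stage spend to be at least $\alpha B$ and concludes $V(Z_2')\geq\rho^*\alpha B=2\alpha V(Z_1')/\delta$; your telescoping version with the $V(Z)/\omega$ boundary correction does the same job). The genuine gap is in your Case~2: the claim that when the last-stage budget is not exhausted, OMZ rejects a second-half user only because its marginal density falls below $\rho^*$ is false. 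The acceptance test in Algorithm~\ref{alg:OMZ} is $b_i\leq V_i(\mathcal S)/\rho^*\leq B'-\sum_{j\in\mathcal S}p_j$, so a user whose marginal density clears $\rho^*$ can still be rejected because its \emph{individual} payment $V_i(\mathcal S)/\rho^*$ exceeds the budget remaining at its arrival time, even though the stage ends with most of the budget unspent. The paper's Case~(b) devotes its second half to exactly these users: any such user must satisfy $V_i(\mathcal S)>\rho^*(1/2-\alpha)B\geq(1-2\alpha)\rho B/(8\delta)$, hence $Z_2$ contains at most $8\delta/(1-2\alpha)-1$ of them (their offline payments sum to at most $B$), and each costs at most $V(Z)/\omega$ of value. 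This is the \emph{only} place the hypothesis ``sufficiently large $\omega$'' enters Case~(b); your Case~2 never invokes $\omega$, which is the symptom of the omission. One can check that for $\omega\geq 8\delta/(1-2\alpha)$ the value cap actually makes such rejections impossible, but that observation is the paper's counting argument in disguise and must be stated; moreover the limit $\alpha\to 1/2$ needed to reach the constant $1/12$ forces $\omega\to\infty$ jointly, so the sub-case cannot simply be declared vacuous for a fixed $\omega$.

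On the secondary issue you flag yourself --- converting the bound $V(Z)/4-2V(Z_1')/\delta$ into a multiple of $V(Z_1')$ requires a lower bound on $V(Z)$ in terms of $V(Z_1')$ --- you are right that this is delicate: the paper's final chain of inequalities silently assumes $V(Z)\geq V(Z_1')$ without justification, so you are not missing anything the paper actually supplies, but your proposed repair via the $O(1)$-competitiveness of the offline mechanism restricted to $\mathcal S'$ is not worked out either. Net assessment: same decomposition and same limiting constants as the paper, Case~(a) sound, Case~(b) missing its budget-starvation sub-case, which is where $\omega$ must do its work.
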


The proof of Lemma \ref{lemma:worst case} is given in Appendix D.

From the above analysis, we know that the $OMZ$ mechanism has a competitive factor of at least 8 (96) of the offline \emph{proportional share} solution under the i.i.d. model (the secretary model).
While the competitive ratio may seem large, we emphasize that our goal is to show that the $OMZ$ mechanism is indeed $O$(1)-\emph{competitive}, and thus its performance guarantee is independent of the parameters of the problem (e.g. number of users, their costs, the tasks they can complete, etc.).
We will later show that the mechanism performs well in practice (see Section \ref{sec:performance evaluation}), implying that bounded competitive ratio serves as a good guide for designing such mechanisms.

\begin{theorem}
The OMZ mechanism satisfies computational efficiency, individual rationality, budget feasibility, truthfulness, consumer sovereignty, and constant competitiveness under the zero arrival-departure interval case.
\end{theorem}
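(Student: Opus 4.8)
The plan is to obtain the theorem by assembling the lemmas already established in this section, treating the first five properties and the competitiveness claim separately.

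First I would dispatch computational efficiency, individual rationality, budget feasibility, cost-truthfulness, and consumer sovereignty by directly invoking Lemmas~\ref{lemma:computational efficiency}, \ref{lemma:individual rationality}, \ref{lemma:budget feasibility}, \ref{lemma:cost-truthfulness}, and \ref{lemma:consumer sovereignty}. To upgrade cost-truthfulness to full truthfulness I would add the observation made at the start of this section: under the zero arrival-departure interval case a user departs at the same step it arrives, so it can neither perform a task nor receive a payment by announcing anything other than its true arrival step; hence time-truthfulness is vacuous and cost-truthfulness suffices.

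For constant competitiveness I would chain three facts and reduce to the appendix lemmas. (i) By Singer's result, the offline proportional-share mechanism (Algorithm~\ref{alg:offline}) is $O(1)$-competitive against the offline optimum, so it is enough to bound the value $V(\mathcal{S})$ returned by \emph{OMZ} below by a constant fraction of $V(Z)$. (ii) The value of \emph{OMZ}'s output is at least $V(Z_2')$, the value accepted during the last stage, since the multiple-stage structure assigns the stages disjoint increments of budget summing to $B$ (which also re-confirms budget feasibility via Lemma~\ref{lemma:budget feasibility}) and value is only added from stage to stage; thus I only need a constant lower bound on $\mathbb{E}[V(Z_2')]$ (i.i.d.\ model) or on $V(Z_2')$ with constant probability (secretary model), in terms of $V(Z)$. (iii) Under the i.i.d.\ model the hypergeometric/random-order argument preceding Lemma~\ref{lemma:average case} gives $\mathbb{E}[V(Z_1')] \geq V(Z)/2$, and Lemma~\ref{lemma:average case} gives $\mathbb{E}[V(Z_2')] \geq (1/4)\,\mathbb{E}[V(Z_1')]$ for suitably chosen $\omega$ and $\delta$, so $\mathbb{E}[V(\mathcal{S})] \geq V(Z)/8$. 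Under the secretary model, Corollary~\ref{corollary} together with Lemma~\ref{lemma:value} gives $V(Z_1') \geq V(Z_1)/2 \geq V(Z)/8$ with constant probability, and Lemma~\ref{lemma:worst case} gives $V(Z_2') \geq (1/12)\,V(Z_1')$, so $V(\mathcal{S}) \geq V(Z)/96$ with constant probability. Composing with (i) yields $O(1)$-competitiveness in both models; and since the i.i.d.\ model is a special case of the secretary model, the $96$-factor bound in fact covers both, with the sharper $8$-factor available under the i.i.d.\ model.

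The nontrivial content has already been pushed into the appendix (Lemmas~\ref{lemma:average case}, \ref{lemma:worst case}, \ref{lemma:value} and Corollary~\ref{corollary}), so the obstacle in this last step is only bookkeeping: making precise that it is legitimate to compare \emph{OMZ}'s last-stage value $V(Z_2')$ with the offline quantity $V(Z)$ rather than with the offline optimum directly, that the budget spent by earlier stages is both within $B$ and irrelevant to the competitiveness lower bound, and that the constant-probability guarantees in the secretary case can be stated as a single competitive ratio (e.g.\ by conditioning on the events of Corollary~\ref{corollary} and of Lemma~\ref{lemma:worst case}). Once these points are spelled out, the theorem is the conjunction of the cited results.
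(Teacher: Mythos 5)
Your proposal is correct and follows essentially the same route as the paper, which establishes the theorem as the conjunction of Lemmas~\ref{lemma:computational efficiency}--\ref{lemma:consumer sovereignty}, the observation that time-truthfulness is vacuous when $a_i=d_i$, and the competitiveness chain through Singer's offline guarantee, Lemmas~\ref{lemma:average case}, \ref{lemma:Z1Z2}, \ref{lemma:value}, \ref{lemma:worst case} and Corollary~\ref{corollary}. One small correction to your bookkeeping: the stage-budgets $B'$ are cumulative rather than disjoint increments summing to $B$ (each stage's acceptance test is against $B'$ minus \emph{all} payments made so far, and the last stage's $B'$ equals $B$), but the fact you actually need, $V(\mathcal{S})\geq V(Z_2')$, follows directly from the monotonicity of $V$ and the winner set only growing over time.
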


\section{Online Mechanism under General Case}
\label{sec:general case}
In this section, we consider the general case where each user may have a \emph{non-zero arrival-departure interval}, and there may be multiple online users in the auction simultaneously.
Firstly, we change the settings of Example \ref{exmaple1} to show that the \emph{OMZ} mechanism is \emph{not} time-truthful under the general case.

\begin{example}
\label{example2}
All the settings are the same as Example \ref{exmaple1} except for that user 1 has a non-zero arrival-departure interval, $a_1<d_1$.
Specially, the type of user 1 is $\theta_1=(1,5,2)$.
\end{example}

In this example, if user 1 report its type truthfully, then it will obtain the payment 2 according to the \emph{OMZ} mechanism.
However, if user 1 delays announcing its arrival time and reports $\theta_1'=(5,5,2)$, then it will improve its payment to 8 according to the \emph{OMZ} mechanism (the detailed computing process is omitted).

In the following, we will present a new online mechanism, \emph{OMG}, and prove that it satisfies all six desirable properties under the general case.

\subsection{Mechanism Design}
In order to hold several desirable properties of the \emph{OMZ} mechanism, we adopt a similar algorithm framework under the general case.
At the same time, in order to guarantee the \emph{cost-} and \emph{time-truthfulness}, it is necessary to modify the \emph{OMZ} mechanism based on the following principles.
Firstly, any user is added to the sample set only when it departs; otherwise, the bid-independence will be destroyed if its arrival-departure time spans multiple stages, because a user can indirectly affect its payment now.
Secondly, if there are multiple users who have not yet departed at some time, we sort these online users according to their marginal values, instead of their marginal densities, and preferentially select those users with higher marginal value.
In this way, the bid-independence can be held.
Thirdly, whenever a new time step arrives, it scans through the list of users who have not yet departed and selects those whose marginal densities are not less than the current density threshold under the stage-budget constraint, even if some arrived much earlier.
At the departure time of any user who was selected as a winner, the user is paid for a price equal to the maximum price attained during the user's reported arrival-departure interval, even if this price is larger than the price at the time step when the user was selected as a winner.
\begin{algorithm}
\caption{Online Mechanism under General Case (\emph{OMG})}
\label{alg:OMG}
\KwIn{ Budget constraint $B$, deadline $T$}
$(t,T',B',\mathcal{S}',\rho^*,\mathcal{S}) \leftarrow ( 1, \frac{T}{2^{\lfloor \log_2 T \rfloor}}, \frac{B}{2^{\lfloor \log_2 T \rfloor}}, \emptyset, \epsilon, \emptyset)$\;
\While{$t\leq T$}
{
    Add all new users arriving at time step $t$ to a set of online users $\mathcal{O}$; $\mathcal{O}'\leftarrow \mathcal{O}\setminus S$\;
    \Repeat{$\mathcal{O}'=\emptyset$}
    {
        $i\leftarrow \arg\max_{j\in \mathcal{O}'}(V_j(\mathcal{S}))$\;

        \uIf{$b_i\leq V_i(\mathcal{S})/\rho^* \leq B'-\sum_{j \in \mathcal{S}} p_j$}
        {
            $p_i\leftarrow V_i(\mathcal{S})/\rho^*$; $\mathcal{S}\leftarrow \mathcal{S} \cup \{i\}$\;
        }
        \lElse
        {
            $p_i\leftarrow 0$\;
        }
        $\mathcal{O}'\leftarrow \mathcal{O}'\setminus \{i\}$\;
    }
    Remove all users departing at time step $t$ from $\mathcal{O}$, and add them to $\mathcal{S}'$\;
    \If{$t=\lfloor T' \rfloor$}
    {
        $\rho^*\leftarrow \textbf{GetDensityThreshold}(B',\mathcal{S}')$\;
        $T'\leftarrow 2T'$; $B'\leftarrow 2B'$; $\mathcal{O}'\leftarrow \mathcal{O}$\;
        \Repeat{$\mathcal{O}'=\emptyset$}
        {
            $i\leftarrow \arg\max_{j\in \mathcal{O}'}(V_j(\mathcal{S}\setminus \{j\}))$\;
            \If{$b_i\leq V_i(\mathcal{S}\setminus \{i\})/\rho^* \leq B'-\sum_{j \in \mathcal{S}} p_j +p_i$ {\bf and} $V_i(\mathcal{S}\setminus \{i\})/\rho^* > p_i$}
            {
                $p_i\leftarrow V_i(\mathcal{S}\setminus \{i\})/\rho^*$\;
                \lIf{$i \notin \mathcal{S}$}
                {
                    $\mathcal{S}\leftarrow \mathcal{S} \cup \{i\}$\;
                }
            }
            $\mathcal{O}'\leftarrow \mathcal{O}'\setminus \{i\}$\;
        }
    }
    $t\leftarrow t+1$\;
}
\end{algorithm}

According to the above principles, we design the \emph{OMG} mechanism satisfying all desirable properties under the general case, as described in Algorithm \ref{alg:OMG}.
Specially, we consider two cases as follows.

The first case is when the current time step $t$ is not at the end of any stage.
In this case, the density threshold remains unchanged.
The following operations (the lines 3-11 in Algorithm \ref{alg:OMG}) are performed.
Firstly, all new users arriving at time step $t$ are added to a set of online users $\mathcal{O}$.
Then we make decision on whether to select these online users one by one in the order of their marginal values; the users with higher marginal values will be selected first.
If an online user $i$ has been selected as a winner before time step $t$, we need not to make decision on it again because it is impossible to obtain a higher payment than before (to be proved later in Lemma \ref{lemma:truthfulness2}).
Otherwise, we need to make decision on it again: if its marginal density is not less than the current density threshold, and the allocated stage-budget has not been exhausted, it will be selected as a winner.
Meanwhile, we give user $i$ a payment $p_i=V_i(\mathcal{S})/\rho^*$, and add it to the set of selected users $\mathcal{S}$.
Finally, we remove all users departing at time step $t$ from $\mathcal{O}$, and add them to the sample set $\mathcal{S}'$.

The second case is when the current time step is just at the end of some stage.
In this case, the density threshold will be updated.
The mechanism works as the lines 13-22.
We need to make decision on whether to select these online users, and at what prices, one by one in the order of their marginal values, no matter whether they have ever been selected as the winners before time step $t$.
As shown in the lines 17-20, if user $i$ can obtain a higher payment than before, its payment will be updated.
Meanwhile, if user $i$ has never been selected as a winner before time step $t$, it will be added to the set $\mathcal{S}$.

Return to Example \ref{example2}.
If all of the five users report their types truthfully, then the \emph{OMG} mechanism works as follows.
\begin{itemize*}
    \item[$\diamond$] $t=1$: $(T',B',\mathcal{S}',\rho^*,\mathcal{S})=(1,2,\emptyset,1/2,\emptyset)$, $V_1(\mathcal{S})/b_1=1/2$, thus $p_1=2$, $\mathcal{S}=\{1\}$. Update the density threshold: $\rho^*=1/2$, $p_1$ remains unchanged.
    \item [$\diamond$] $t=2$: $(T',B',\mathcal{S}',\rho^*,\mathcal{S})=(2,4,\emptyset,1/2,\{1\})$, $V_2(\mathcal{S})/b_2=1/4$, thus $p_2=0$, $\mathcal{S}'=\{2\}$. Update the threshold density: $\rho^*=1/4$, increase $p_1$ to 4.
    \item [$\diamond$] $t=4$: $(T',B',\mathcal{S}',\rho^*,\mathcal{S})=(4,8,\{2\},1/4,\{1\})$, $V_3(\mathcal{S})/b_3=1/5$, thus $p_3=0$, $\mathcal{S}'=\{2,3\}$. Update the threshold density: $\rho^*=1/8$, increase $p_1$ to 8.
    \item [$\diamond$] $t=5$: user 1 departs, so $\mathcal{S}'=\{1,2,3\}$.
    \item [$\diamond$] $t=6$: $(T',B',\mathcal{S}',\rho^*,\mathcal{S})=(8,16,\{1,2,3\},1/8,\{1\})$, $V_4(\mathcal{S})/b_4=1$, thus $p_4=8$, $\mathcal{S}=\{1,4\}$, $\mathcal{S}'=\{1,2,3,4\}$.
    \item [$\diamond$] $t\!=\!7\!$: $\!(T',B',\mathcal{S}',\rho^*,\mathcal{S})\!=\!(8,16,\{1,2,3,4\},1/8,\{1,4\})$, $V_5(\mathcal{S})/b_5=1/3$, thus $p_5=0$, $\mathcal{S}'=\{1,2,3,4,5\}$.
\end{itemize*}

Thus, user 1 can obtain the payment 8 according to the \emph{OMG} mechanism.
Even if user 1 delays announcing its arrival time and reports $\theta_1'=(5,5,2)$, it still cannot improve its payment (the detailed computing process is omitted).
Therefore, the time-truthfulness can be guaranteed in this case.

\subsection{Mechanism Analysis}
It is easy to know that the \emph{OMG} mechanism holds the \emph{individual rationality} and the \emph{consumer sovereignty} as \emph{OMZ} (with almost the same proof).
Although it is hard to give a strict competitive ratio, it is easy to know that the \emph{OMG} mechanism still satisfies the \emph{constant competitiveness}, and only have slight value loss compared with \emph{OMZ}.
In the following, we prove that the \emph{OMG} mechanism also satisfies the \emph{computational efficiency}, the \emph{budget feasibility}, and most importantly, the \emph{cost-} and \emph{time-truthfulness}.
\begin{lemma}
\label{lemma:computational efficiency2}
The OMG mechanism is computationally efficient.
\end{lemma}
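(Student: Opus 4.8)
The plan is to mirror the proof of Lemma~\ref{lemma:computational efficiency} for \emph{OMZ}, since \emph{OMG} keeps the same algorithmic skeleton (one pass over time steps, a density threshold recomputed at the end of each of the $O(\log T)$ stages), and to account separately for the only genuinely new ingredients: the inner \textbf{Repeat} loops that scan the set of online users. As in the \emph{OMZ} case, because the mechanism runs online it suffices to bound the work done at a single time step $t\in\{1,\ldots,T\}$ and check that it is polynomial in $m$ and $n$; evaluating a marginal value $V_i(\mathcal{S})$ --- and likewise $V_i(\mathcal{S}\setminus\{i\})$, obtained by decrementing user $i$'s contribution to the maintained coverage state --- still costs $O(|\Gamma_i|)=O(m)$ time.

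The key new bound is for the two \textbf{Repeat} loops of Algorithm~\ref{alg:OMG}. In the first loop (the one that processes the current online users $\mathcal{O}'=\mathcal{O}\setminus\mathcal{S}$ at an ordinary time step), each iteration removes exactly one element from $\mathcal{O}'$, so the loop executes at most $|\mathcal{O}|$ times, and each iteration recomputes $\arg\max_{j\in\mathcal{O}'}V_j(\mathcal{S})$ from scratch --- necessary because $\mathcal{S}$ may have grown in the previous iteration --- at cost $O(m|\mathcal{O}|)$; hence this loop costs $O(m|\mathcal{O}|^2)$. The stage-end \textbf{Repeat} loop has the same structure --- at most $|\mathcal{O}|$ iterations, each an $\arg\max$ over the online users at cost $O(m|\mathcal{O}|)$, the only difference being that it re-examines users already in $\mathcal{S}$, which does not change the count --- so it also costs $O(m|\mathcal{O}|^2)$. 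The call to \textbf{GetDensityThreshold} still costs $O(m|\mathcal{S}'|\min\{m,|\mathcal{S}'|\})$, verbatim from Lemma~\ref{lemma:computational efficiency}. Adding these, the work at a time step is $O(m|\mathcal{O}|^2+m|\mathcal{S}'|\min\{m,|\mathcal{S}'|\})$, which is polynomial; in the worst case $|\mathcal{O}|\le n$ and $|\mathcal{S}'|\le n$ (the latter being $n/2$ with high probability at the last stage), giving a per-step bound of $O(mn^2)$, so \emph{OMG} is computationally efficient.

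The only delicate point --- and thus the step I expect to need the most care --- is arguing that the inner loops terminate after polynomially many iterations despite the fact that $\mathcal{S}$ and the payment vector are mutated inside them: one has to notice that the loop-variable set $\mathcal{O}'$ strictly loses one element per iteration regardless of how $\mathcal{S}$ evolves, so its size alone controls the iteration count, and that each re-evaluation of the $\arg\max$ is just a linear scan recomputing $O(m)$-time marginal values. Everything else is a routine restatement of the \emph{OMZ} running-time argument, so I would keep the write-up short and lean on Lemma~\ref{lemma:computational efficiency}.
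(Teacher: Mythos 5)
Your proof is correct and takes essentially the same approach as the paper's: bound the work at a single time step by accounting for the loop over the online users $\mathcal{O}$ plus the call to \textbf{GetDensityThreshold}, and observe that everything is polynomial in $m$ and $n$. The only difference is that you charge $O(m|\mathcal{O}|^2)$ for the \textbf{Repeat} loops (recomputing the $\arg\max$ in every iteration), whereas the paper states $O(m|\mathcal{O}|)$ for this part; your accounting is the more conservative one, and the conclusion of a polynomial per-step cost, hence computational efficiency, is unaffected.
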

\begin{proof}
Different from \emph{OMZ}, the \emph{OMG} mechanism needs to compute the allocations and payments of multiple online users at each time step.
Thus, the running time of computing the allocations and payments at each time step is bounded by $O(m|\mathcal{O}|)<O(mn)$, where $|\mathcal{O}|$ is the number of online users.
The complexity of computing the density threshold is the same as that of \emph{OMZ}.
Thus, the computation complexity at each time step is the same as that of \emph{OMZ}, i.e., bounded by $O(mn\min\{m,n\})$.
\end{proof}

\begin{lemma}
\label{lemma:budget feasibility2}
The OMG mechanism is budget feasible.
\end{lemma}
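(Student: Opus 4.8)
The plan is to mirror the budget‑feasibility argument for the OMZ mechanism (Lemma~\ref{lemma:budget feasibility}), tracking the running total payment $\sum_{j\in\mathcal{S}} p_j$ against the stage‑budget $B'$ currently in force. Since the stages and stage‑budgets of OMG are defined exactly as in OMZ — stage $i$ uses $B' = 2^{i-1}B/2^{\lfloor\log_2 T\rfloor}$, with the last stage $\lfloor\log_2 T\rfloor+1$ using $B' = B$ — it suffices to establish the invariant that at every point of the execution $\sum_{j\in\mathcal{S}} p_j \le B'$ (under the standard convention, as for OMZ, that the stage‑budget is capped at $B$, so the re‑processing performed at $t = T$ uses budget $B$). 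The deadline $T$ lies inside the last stage, so the invariant immediately gives $\sum_{i\in\mathcal{S}} p_i \le B$.

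First I would check the ordinary acceptance step (lines~3--11): either a user $i$ is accepted and assigned $p_i = V_i(\mathcal{S})/\rho^*$, or nothing changes. In the first case the guard $V_i(\mathcal{S})/\rho^* \le B' - \sum_{j\in\mathcal{S}} p_j$ is precisely what keeps the new running total at most $B'$; since the online users of $\mathcal{O}'$ are processed one at a time with the guard re‑evaluated against the up‑to‑date running sum, the invariant is preserved after each individual acceptance. Moreover, when a stage ends, $B'$ is doubled, so the invariant can only become easier to satisfy; this covers all transitions that do not themselves change payments.

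The hard part is the re‑processing block executed at a stage boundary (lines~16--22), because a user's payment may be \emph{increased} there, from $p_i$ to $p_i^{\mathrm{new}} = V_i(\mathcal{S}\setminus\{i\})/\rho^*$. Here I would argue that the guard $V_i(\mathcal{S}\setminus\{i\})/\rho^* \le B' - \sum_{j\in\mathcal{S}} p_j + p_i$ is exactly tailored to this: the right‑hand side is the budget slack available after tentatively withdrawing $i$'s current payment, so after replacement the running total becomes $\bigl(\sum_{j\in\mathcal{S}} p_j - p_i\bigr) + p_i^{\mathrm{new}} \le B'$. Processing the online users one by one with the running sum kept current, an induction on the processing order shows the invariant is maintained throughout the block, and the argument is uniform whether or not $i$ was already in $\mathcal{S}$ (if $i\notin\mathcal{S}$ then $p_i = 0$ and the bound simply reads $p_i^{\mathrm{new}} \le B' - \sum_{j\in\mathcal{S}} p_j$). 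Combining the three cases, the invariant holds for the entire run; evaluating it at time $T$, where $B' = B$, yields $\sum_{i\in\mathcal{S}} p_i \le B$, which is budget feasibility.
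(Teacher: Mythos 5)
Your proof is correct and follows essentially the same route as the paper's: both arguments maintain the invariant $\sum_{j\in\mathcal{S}}p_j\le B'$ by observing that the ordinary acceptance guard (lines 6--7) and the re-processing guard (lines 17--18), whose $+p_i$ term accounts for withdrawing the user's previously assigned payment before charging the new one, each preserve it, and then evaluate the invariant at the last stage where $B'=B$. The paper states this more tersely, but the content is the same.
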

\begin{proof}
From the lines 6-7 and 17-18 of Algorithm \ref{alg:OMG}, we can see that it is guaranteed that the current total payment does not exceed the stage-budget $B'$.
Note that in the line 17, $p_i$ is the price paid for user $i$ in the previous stage instead of the current stage, so it cannot lead to the overrun of the current stage-budget.
Therefore, every stage is budget feasible, and when the deadline $T$ arrives, the total payment does not exceed $B$.
\end{proof}

\begin{lemma}
\label{lemma:truthfulness2}
The OMG mechanism is cost- and time-truthful.
\end{lemma}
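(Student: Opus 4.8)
The plan is to split the statement into its two halves. For \emph{cost-truthfulness} I would invoke the online bid-independence characterization (\cite{bar2002incentive}, Proposition~2.1): it suffices to exhibit, for each user $i$, a price schedule depending only on the other users' reports $\hat\theta_{-i}$ such that $i$ wins iff $b_i$ is at most that price and, when winning, is paid exactly that price. For \emph{time-truthfulness} I would rule out the only two time deviations the strategy space permits, a later reported arrival $\hat a_i>a_i$ and an earlier reported departure $\hat d_i<d_i$ (earlier arrival or later departure are excluded by assumption), by showing each weakly decreases $u_i$. Throughout I fix $\hat\theta_{-i}$ and track the sequence of ``take-it-or-leave-it'' quotes OMG ever makes to $i$.

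First I would pin down that quote sequence. Over the reported window $[\hat a_i,\hat d_i]$, user $i$ is examined (a) once, at the first time step or stage boundary at which it is eligible, where the quote is $V_i(\mathcal{S})/\rho^*$ with $i\notin\mathcal{S}$, and (b) at every subsequent stage boundary inside $[\hat a_i,\hat d_i]$, where the quote is $V_i(\mathcal{S}\setminus\{i\})/\rho^*$; at non-boundary steps a user already in $\mathcal{S}$ is skipped, since line~3 restricts attention to $\mathcal{O}'=\mathcal{O}\setminus\mathcal{S}$. Call these quotes $\pi^{(1)},\pi^{(2)},\dots$ The heart of the proof is the claim that each $\pi^{(\ell)}$ is a function of $\hat\theta_{-i}$ alone --- independent of $b_i$, and of $\hat a_i,\hat d_i$ beyond the mere fact that the step lies inside the window. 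Granting this, $i$ is selected iff $b_i\le\max_\ell\pi^{(\ell)}$, and a selected $i$ is paid exactly $\max_\ell\pi^{(\ell)}$ (the mechanism only ever raises $p_i$ to a current quote), which is precisely the bid-independent form with schedule $p'(\hat\theta_{-i})=\max_\ell\pi^{(\ell)}$; cost-truthfulness then follows from the Proposition, exactly as in Lemma~\ref{lemma:cost-truthfulness} but with the ``$\max$ over quotes'' replacing a single quote. For time-truthfulness, shrinking the window (by reporting a later $\hat a_i$ or earlier $\hat d_i$) can only delete entries from $\{\pi^{(\ell)}\}$ while leaving the surviving quotes unchanged --- here the feature ``a user joins $\mathcal{S}'$ only upon departure'' ensures no $\rho^*$ used inside the window ever reflects $i$ itself. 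Hence $\max_\ell\pi^{(\ell)}$ weakly decreases, so a winning $i$'s payment weakly drops and a deviation can only convert a winner into a loser, never the reverse; in every case $u_i$ weakly decreases, so reporting $(c_i,a_i,d_i)$ is dominant.

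The real work, and the main obstacle, is the core claim that $(\pi^{(\ell)})$ is insensitive to $b_i$, $\hat a_i$, $\hat d_i$. It rests on the three features that distinguish OMG from OMZ: (i) a user enters the sample set $\mathcal{S}'$ only at departure, so every \textbf{GetDensityThreshold} call made while $i$ is active --- hence every $\rho^*$ used to quote $i$ --- runs on a sample not containing $i$; (ii) within a time step, online users are processed in the bid-independent order of decreasing marginal \emph{value} $V_j(\mathcal{S})$ rather than marginal density, so $i$'s slot and the selected set built up to that slot evolve without reference to $b_i$; and (iii) the re-quote at a boundary, $V_i(\mathcal{S}\setminus\{i\})/\rho^*$, explicitly nets out $i$'s own contribution. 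I would convert this into an induction over time steps: across any two runs differing only in $b_i$ (or only in $\hat a_i,\hat d_i$), the states $\mathcal{S}\setminus\{i\}$ and $\rho^*$ agree at every moment $i$ is examined. The delicate inductive case --- the crux --- is a step at which $i$ has already been selected: $i$ then holds a slot with payment $p_i$ charged against the stage-budget, and one must verify that the bookkeeping (the ``$+p_i$'' correction in the budget test of line~17, together with the skipping of $i$ at non-boundary steps) makes the downstream acceptances of the \emph{other} users, and therefore the set $\mathcal{S}\setminus\{i\}$ observed at the next boundary, independent of whether and at what price $i$ was taken earlier. Nailing down this invariant cleanly is exactly what makes the OMG truthfulness proof substantive where the OMZ one was essentially a single line.
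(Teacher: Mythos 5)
There is a genuine gap, and it sits exactly where you located the ``crux.'' Your argument reduces both halves of the lemma to the claim that the quote sequence $\pi^{(1)},\pi^{(2)},\dots$ is a function of $\hat\theta_{-i}$ alone, which in turn requires the invariant that $\mathcal{S}\setminus\{i\}$ and $\rho^*$ agree, at every moment $i$ is examined, across runs differing only in $b_i$ (or in the reported window). That invariant is false for OMG. If $i$ is accepted early at price $p_i$, that price is charged against the stage-budget, and every \emph{subsequent} acceptance test for another user $j$ at a non-boundary step is $b_j\le V_j(\mathcal{S})/\rho^*\le B'-\sum_{k\in\mathcal{S}}p_k$, which includes $p_i$; the ``$+p_i$'' correction you point to appears only in the boundary re-quote of $i$ itself (line 17 of Algorithm \ref{alg:OMG}), not in the tests applied to other users. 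Moreover, once $i\in\mathcal{S}$, submodularity depresses the marginal values $V_j(\mathcal{S})$ of the other online users, changing both their processing order and their acceptance outcomes. So whether and at what price $i$ was taken genuinely alters which other users enter $\mathcal{S}$, hence alters $\mathcal{S}\setminus\{i\}$ and the later quotes to $i$. The induction you propose cannot close, and the clean reduction to a single bid-independent schedule $p'(\hat\theta_{-i})=\max_\ell\pi^{(\ell)}$ is not available.

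The paper's proof does not claim this invariance. It establishes only single-step bid-independence (its Proposition (a): fixing the \emph{current} state $\rho^*_t,B'_t$, the decision at $t$ is a take-it-or-leave-it offer), and then handles the cross-time dependence by a monotonicity argument rather than an invariance argument: $\rho^*_{t'}$ is independent of $b_i$ because $i$ joins the sample set only on departure, so the only channel of influence is the residual budget $B'_{t'}$; a deviation either leaves the time-$t$ outcome unchanged (so $B'_{t'}$ is unchanged) or causes $i$ to lose at $t$, in which case the freed budget is allocated to others and $B'_{t'}$ can only shrink from $i$'s perspective --- so the later quote cannot improve. Time-truthfulness is argued the same way (Proposition (b)): the payment is the maximum quote over the reported window, and shrinking the window can only delete the maximizer. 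To repair your proof you would need to replace the ``quotes are identical'' claim with this weaker ``quotes at surviving times are no better under deviation'' claim, which is a dominance comparison between two different executions, not an equality of states.
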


The proof of Lemma \ref{lemma:truthfulness2} is given in Appendix E.

\begin{theorem}
The OMG mechanism satisfies computational efficiency, individual rationality, budget feasibility, truthfulness, consumer sovereignty, and constant competitiveness under the general case.
\end{theorem}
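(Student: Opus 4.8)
The plan is to observe that the theorem is the conjunction of six properties, three of which have already been isolated as lemmas for \emph{OMG}: computational efficiency (Lemma~\ref{lemma:computational efficiency2}), budget feasibility (Lemma~\ref{lemma:budget feasibility2}), and cost- and time-truthfulness (Lemma~\ref{lemma:truthfulness2}). So the proof of the theorem is essentially an assembly step for these three, together with short arguments for the remaining three properties --- individual rationality, consumer sovereignty, and constant competitiveness --- which I would handle by mirroring the corresponding arguments for \emph{OMZ}.

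For \emph{individual rationality} I would argue exactly as in Lemma~\ref{lemma:individual rationality}: inspecting lines 6--8 and 17--20 of Algorithm~\ref{alg:OMG}, a user $i$ receives a positive payment only through an assignment $p_i \leftarrow V_i(\cdot)/\rho^*$ guarded by the test $b_i \leq V_i(\cdot)/\rho^*$, and the re-pricing step only ever raises $p_i$; hence every winner has $p_i \geq b_i$, and by truthfulness (Lemma~\ref{lemma:truthfulness2}) $b_i = c_i$, so $u_i = p_i - c_i \geq 0$, while every non-winner has $u_i = 0$. For \emph{consumer sovereignty} I would reproduce Lemma~\ref{lemma:consumer sovereignty}: because \emph{OMG} also runs a multiple-stage sampling-accepting process, no user is ever placed in a pure ``sampling-only'' batch --- within its reported interval every online user is (re)examined for acceptance at every time step and at every stage boundary (lines 4--11 and 15--21), so if its bid is sufficiently low while the others are fixed it is selected and paid.

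The substantive part is \emph{constant competitiveness}, and here I would reduce to the already-established competitiveness of \emph{OMZ} (Lemmas~\ref{lemma:average case} and~\ref{lemma:worst case}) by showing that the value collected by \emph{OMG} is within a constant factor of what the \emph{OMZ}-style analysis delivers. The key observations are that \emph{OMG} uses the same $(\lfloor \log_2 T\rfloor+1)$-stage partition, the same geometrically growing stage-budgets, and the same \textbf{GetDensityThreshold} subroutine, so the density threshold used at the last stage is again a slight underestimate of the density of a proportional-share solution built from a constant fraction (in expectation, or with constant probability) of the users in $Z$; this is where the i.i.d./secretary randomness of the arrival order is used, just as in Corollary~\ref{corollary} and the discussion preceding Lemma~\ref{lemma:average case}. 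The three modifications relative to \emph{OMZ} must then be accounted for: (i) a user joins the sample set at its \emph{departure} rather than its arrival, which under a uniformly random order still leaves a constant fraction of $Z$'s value in the last-stage sample set; (ii) online users are processed in order of marginal \emph{value} rather than marginal \emph{density}, which by submodularity and the fact that every accepted user still passes the density-threshold filter changes the collected value by at most a constant factor; and (iii) winners may be re-priced upward, which only increases $\sum_{i\in\mathcal S}p_i$ --- but since each re-priced user's new price is itself bounded by the prevailing stage-budget and the stage-budgets double from one stage to the next, the extra budget consumed, and hence the value forgone, costs at most a constant factor. Combining these with the \emph{OMZ} bounds yields an $O(1)$ competitive ratio against the offline proportional-share mechanism, and hence against the offline optimum.

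I expect the main obstacle to be items (ii) and (iii) together: making rigorous the claim that switching to a marginal-value ordering and allowing upward re-pricing within a stage degrades the collected value by only a constant factor --- the paper asserts this is ``easy to see'' and incurs merely ``slight value loss,'' but a fully rigorous bound requires carefully charging the re-pricing surcharge against the stage-budget and arguing that greedy-by-value selection under a fixed density threshold cannot be beaten by more than a constant factor by greedy-by-density under the same threshold. Everything else is a direct transcription of the \emph{OMZ} arguments.
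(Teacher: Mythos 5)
Your proposal takes essentially the same route as the paper: the theorem is assembled from Lemmas~\ref{lemma:computational efficiency2}, \ref{lemma:budget feasibility2} and \ref{lemma:truthfulness2}, with individual rationality and consumer sovereignty carried over from the \emph{OMZ} arguments almost verbatim, and constant competitiveness justified by appeal to the \emph{OMZ} analysis with only ``slight value loss.'' In fact your sketch of the competitiveness step (itemizing the three modifications --- sampling at departure, marginal-value ordering, upward re-pricing --- and charging each against a constant factor) is more detailed than the paper's own treatment, which simply asserts the property without proof; the obstacle you flag is real, but it is a gap the paper shares rather than one you introduce.
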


\section{Performance Evaluation}
\label{sec:performance evaluation}
To evaluate the performance of our online mechanisms, we implemented the \emph{OMZ} and \emph{OMG} mechanisms, and compared them against the following three benchmarks.
The first benchmark is the (approximate) \emph{optimal} offline solution which has full knowledge about all users' types.
The problem in this scenario is essentially a \emph{budgeted maximum coverage problem}, which is a well-known NP-hard problem.
It is known that a greedy algorithm provides a $(1-1/e)$-approximation solution \cite{khullera1999budgeted}.
The second benchmark is the \emph{proportional share} mechanism in the offline scenario (Algorithm \ref{alg:offline}).
The third benchmark is the \emph{random} mechanism, which adopts a naive strategy, i.e., rewards users based on an uninformed fixed threshold density.
The performance metrics include the \emph{running time}, the \emph{crowdsourcer's value}, and the \emph{user's utility}.
\subsection{Simulation Setup}
The application scenario introduced in Section \ref{sec:introduction} is considered in our simulation.
Specially, we set the same simulation scenario as the reference \cite{sheng2012energy}, where a WiFi signal sensing application is considered.
As shown in Fig. \ref{fig-RoI} obtained from the Google Map, the RoI is located at Manhattan, NY, which spans 4 blocks from west to
east with a total length of 1.135km and 4 blocks from south to north with a total width of 0.319 km, and includes the 6th,7th,8th Avenues and the 45th, 46th, 47th Streets.
We divide each road in the RoI into multiple discrete PoIs with a uniform spacing of 1 meter, so the RoI consists of 4353 PoIs ($m=4352$) in total.
Without loss of generality, let the coverage requirement of each PoI be 1.
We set the deadline ($T$) to 1800s, and vary the budget ($B$) from 100 to 10000 with the increment of 100.
Users arrive according to a Poisson process in time with arrival rate $\lambda$.
We vary $\lambda$ from 0.2 to 1 with the increment of 0.2.
Whenever a user arrives, it is placed at a random location on the roads of the RoI.
The \emph{OMZ} mechanism is implemented under the zero arrival-departure interval case, and the \emph{OMG} mechanism is implemented under the general case where the arrival-departure interval of each user is uniformly distributed over $[0,300]$ seconds.
The sensing range ($R$) of each sensor is set to 7 meters.
The cost of each user is uniformly distributed over $[1,10]$.
The initial density threshold ($\epsilon$) of Algorithm \ref{alg:OMZ} and \ref{alg:OMG} is set to 1.
As we proved in Lemma \ref{lemma:average case}, when $\delta=4$ the OMZ mechanism is $O$(1)-competitive for sufficiently large $\omega$.
Meanwhile we note that $\omega$ increases with the number of users who have arrived.
Thus, for Algorithm \ref{alg:OMZ} and \ref{alg:OMG}, we set $\delta=1$ initially, and change it to $\delta=4$ once the size of the sample set exceeds a specified threshold.
Note that this threshold could be an empirical value for real applications.
In our simulation, we set this threshold to 240, because we observe that each user's value is at most $1/100$ of the total value when the number of users is larger than 240.
For the \emph{random} mechanism, we obtain the average performance of 50 such solutions for evaluations, where in each solution the threshold density was chosen at random from the range of 1 to 29 \footnotemark[1].

All the simulations were run on a PC with 1.7 GHz CPU and 8 GB memory.
Each measurement is averaged over 100 instances.
\footnotetext[1]{Each user can cover at most 29 PoIs, and its bid is at least 1, so its marginal density is at most 29.}
\begin{figure}[!t]
\centering{
\includegraphics[width=3.5in]{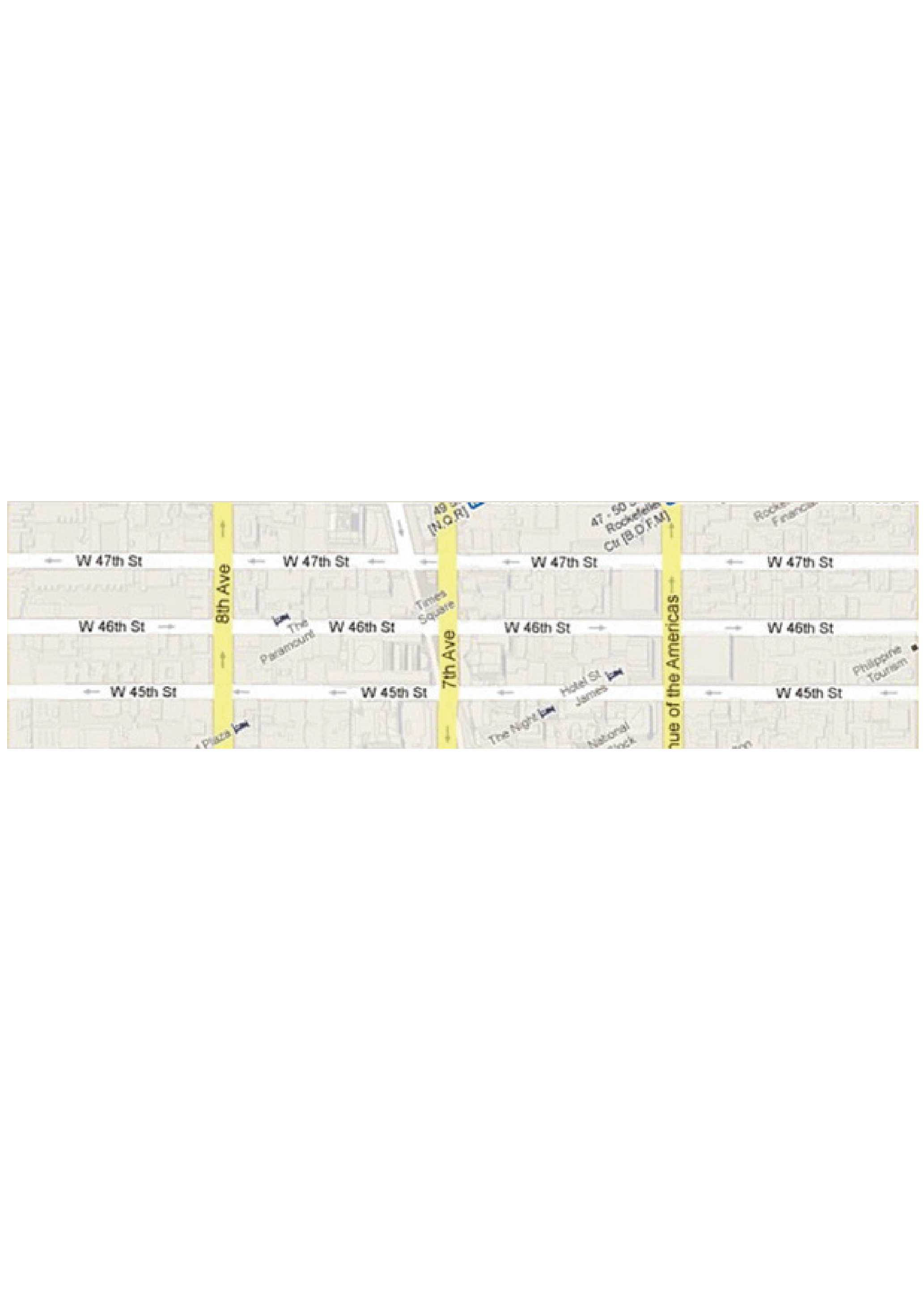}}
\caption{The region of interest.}
\label{fig-RoI}
\vspace{-10pt}
\end{figure}
\subsection{Evaluation Results}
\underline{Running Time}:
Fig. \ref{fig_runningTime} shows the running time of the \emph{OMZ} and \emph{OMG} mechanisms.
Specially, Fig. \ref{fig_runningTime1} plots the running time at different stages while $\lambda=0.6$\footnotemark[2].
\footnotetext[2]{As we proved in Lemma \ref{lemma:computational efficiency}, the computation complexity is dominated by computing
the density threshold, so only the running time at the end time of each stage is plotted.}
Fig. \ref{fig_runningTime2} plots the running time at the last stage with different arrival rates.
Both the \emph{OMZ} and \emph{OMG} mechanisms have similar performance while \emph{OMG} outperforms \emph{OMZ} slightly.
Note that the size of the sample set ($S'$) increases linearly with the time $t$ and the arrival rate $\lambda$, so Fig. \ref{fig_runningTime} implies the relationship between the running time and the number of users arriving before $T$.
Thus, from Fig. \ref{fig_runningTime} we can infer that the running time increases linearly with the number of users ($n$), which is consistent with our analysis in Section \ref{subsec:mechanism anlysis}.
\begin{figure}[!t]
  \centering{
  \subfigure[At different stages ($\lambda=0.6$) ]{
    \label{fig_runningTime1}
    \includegraphics[width=1.70in]{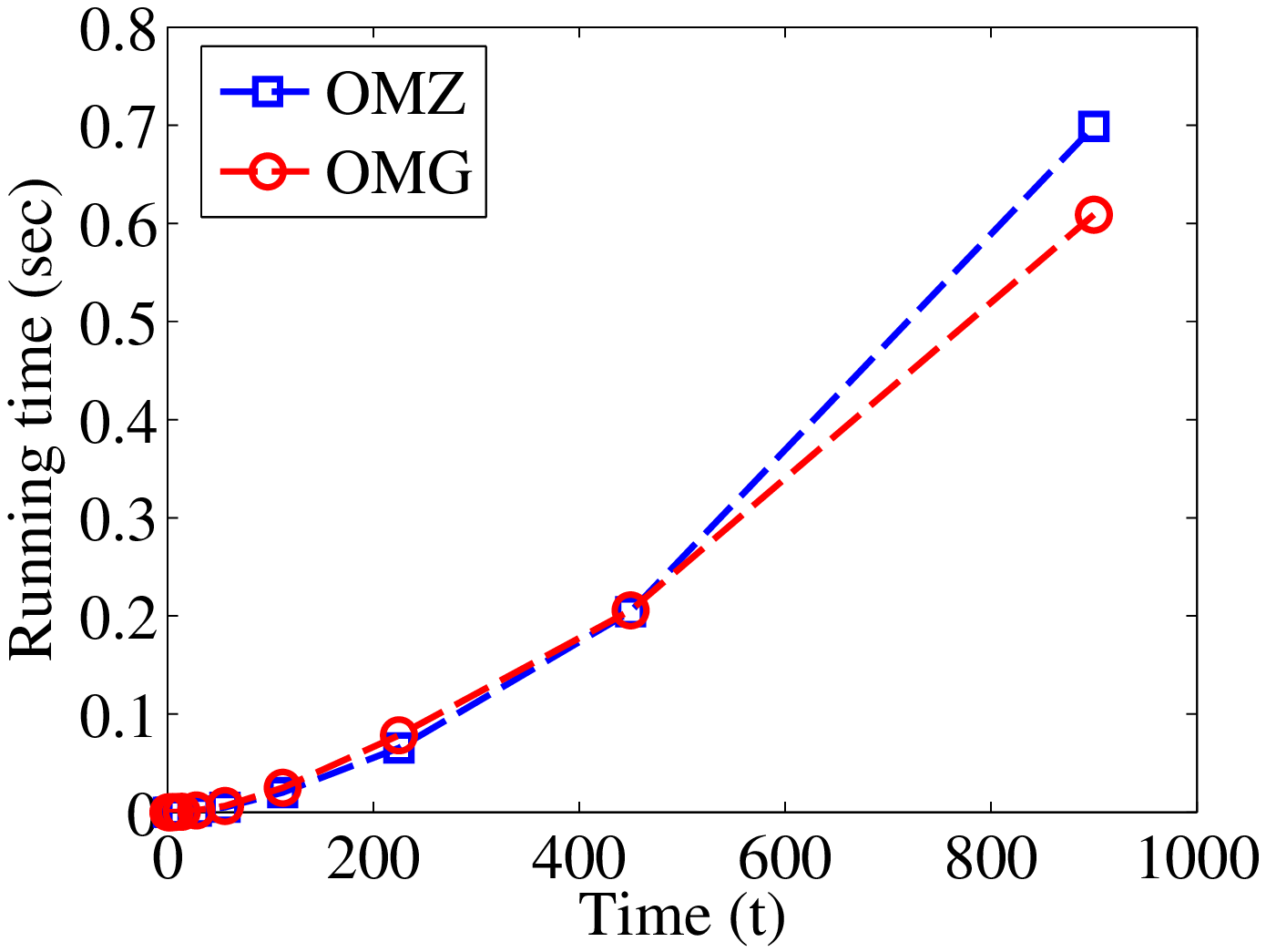}\hspace{-5pt}}
  \subfigure[Impact of $\lambda$ (at the last phase)]{
    \label{fig_runningTime2}
    \includegraphics[width=1.70in]{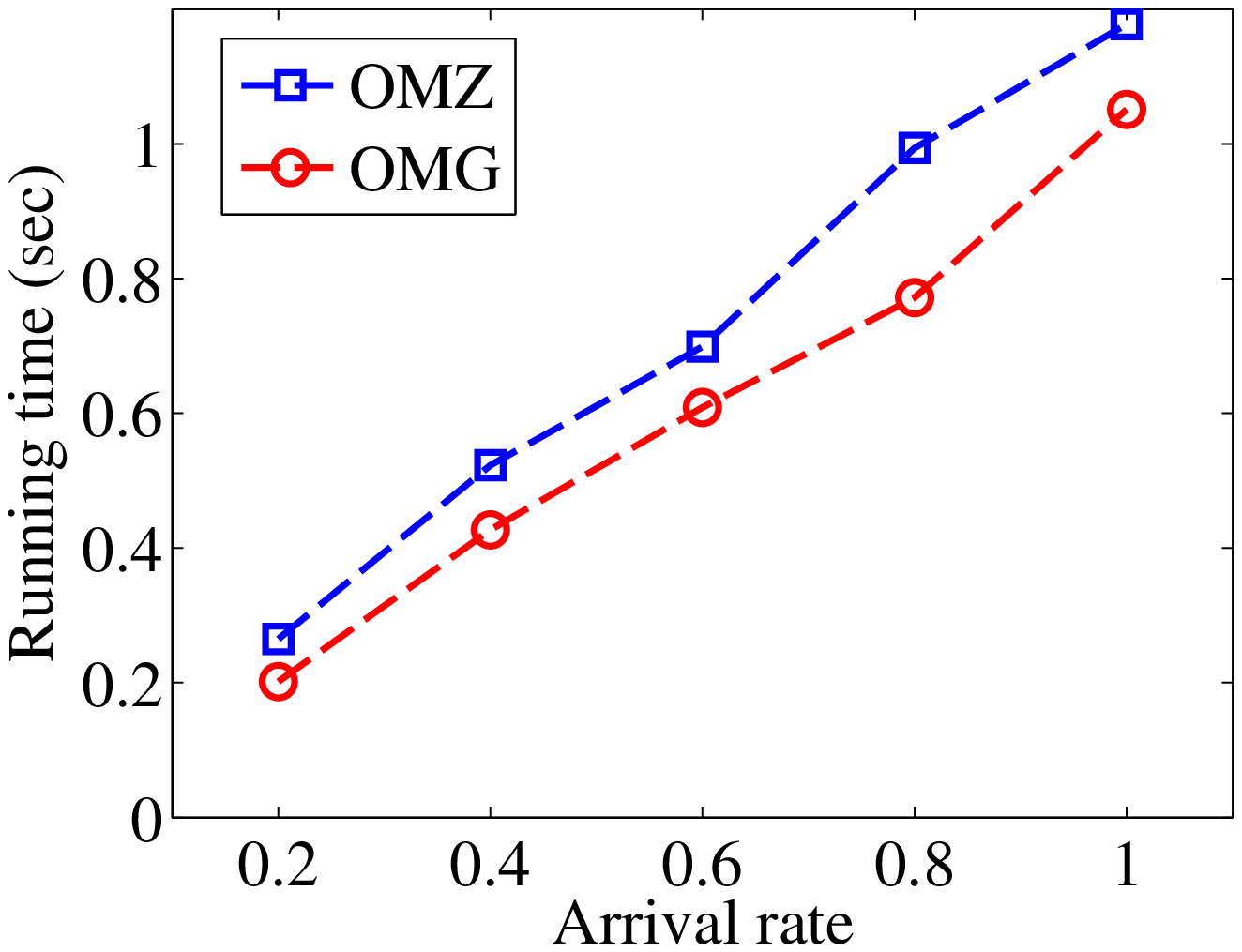}\hspace{-5pt}}
  }
  \caption{Running Time.}
  \label{fig_runningTime} 
  \vspace{-10pt}
\end{figure}
\begin{figure}[!t]
  \centering{
  \subfigure[Impact of $B$ ($\lambda=0.6$) ]{
    \label{fig_value1}
    \includegraphics[width=2.5in]{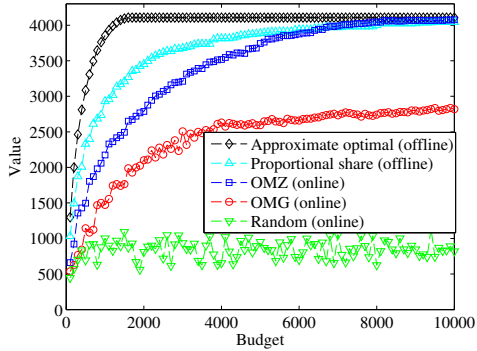}\hspace{-5pt}}
  \subfigure[Impact of $\lambda$ ($B=2000$)]{
    \label{fig_value2}
    \includegraphics[width=2.5in]{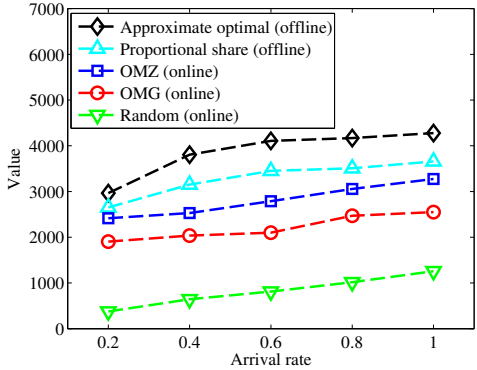}\hspace{-5pt}}
  }
  \caption{Crowdsourcer's value.}
  \label{fig_value} 
  \vspace{-10pt}
\end{figure}

\underline{Crowdsourcer's Value}:
Fig. \ref{fig_value} compares the crowdsourcer's value achieved by the \emph{OMZ} and \emph{OMG} mechanisms against the three benchmarks.
From Fig. \ref{fig_value1} we can observe that the crowdsourcer obtains higher value when the budget constraint increases.
From Fig. \ref{fig_value2} we can observe that the crowdsourcer obtains higher value when more users participate.
The \emph{approximate optimal} mechanism and the \emph{proportional share} mechanism operate in the offline scenario, where the true types or strategies of all users are known a priori, and will therefore always outperform the \emph{OMZ} and \emph{OMG} mechanisms.
It is shown that the \emph{proportional share} mechanism sacrifices some value of the crowdsourcer to achieve the cost-truthfulness compared with the \emph{approximate optimal} mechanism, and the \emph{OMG} mechanism also sacrifices some value to achieve the time-truthfulness compared with the \emph{OMZ} mechanism.
We can also observe that both the \emph{OMZ} and \emph{OMG} mechanisms are guaranteed to be within a constant factor of the offline solutions.
Specially, although the \emph{OMZ} and \emph{OMG} mechanisms are only guaranteed to be within a competitive factor of at least 8 of the \emph{proportional share} solution in expectation, as we proved in Lemma \ref{lemma:average case}, the simulation results show that this ratio is almost as small as 1.6 for \emph{OMZ} or 2.4 for \emph{OMG}.
As compared to the \emph{approximate optimal} solution, this ratio is still below 2.2 for \emph{OMZ} or below 3.4 for \emph{OMG}.
In addition, we can see that the \emph{OMZ} and \emph{OMG} mechanisms largely outweigh the \emph{random} mechanism.
\begin{figure}[!t]
  \centering{
  \subfigure[$\!\!(a_{130},\!d_{130},\!c_{130})\!\!=\!\!(269,\!269,\!3)\!$]{
    \label{fig_cost-truthful1}
    \includegraphics[width=1.70in]{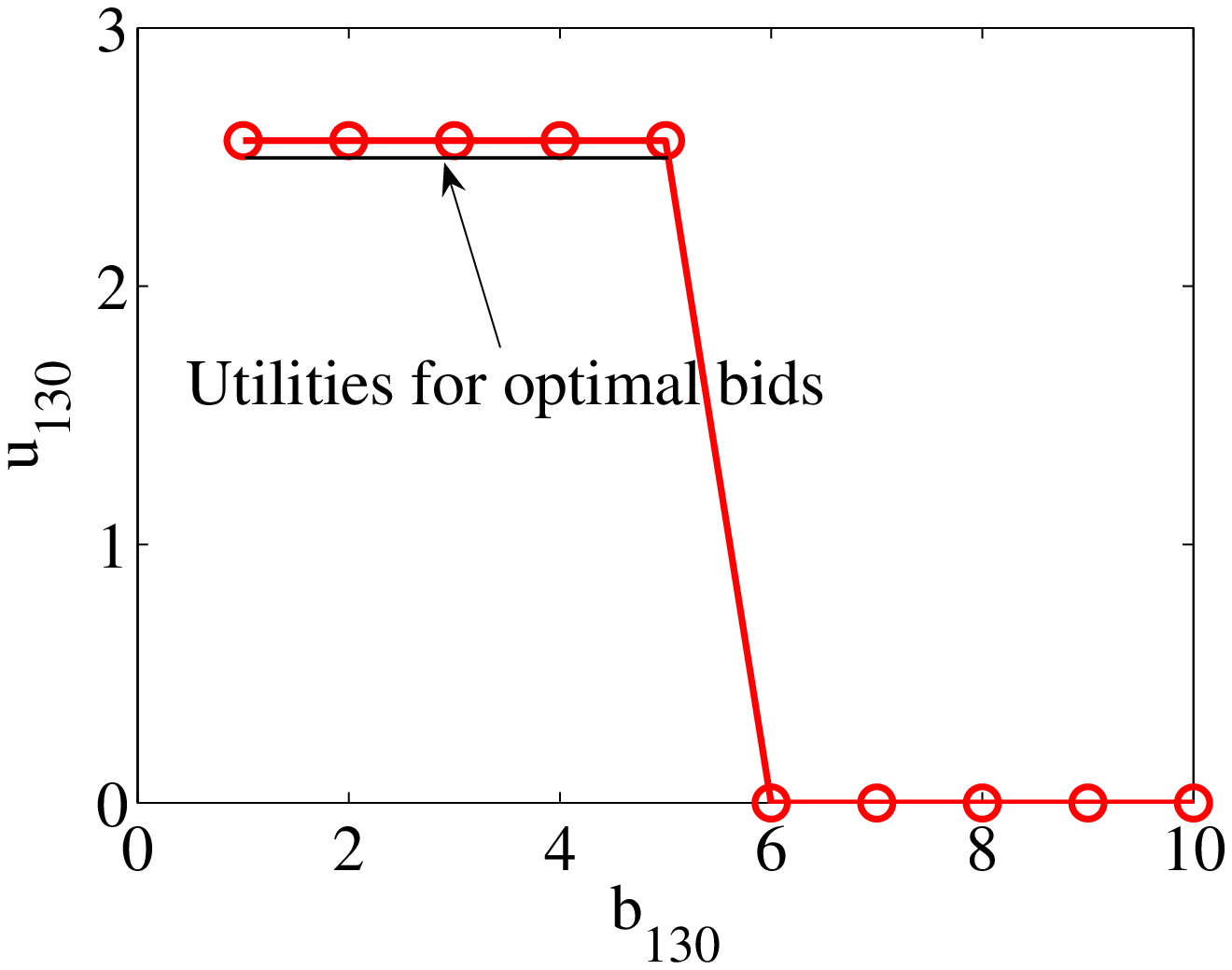}\hspace{-5pt}}
  \subfigure[$\!\!(\!a_{591},\!d_{591},\!c_{591}\!)\!\!=\!\!(\!1260,\!1260,\!8\!)\!\!$]{
    \label{fig_cost-truthful2}
    \includegraphics[width=1.70in]{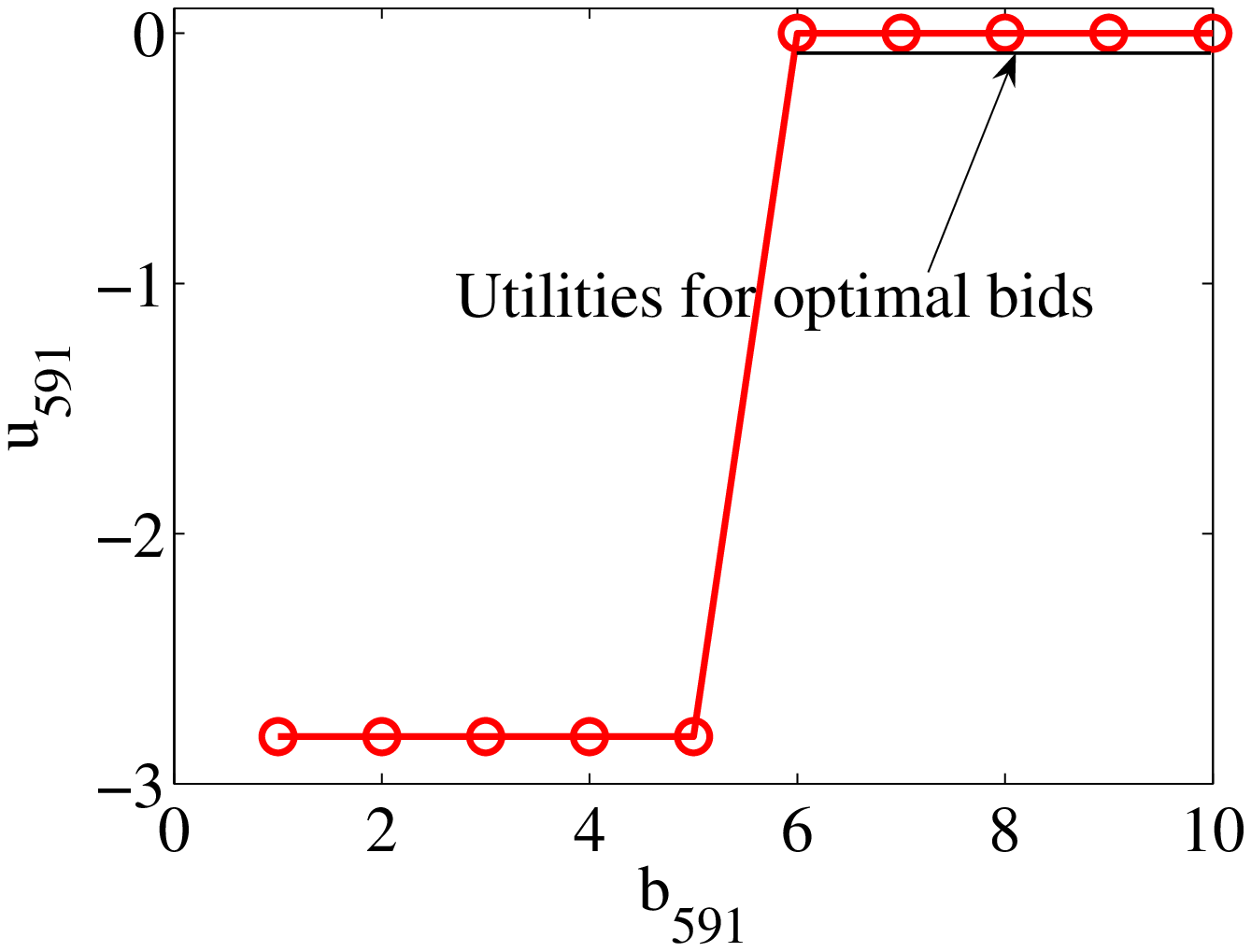}\hspace{-5pt}}
  }
  \caption{Cost-truthfulness of \emph{OMZ}.}
  \label{fig_cost-truthful} 
  \vspace{-10pt}
\end{figure}
\begin{figure}[!t]
  \centering{
  \subfigure[$(a_{17},d_{17},c_{17})\!\!=\!\!(50,123,6)$, $\hat{d}_{17}=123$, $b_{17}=6$]{
    \label{fig_time-truthful1}
    \includegraphics[width=1.70in]{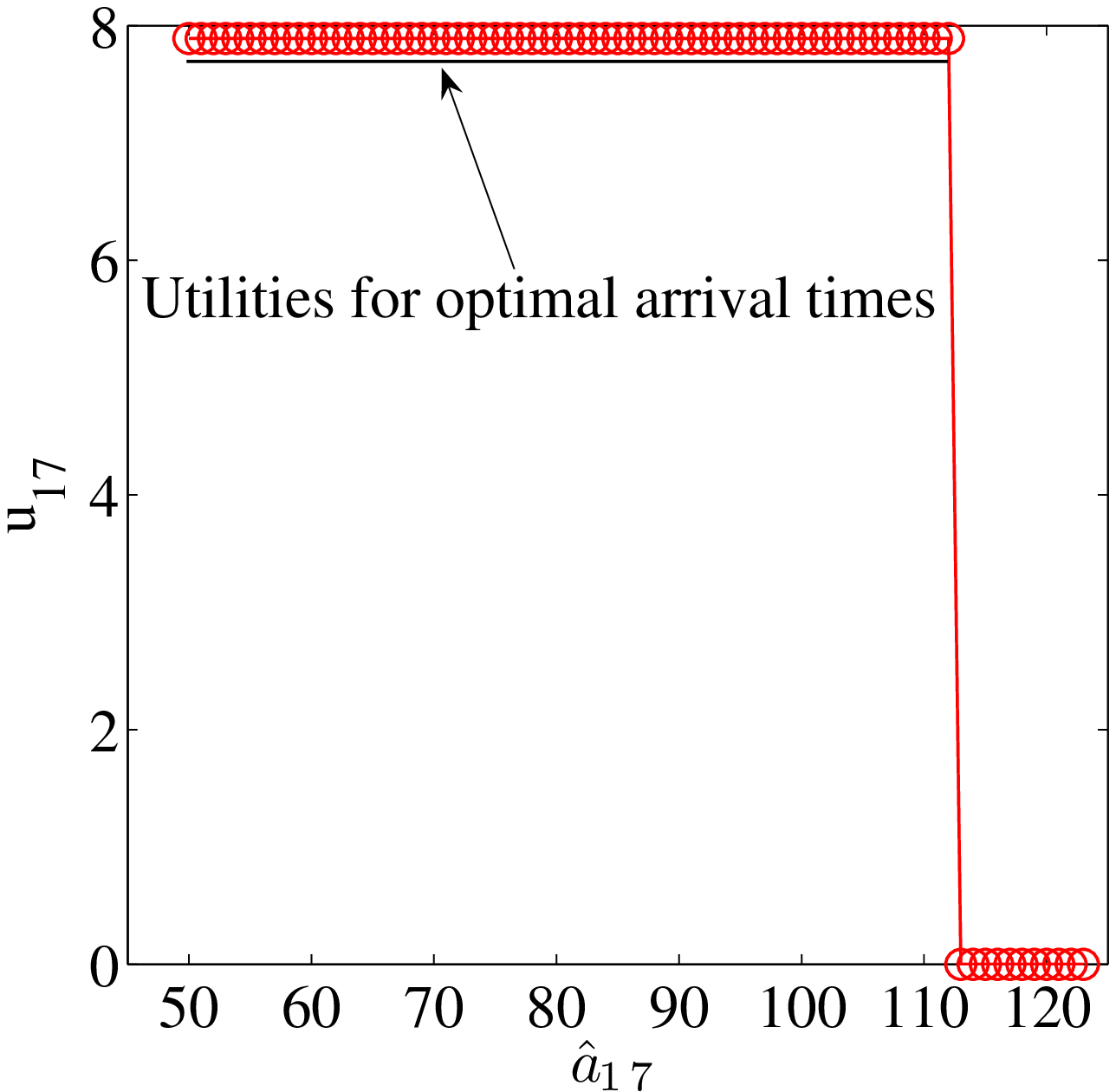}\hspace{-5pt}}
  \subfigure[$(a_{17},d_{17},c_{17})=(50,123,6)$, $\hat{a}_{17}=50$, $b_{17}=6$]{
    \label{fig_time-truthful2}
    \includegraphics[width=1.70in]{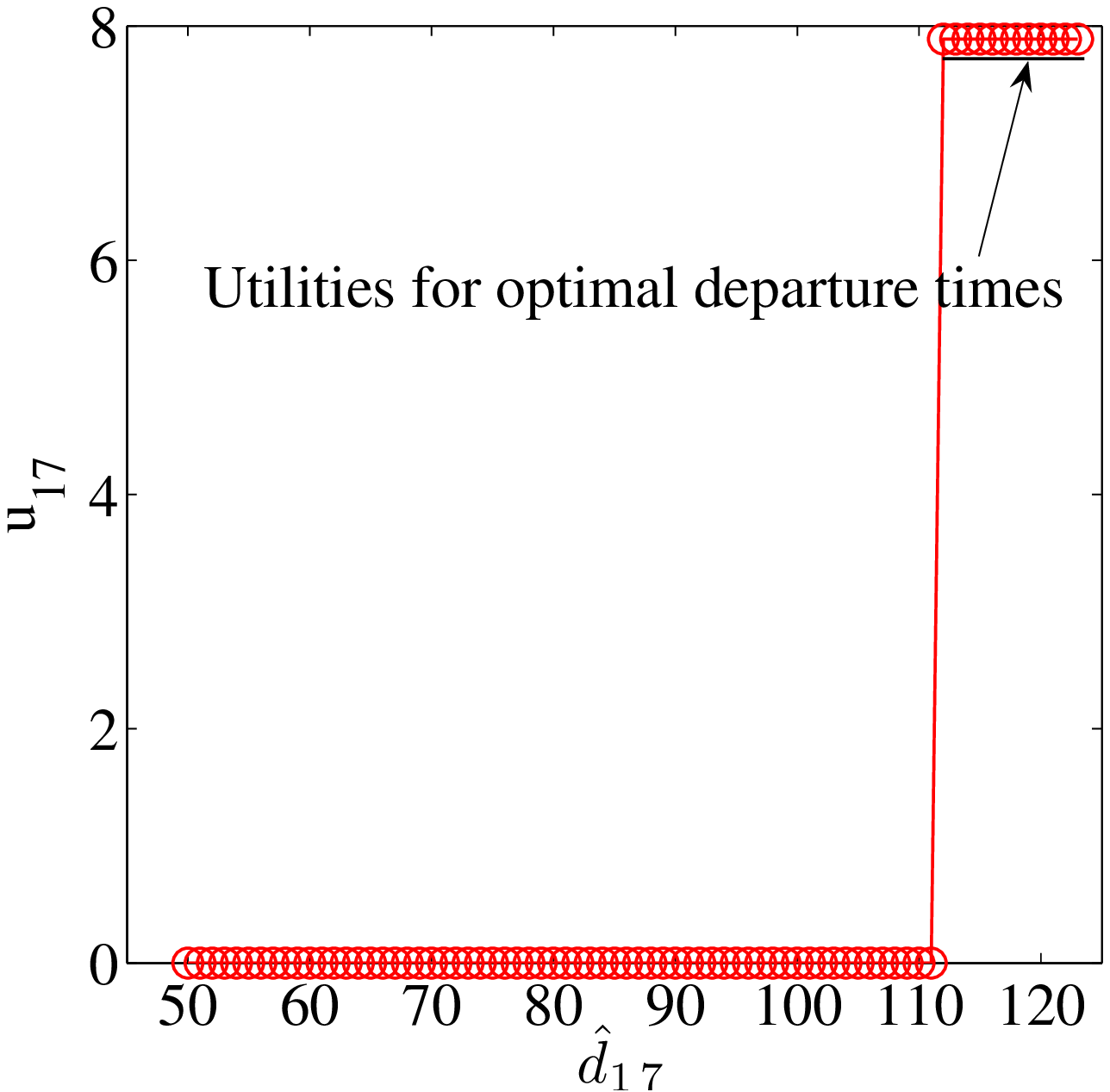}\hspace{-5pt}}
    \subfigure[$(a_{85},d_{85},c_{85})=(201,476,4)$, $\hat{d}_{85}=476$, $b_{85}=4$]{
    \label{fig_time-truthful3}
    \includegraphics[width=1.70in]{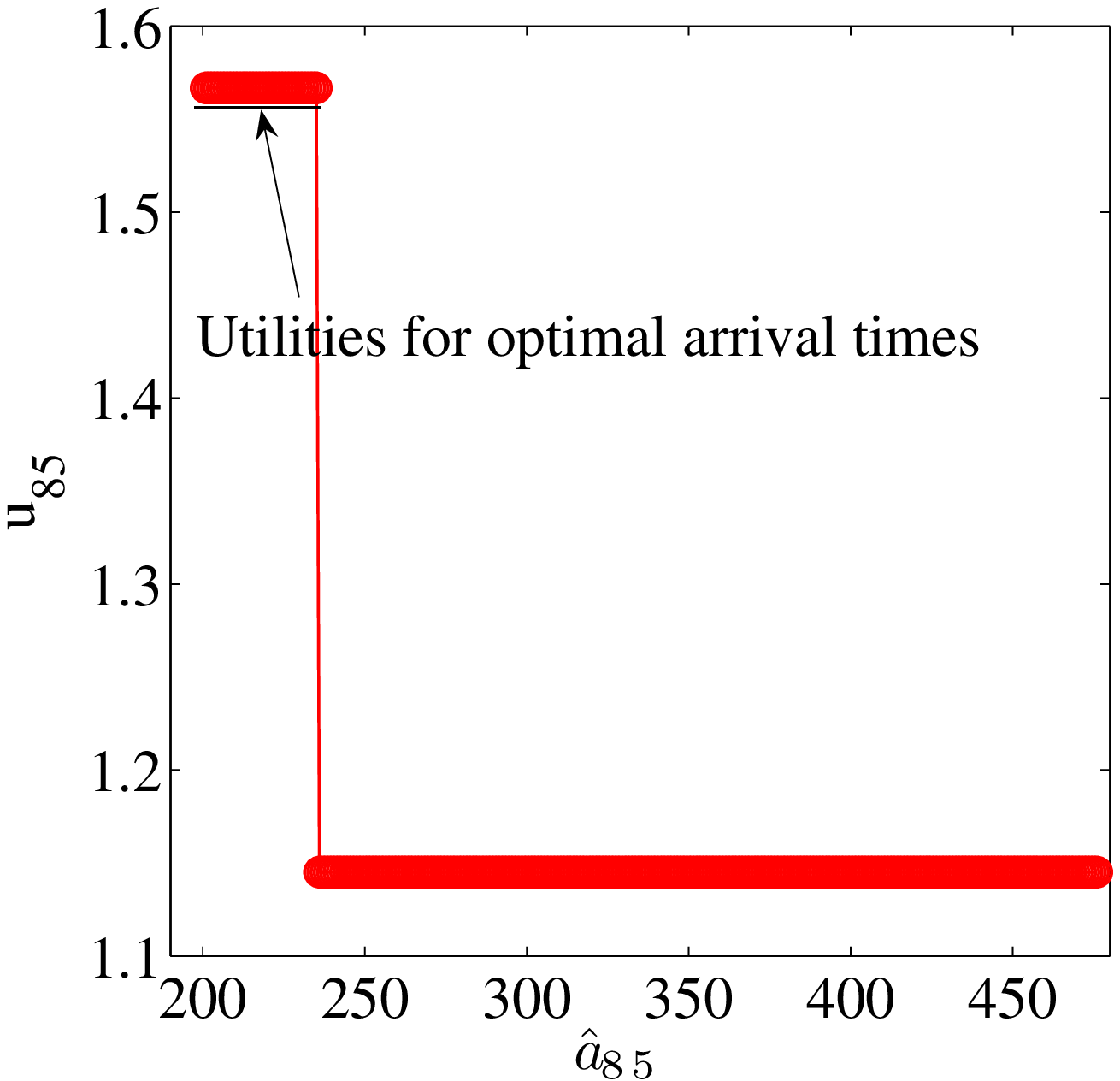}\hspace{-5pt}}
  }
  \caption{Time-truthfulness of \emph{OMG}.}
  \label{fig_time-truthful} 
  \vspace{-10pt}
\end{figure}

\underline{Truthfulness}:
We firstly verified the cost-truthfulness of \emph{OMZ} by randomly picking two users (ID=130 and ID=591) and allowing them to bid prices that are different from their true costs.
We illustrate the results in Fig. \ref{fig_cost-truthful}.
As we can see, user 130 achieves its optimal utility if it bids truthfully ($b_{130} = c_{130} = 3$) in Fig. \ref{fig_cost-truthful1} and user 591 achieves its optimal utility if it bids truthfully ($b_{591} = c_{591} = 8$) in Fig. \ref{fig_cost-truthful2}.
Then we further verified the time-truthfulness of \emph{OMG} by randomly picking two users (ID=17 and ID=85) and allowing them to report their arrival/departure times that are different from their true arrival/departure times.
We illustrate the results in Fig. \ref{fig_time-truthful}.
As shown in Fig. \ref{fig_time-truthful1} and Fig. \ref{fig_time-truthful2}, user 17 achieves its optimal utility if it reports its true arrival and departure times ($\hat{a}_{17}=a_{17}=50$, $\hat{d}_{17}=d_{123}=50$).
As shown in Fig. \ref{fig_time-truthful3}, user 85 achieves its optimal utility if it reports its true arrival time ($\hat{a}_{85}=a_{85}=201$). Note that reporting any departure time ($a_{85}\leq \hat{d}_{85} \leq d_{85}$) does not affect the utility of user 85.

\section{Related Work}
\label{sec:related work}
\subsection{Mechanism Design for Mobile Crowdsourced Sensing}
Reddy et al. \cite{reddy2010recruitment} developed recruitment frameworks to enable the crowdsourcer to identify well-suited participants for data collections.
However, they focused only on the user selection instead of the incentive mechanism design.
At present, there are only a handful of studies \cite{danezis2005much,lee2010sell,duan2012incentive,yang2012crowdsourcing,jaimes2012location} on incentive mechanism design for MCS applications in the offline scenario.
Generally, two system models were considered: the platform/crowdsourcer-centric model where the crowdsourcer provides a fixed reward to participating users, and the user-centric model where users can have their reserve prices for the sensing service.
For the crowdsourcer-centric model, incentive mechanisms were designed by using a Stackelberg game \cite{yang2012crowdsourcing,duan2012incentive}.
The Nash Equilibrium and Stackelberg Equilibrium were computed as the solution, where the costs of all users or their probability distribution was assumed to be known.
In contrast, the user-centric model can be applied to the scenario in which each user has a private cost only known to itself.
Danezis et al. \cite{danezis2005much} developed a sealed-bid second-price auction to estimate the users' value of sensing data with location privacy.
Lee and Hoh \cite{lee2010sell} designed and evaluated a reverse auction based dynamic price incentive mechanism, where users can sell their sensed data to a service provider with users' claimed bids.
Jaimes et al. \cite{jaimes2012location} proposed a recurrent reverse auction incentive mechanism with a greedy algorithm that selects a representative subset of the users according to their location given a fixed budget.
Yang et al. \cite{yang2012crowdsourcing} designed an auction-based incentive mechanism, and proved this mechanism was computationally efficient, individually rational, profitable, and truthful.
However, all of these studies failed to account for the online arrival of users.

To the best of our knowledge, there are few research work on the online mechanism design for crowdsourcing markets \cite{singer2013pricing,singla2013truthful,badanidiyuru2012learning}.
Singer et al. \cite{singer2013pricing} and Singla et al. \cite{singla2013truthful} presented pricing mechanisms for crowdsourcing markets based on the bidding model and the posted price model respectively.
However, they focused only on a simple additive utility function instead of the submodular one.
Badanidiyuru et al. \cite{badanidiyuru2012learning} considered pricing mechanisms for maximizing the submodular utility function under the bidding model.
However, they failed to consider the \emph{consumer sovereignty} and the \emph{time-truthfulness}.

\subsection{Online Auctions and Generalized Secretary Problems}
\emph{Online auction} is the essence of many networked markets, in which information about goods, agents, and outcomes is revealed one by one online in a random order, and the agents must make irrevocable decisions without knowing future information.
Our problem can be modeled as an online auction.
Combining optimal stopping theory with game theory provides us a powerful tool to model the actions of rational agents applying competing stopping rules in an online auction.

The theory of \emph{optimal stopping} is concerned with the problem of choosing a time to take a particular action, in order to maximize an expected reward or minimize an expected cost.
A classic problem of optimal stopping theory is the \emph{secretary problem}: designing an algorithm for hiring one secretary from a pool of $n$ applicants arriving online, to maximize the probability of hiring the best secretary.
Many variants of the classic secretary problem have been studied in the literature and here we review only those most relevant to this work.
An important generalization of the secretary problem is the \emph{multiple-choice secretary problem}, in which the interviewer is allowed to hire up to $k \geq 1$ applicants in order to maximize performance of the secretarial group based on their overlapping skills (or the joint utility of selected items in a more general setting).
Kleinberg \cite{kleinberg2005multiple} and Babaioff et al. \cite{babaioff2007knapsack} presented two constant competitive algorithms for a special multiple-choice secretary problem in which the objective function is a linear one, equaling to the sum of the individual values of the selected applicants.
Bateni et al. \cite{bateni2010submodular} presented a constant competitive algorithm for the \emph{submodular multiple-choice secretary problem} in which the objective function is submodular.
Another important generalization is the \emph{knapsack secretary problem}, in which each applicant also has a cost and the goal is to maximize performance of the secretarial group as along as the total cost of selected applicants does not exceed a given budget.
Babaioff et al. \cite{babaioff2007knapsack} and Bateni et al. \cite{bateni2010submodular} respectively presented constant competitive algorithms for the linear knapsack secretary problem in which the objective function is linear, and the \emph{submodular knapsack secretary problem} in which the objective function is submodular.

Our problem is similar to the \emph{submodular knapsack secretary problem} in form, but we need to consider two significant properties, the \emph{truthfulness} and the \emph{consumer sovereignty}.
Although some solutions (\cite{hajiaghayi2004adaptive,babaioff2008online,kleinberg2005multiple}) of online auctions provided good ideas of designing truthful mechanisms, they cannot be directly applied to the problem setting with submodular value function and budget constraint.
Moreover, none of these solutions considered the \emph{consumer sovereignty}.

\section{Conclusion}
\label{sec:conclusion}
In this paper, we have designed online incentive mechanisms used to motivate smartphone users to participate
in mobile crowdsourced sensing, which is a new sensing paradigm allowing us to efficiently collect data for numerous novel applications.
Compared with existing offline incentive mechanisms, we focus on a more real scenario where users arrive one by one online.
We have modeled the problem as an online auction in which the users submit their private types to the crowdsourcer over time, and the crowdsourcer aims to select a subset of users before a specified deadline for maximizing the total value of the services provided by selected users under a budget constraint.
We focus on the monotone submodular value function that can be applied in many real scenarios.
We have designed two online mechanisms under different assumptions: \emph{OMZ} can be applied to the zero arrival-departure interval case where the arrival time of each user equals to its departure time, and \emph{OMG} can be applied to the general case.
We have proved OMZ satisfies 1) computational efficiency, meaning that it can run in real time; 2) individual rationality, meaning that each participating user will have a non-negative utility; 3) budget feasibility, meaning that the crowdsourcer's budget constraint will not be violated; 4) cost-truthfulness, meaning that no user can improve its utility by reporting an untruthful cost; 5) consumer sovereignty, meaning that each participating user will have a chance to win the auction; and 6) constant competitiveness, meaning that it can perform close to the optimal solution in the offline scenario.
We have also proved OMG satisfies all the above properties as well as time-truthfulness, meaning that no user can improve its utility by reporting an untruthful arrival/departure time.

\bibliographystyle{IEEEtran}
\bibliography{myRef}

\section*{APPENDIX}
\subsection{Proof of Lemma \ref{lemma_valueFunction}}
Considering $V(\mathcal{S})=\sum_{j=1}^m{\min\{r_j,\sum_{i\in \mathcal{S}}v_{i,j}\}}$, for any $X\subseteq Y \subseteq \mathcal{U}$ and $x\in \mathcal{U}\backslash Y$ we have
\begin{align}
V(X\cup\{x\})-V(X)&=\sum_{j=1}^m{\min\{\max\{0,r_j-\sum_{i\in X}v_{i,j}\},v_{x,j}\}}\nonumber\\
&\geq \sum_{j=1}^m{\min\{\max\{0,r_j-\sum_{i\in Y}v_{i,j}\},v_{x,j}\}}\nonumber\\
&=V(Y\cup\{x\})-V(Y).\nonumber
\end{align}

Moreover, for any $X\subseteq \mathcal{U}$ and $x\in \mathcal{U}\backslash X$ we have $V(X\cup\{x\})-V(X)\geq 0$.
Therefore $V(\mathcal{S})$ is monotone submodular by Definition \ref{def:submodular}.

\subsection{Proof of Lemma \ref{lemma:average case}}
We consider two cases according to the total payment to the selected users at the last stage as follows.

\textbf{Case (a):} The total payment to the selected users at the last stage is at least $\alpha B$, $\alpha \in (0,1/2]$.
In this case, since each selected user has marginal density at least $\rho^*$, so we have that
\[V(Z_2')\geq \rho^* \alpha B = \frac{\alpha \rho_1' B}{\delta} = \frac{2\alpha V(Z_1')}{\delta}.\]

\textbf{Case (b):} The total payment to the selected users at the last stage is less than $\alpha B$, $\alpha \in (0,1/2]$.
There might be two reasons leading to that users from $Z_2$ are not selected in $Z_2'$.
The first case is when the marginal densities of some users from $Z_2$ are less than $\rho^*$, and thus we do not select them.
Even if these users are all in $Z_2$, their expected total payment is at most $B/2$.
Because of submodularity, the expected total loss due to these missed users is at most
\[\rho^*\cdot \frac{B}{2}= \frac{\rho_1' B}{2 \delta}=\frac{V(Z_1')}{\delta}.\]

The other case is when there is not enough budget to pay for some users whose marginal densities are not less than $\rho^*$.
It means that the payment for such a user (for example, user $i$) is larger than $(1/2-\alpha)B$, i.e., $V_i(\mathcal{S})/\rho^* > (1/2-\alpha)B$; otherwise adding this user to $Z_2'$ will not lead to that the total payment for $Z_2'$ exceeds the stage-budget $B/2$.
Because $\mathbb{E}[\rho_1'] \geq \rho$, we have that
\[\mathbb{E}[V_i(\mathcal{S})]\!>\! \mathbb{E}[\rho^*]\cdot (\frac{1}{2}-\alpha)B\!=\! \frac{(1-2\alpha)\mathbb{E}[\rho_1']B}{2\delta} \!\geq\!  \frac{(1-2\alpha)\rho B}{2\delta}.\]
Because the expected total payment to all users in $Z_2$ is at most $B/2$, there cannot be more than $(\frac{\delta}{1-2\alpha}-1)$ such users in $Z_2$.
Since the value of each user is at most $V(Z)/\omega$, the expected total loss due to these missed users is at most $(\frac{\delta}{1-2\alpha}-1)V(Z)/\omega$.
Therefore, we have that
\begin{align}
\mathbb{E}[V(Z_2')]&\geq \mathbb{E}[V(Z_2)]-(\frac{\delta}{1-2\alpha}-1)\frac{V(Z)}{\omega}-\frac{\mathbb{E}[V(Z_1')]}{\delta}\nonumber\\
&\geq \frac{V(Z)}{2}-(\frac{\delta}{1-2\alpha}-1)\frac{V(Z)}{\omega}-\frac{\mathbb{E}[V(Z_1')]}{\delta}\nonumber\\
&\geq [\frac{1}{2}-(\frac{\delta}{1-2\alpha}-1)\frac{1}{\omega}-\frac{1}{\delta}]\mathbb{E}[V(Z_1')].\nonumber
\end{align}

Considering both of case (a) and (b), the ratio of $\mathbb{E}[V(Z_2')]$ to $\mathbb{E}[V(Z_1')]$ will be at least $2\alpha/ \delta$, if it satisfies that
\begin{equation}
\label{eq_constraint}
\frac{1}{2}-(\frac{\delta}{1-2\alpha}-1)\frac{1}{\omega}-\frac{1}{\delta}=\frac{2\alpha}{\delta}.
\end{equation}

Therefore, for a specific parameter $\omega$, we can obtain the optimal ratio of $\mathbb{E}[V(Z_2')]$ to $\mathbb{E}[V(Z_1')]$ by solving the following optimization problem:
\begin{equation}
\textbf{Maximize } \frac{2\alpha}{\delta} \textbf{ subject to Eq. (\ref{eq_constraint}) and } \alpha \in (0,1/2].\nonumber
\end{equation}
\begin{figure}[!t]
  \centering{
  \subfigure[Optimal value of $\delta$]{
    \label{fig_delta}
    \includegraphics[width=2.5in]{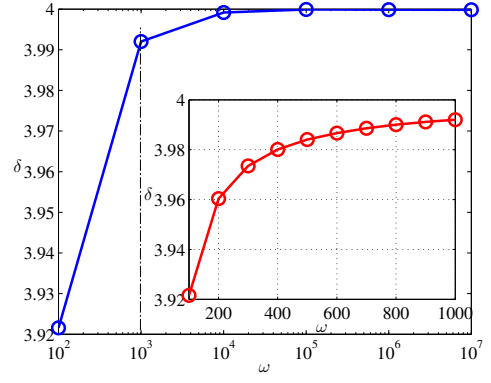}\hspace{-5pt}}
  \subfigure[Optimal ratio of $\mathbb{E}{[V(Z_2')]}$ to $\mathbb{E}{[V(Z_1')]}$]{
    \label{fig_ratio}
    \includegraphics[width=2.5in]{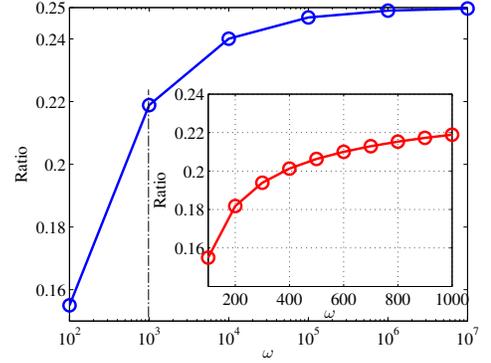}\hspace{-5pt}}
  }
  \caption{The optimal ratio of $\mathbb{E}[V(Z_2')]$ to $\mathbb{E}[V(Z_1')]$ by fixing proper $\delta$ with different values of $\omega$.}
  \label{fig_optimalRatio} 
  \vspace{-10pt}
\end{figure}

When $\omega$ is sufficiently large (at least 12), we can obtain a constant ratio of $\mathbb{E}[V(Z_2')]$ to $\mathbb{E}[V(Z_1')]$.
Fig. \ref{fig_optimalRatio} illustrates the optimal ratios that can be obtained by fixing proper $\delta$ when different values of $\omega$ are set.
As $\omega$ becomes larger, a higher ratio can be obtained.
More importantly, both the optimal ratio of $\mathbb{E}[V(Z_2')]$ to $\mathbb{E}[V(Z_1')]$ and the optimal value of $\delta$ converges fast as $\omega$ increases.
Specially, the optimal ratio approaches $1/4$ as $\omega \rightarrow \infty$ and $\delta \rightarrow 4$.

\subsection{Proof of Lemma \ref{lemma:value}}
Assume that the set of selected users computed with the budget $B'/2$ is $\mathcal{S}_l=\{1,2,\ldots,l\}$, and the set of selected users computed with the budget $B'$ is $\mathcal{S}_k=\{1,2,\ldots,k\}$.
Then, users can be sorted according to their increasing marginal densities as follows:
\begin{align}
&\frac{V_1(\mathcal{S}_0)}{b_1} \!\geq\! \frac{V_2(\mathcal{S}_1)}{b_2} \!\geq\! \cdots \!\geq\! \frac{V_l(\mathcal{S}_{l-1})}{b_l} \!\geq\! \frac{2V(\mathcal{S}_l)}{B'} \!\geq\! \frac{V_{l+1}(\mathcal{S}_l)}{b_{l+1}} \!\geq\!  \cdots \nonumber\\
&\geq \frac{V_k(\mathcal{S}_{k-1})}{b_k} \geq \frac{V(\mathcal{S}_k)}{B'} \geq \frac{V_{k+1}(\mathcal{S}_k)}{b_{k+1}} \geq \cdots \geq \frac{V_{|\mathcal{S}'|}(\mathcal{S}_{|\mathcal{S}'|-1})}{b_{|\mathcal{S}'|}}.\nonumber
\end{align}
Thus, it can be easily derived that: $V(\mathcal{S}_l)\geq V(\mathcal{S}_k)/2$.

\subsection{Proof of Lemma \ref{lemma:worst case}}
We consider two cases according to the total payment to the selected users at the last stage as follows.

\textbf{Case (a):} The total payment to the selected users at the last stage is at least $\alpha B$, $\alpha \in (0,1/2]$.
In this case, since each selected user has marginal density at least $\rho^*$, so we have that
\[V(Z_2')\geq \rho^* \alpha B = \frac{\alpha \rho_1' B}{\delta} = \frac{2\alpha V(Z_1')}{\delta}.\]

\textbf{Case (b):} The total payment to the selected users at the last stage is less than $\alpha B$, $\alpha \in (0,1/2]$.
There might be two reasons leading to that users from $Z_2$ are not selected in $Z_2'$.
The first case is when the marginal densities of some users from $Z_2$ are less than $\rho^*$, and thus we do not select them.
Even if these users are all in $Z_2$, their total payment is at most $B$.
Because of submodularity, the total loss due to these missed users is at most
\[\rho^*\cdot B= \frac{\rho_1' B}{\delta}=\frac{2V(Z_1')}{\delta}.\]

The other case is when there is not enough budget to pay for some users whose marginal densities are not less than $\rho^*$.
It means that the payment for such a user (for example, user $i$) is larger than $(1/2-\alpha)B$, i.e., $V_i(\mathcal{S})/\rho^* > (1/2-\alpha)B$; otherwise adding this user to $Z_2'$ will not lead to that the total payment for $Z_2'$ exceeds the stage-budget $B/2$.
Because $\rho_1'=2V(Z_1')/B \geq V(Z)/(4B)=\rho/4$, we have that
\[V_i(\mathcal{S})> \rho^* \cdot (\frac{1}{2}-\alpha)B= \frac{(1-2\alpha)\rho_1'B}{2\delta} \geq  \frac{(1-2\alpha)\rho B}{8\delta}.\]
Because the total payment to all users in $Z_2$ is at most $B$, there cannot be more than $(\frac{8\delta}{1-2\alpha}-1)$ such users in $Z_2$.
Since the value of each user is at most $V(Z)/\omega$, the total loss due to these missed users is at most $(\frac{8\delta}{1-2\alpha}-1)V(Z)/\omega$.
Therefore, we have that
\begin{align}
V(Z_2')&\geq V(Z_2)-(\frac{8\delta}{1-2\alpha}-1)\frac{V(Z)}{\omega}-\frac{2V(Z_1')}{\delta}\nonumber\\
&\geq \frac{V(Z)}{4}-(\frac{8\delta}{1-2\alpha}-1)\frac{V(Z)}{\omega}-\frac{2V(Z_1')}{\delta}\nonumber\\
&\geq [\frac{1}{4}-(\frac{8\delta}{1-2\alpha}-1)\frac{1}{\omega}-\frac{2}{\delta}]V(Z_1').\nonumber
\end{align}

Considering both of case (a) and (b), the ratio of $V(Z_2')$ to $V(Z_1')$ will be at least $2\alpha/ \delta$, if it satisfies that
\begin{equation}
\label{eq_constraint2}
\frac{1}{4}-(\frac{8\delta}{1-2\alpha}-1)\frac{1}{\omega}-\frac{2}{\delta}=\frac{2\alpha}{\delta}.
\end{equation}

Therefore, for a specific parameter $\omega$, we can obtain the optimal ratio of $V(Z_2')$ to $V(Z_1')$ by solving the following optimization problem:
\begin{equation}
\textbf{Maximize } \frac{2\alpha}{\delta} \textbf{ subject to Eq. (\ref{eq_constraint2}) and } \alpha \in (0,1/2].\nonumber
\end{equation}

When $\omega$ is sufficiently large, we can obtain a constant ratio of $V(Z_2')$ to $V(Z_1')$.
Specially, the optimal ratio approaches $1/12$ as $\omega \rightarrow \infty$ and $\delta \rightarrow 12$.

\subsection{Proof of Lemma \ref{lemma:truthfulness2}}
Consider a user $i$ with true type $\theta_i=(a_i,d_i,\Gamma_i,c_i)$, and reported strategy type $\hat{\theta_i}=(\hat{a_i},\hat{d_i},\Gamma_i,b_i)$.
According to the \emph{OMG} mechanism, at each time step $t\in [\hat{a_i},\hat{d_i}]$, there may be a new decision on whether to accept user $i$, and at what price.
For convenience, let $T'_t$, $B'_t$, $\rho^*_t$, and $\mathcal{S}_t$ denote the end time of the current stage, the residual budget, the current density threshold, and the set of selected users respectively at time step $t$ and before making decision on user $i$.
Let $\hat{\theta}_{-i}$ denote the strategy types of all users excluding $\hat{\theta_i}$.
We first prove the following two propositions.

\emph{Proposition (a): at some time step $t\in [\hat{a_i},\hat{d_i}]$, fix $\rho^*_t$ and $B'_t$, reporting the true cost is a dominant strategy for user $i$.}
It can be easily proved since the decision at time step $t$ is bid-independent.

\emph{Proposition (b): fix $b_i$ and $\hat{\theta}_{-i}$, reporting the true arrival/departure time is a dominant strategy for user $i$.}
It's because that user $i$ is always paid for a price equal to the maximum price attained during its reported arrival-departure interval.
Assume that user $i$ can obtain the maximum payment at time step $t\in [\hat{a_i},\hat{d_i}]$.
Then reporting an earlier arrival time or a later departure time than $t$ does not affect the payment of user $i$.
However, if user $i$ reports a later arrival time or an earlier departure time than $t$, then it will obtain a lower payment.

Based on the proposition (b), it is sufficient to prove this lemma by adding a third proposition:

\emph{Proposition (c): fix $[a_i,d_i]$ and $\hat{\theta}_{-i}$, reporting the true cost is a dominant strategy for user $i$.}
According to the proposition (a), reporting a false cost at time step $t$ cannot improve user $i$'s payment at the current time.
Thus, it only needs to prove that \emph{reporting a false cost at time step $t \in [a_i,d_i)$ still cannot improve user $i$'s payment at time step $t' (t<t'\leq d_i)$}.

Firstly, we consider the case when user $i$ is selected as a winner by reporting its true type at time step $t=a_i$.
In this case it satisfies $b_i\leq V_i(\mathcal{S}_t)/\rho^*_t \leq B'_t$, and it can obtain the payment $V_i(\mathcal{S}_t)/\rho^*_t$.
At time $t' (t<t'<T'_t)$, due to the submodularity of $V(\mathcal{S})$, we have $V_i(\mathcal{S}_{t'}) \geq V_i(\mathcal{S}_t)$.
Then user $i$ will obtain the payment $V_i(\mathcal{S}_{t'})/\rho^*_t$ if $b_i\leq V_i(\mathcal{S}_{t'})/\rho^*_t \leq B'_{t'}$, otherwise it will obtain the payment 0.
Thus, user $i$ cannot obtain higher payment at time step $t'$ than that at $t$.
It means that a user cannot improve its payment by reporting a false cost if its arrival-departure interval does not span more than one stage.

Next we consider user $i$'s payment at time step $t'(T'_t \leq t' \leq d_i)$ if its arrival-departure interval spans multiple stages.
According to the proposition (a), user $i$'s payment at time step $t'$ depends on $\rho^*_{t'}$ and $B'_{t'}$.
Because $\rho^*_{t'}$ is independent with $b_i$, it only needs to consider the effect of $b_i$ on $B'_{t'}$.
If user $i$ reports a false cost $b_i$ which still satisfies $b_i\leq V_i(\mathcal{S}_t)/\rho^*_t \leq B'_t$, then it is still accepted at price $V_i(\mathcal{S}_t)/\rho^*_t$ at time step $t$, and thus $B'_{t'}$ remains unchanged.
If user $i$ reports a larger bid $b_i>c_i$ and $b_i > V_i(\mathcal{S}_t)/\rho^*_t$, then it will not selected at time step $t$.
In this case, more budget will be allocated for other users, and $B'_{t'}$ will be diminished.
Therefore, user $i$ cannot obtain higher payment at time step $t'$.

Secondly, we consider the case when user $i$ is not selected as a winner by reporting its true type at time step $t=a_i$.
In this case it satisfies $c_i>V_i(\mathcal{S}_t)/\rho^*_t$, or $V_i(\mathcal{S}_t)/\rho^*_t > B'_t$.
In case $c_i>V_i(\mathcal{S}_t)/\rho^*_t$, if user $i$ reports a false cost $b_i$ which still satisfies $b_i>V_i(\mathcal{S}_t)/\rho^*_t$, then the outcome remains unchanged.
If user $i$ reports a lower bid $b_i<c_i$ and $b_i\leq V_i(\mathcal{S}_t)/\rho^*_t$, then it will be accepted at price $V_i(\mathcal{S}_t)/\rho^*_t$ at time step $t$.
In such case, however, its utility will be negative.
In addition, $B'_{t'}$ remains unchanged, and thus user $i$'s payment at time step $t'>t$ is not affected.
In case $V_i(\mathcal{S}_t)/\rho^*_t > B'_t$, reporting a false cost does not affect the outcome at time step $t$ or the residual budget $B'_{t'}$ at time step $t'>t$.
To sum up, reporting a false cost cannot improve user $i$'s payment at time step $t'>t$.
\end{document}